\documentclass[twocolumn,journal]{IEEEtran}
\usepackage{relsize}
\usepackage{cite}
\usepackage{amsmath,amssymb,amsfonts,comment,bm}

\usepackage[labelformat=simple]{subcaption}

\newtheorem{lemma}{Lemma}
\newtheorem{theorem}{Theorem}
\newtheorem{corollary}{Corollary}
\newtheorem{proposition}{Proposition}
\newtheorem{remark}{Remark}

\usepackage{algorithm}
\usepackage{algpseudocode}
\usepackage{enumerate}
\usepackage{graphicx}
\usepackage{textcomp}
\usepackage{xcolor}

\usepackage{hyperref}
\usepackage{balance}

\usepackage{tikz}
\usetikzlibrary{positioning}
\usetikzlibrary{arrows.meta}

\newcommand{\mb}{\mathbb}
\newcommand{\mf}{\mathbf}
\newcommand{\mc}{\mathcal}

\newcommand{\st}{\mathrm{s.t.}}
\newcommand{\sgn}{\mathrm{sgn}}

\newcommand{\mq}[2]{\left\lceil {#1} \right\rfloor_{#2} }
\newcommand{\tmq}[2]{\lceil {#1} \rfloor_{#2} }
\newcommand{\indep}{\perp \!\!\! \perp}
\newcommand{\half}{{\textstyle \frac12}}

\renewcommand{\mod}{\,{\rm mod}\,}

\def\BibTeX{{\rm B\kern-.05em{\sc i\kern-.025em b}\kern-.08em
		T\kern-.1667em\lower.7ex\hbox{E}\kern-.125emX}}

\DeclareCaptionLabelSeparator{periodspace}{.\quad}
\begin{document}
	
	\title{Modulo Quantization Coding for Primitive Relay and Diamond Channels with Correlated Noises
	}
	
	\author{Yuanxin Guo,~\IEEEmembership{Student Member,~IEEE,} Stark C. Draper,~\IEEEmembership{Senior Member,~IEEE,} and Wei Yu,~\IEEEmembership{Fellow,~IEEE}
		\thanks{
			Manuscript submitted to \emph{IEEE Transactions on Information Theory} on June 18, 2026.
			The authors are with the Edward S. Rogers Sr. Department of Electrical and Computer Engineering, University of Toronto, Toronto, ON M5S 3G4,
			Canada (e-mail: yuanxin.guo@mail.utoronto.ca; stark.draper@utoronto.ca; weiyu@ece.utoronto.ca).
			
			This work has been presented in part at the \emph{IEEE International Symposium on Information Theory (ISIT)}, Ann Arbor, USA, June 2025 \cite{guo2025modulo} and in part at the \emph{IEEE International Symposium on Information Theory (ISIT)}, Guangzhou, China, June 2026 \cite{guo2026modulo}.
			This work was supported by the Natural Science and Engineering Research Council (NSERC) of Canada via the Discovery Grant program.}
	}
	
	\maketitle
	
	\begin{abstract}
		
		This paper proposes modulo quantization (MQ) coding as a simple,
		structured, and low-complexity scheme for channels with primitive (i.e.,
		noiseless digital) relay links and \emph{correlated} Gaussian noises across
		terminals. The key component of MQ coding is the modulo quantization operation,
		which maps a real-valued symbol to its uniform-quantization index taken modulo
		a fixed integer. This operation allows effective exploitation of the common
		noise component shared across the terminals. 
		For the Gaussian primitive relay channel with perfectly correlated noises,
		where a relay has a finite-capacity link to the receiver, MQ coding can be
		shown to achieve the capacity of this channel.
		For the Gaussian primitive diamond channel with perfectly correlated noises,
		where two relays can forward information through finite-capacity links to a
		receiver that has no direct observation of the transmitted signal, 
		MQ coding yields novel achievability bounds that
		improve upon previously known bounds and coincide with the cut-set upper bound
		in certain signal-to-noise ratio (SNR) regimes. In scenarios with highly but
		non-perfectly correlated noises, MQ coding can approach the performance of
		compress-forward (CF) at significantly lower complexity, while surpassing
		decode-forward (DF) for the Gaussian primitive relay channel in certain SNR
		ranges. For the Gaussian primitive diamond channel with non-perfectly
		correlated noises, MQ can outperform both CF and DF at moderate SNR.

	\end{abstract}

	\begin{IEEEkeywords}
		primitive relay channels, diamond channel, modulo quantization, structure coding, correlated noise, achievable rates, capacity bounds.
	\end{IEEEkeywords}

	\section{Introduction} \label{sec:intro}
	
	\begin{figure*}[t]
		\centering
		\begin{subfigure}[]{0.25\linewidth}
			\centering
			\begin{minipage}[c][3.6cm][c]{\linewidth}
				\centering
				\begin{tikzpicture}[auto, thick, node distance=2cm,scale = 0.65]
					\node at (-1.5,0) (input1) {$X$};
					\node[scale= 0.8] at (-1.2,-0.65) (power) {$\mb E[X^2]\le P$};
					
					\node[draw, circle,scale = 0.4] (sum1) at (0.5,0) {\Huge +};
					\node at (2.5,0) (output1) {$Y$};
					
					\draw[-Latex] (-1.2,0) -- ++ (1.4,0);
					
					\node at (0.5,1.8) (zinput1) {$Z$};
					\draw[-Latex] (0.5,1.5) -- ++ (0,-1.23);

					\draw[-Latex] (0.77,0) -- ++ (1.45,0);
				\end{tikzpicture}
			\end{minipage}
			\caption{}
		\end{subfigure}
		\hspace{2mm}
		\begin{subfigure}[]{0.25\linewidth}
			\centering
			\begin{minipage}[c][3.6cm][c]{\linewidth}
				\centering
				\begin{tikzpicture}[auto, thick, node distance=2cm,scale = 0.65]
					
					\node at (-1.5,0) (input1) {$X$};
					\node[scale= 0.8] at (-1.2,-0.65) (power) {$\mb E[X^2]\le P$};
					
					\node[draw, circle,scale = 0.4] (sum1) at (0.5,0) {\Huge +};
					\node at (3,0) (output1) {$Y$};
					
					\draw[-Latex] (-1.2,0) -- ++ (1.4,0);
					
					\node at (0.5,1.8) (zinput1) {$Z$};
					\draw[-Latex] (0.5,1.5) -- ++ (0,-1.23);
					\node[draw, rectangle,scale = 0.8] (relay) at (3,1.8) {\bf Relay};
					
					\draw[-Latex] (zinput1) -- (relay);
					\draw[-{Latex[fill=none]},double] (3,1.5) -- ++ (0,-1.23);
					\draw[-,dotted] (zinput1) -- (relay);
					\node at (3.5,0.95) (zinput1) {$R_0$};
					\draw[-Latex] (0.77,0) -- ++ (1.95,0);
				\end{tikzpicture}
			\end{minipage}
			\caption{}
		\end{subfigure}
		\hspace{5mm}
		\begin{subfigure}[]{0.35\linewidth}
			\centering
			\begin{minipage}[c][3.6cm][c]{\linewidth}
				\centering
				\begin{tikzpicture}[auto, thick, node distance=1.5cm,scale =0.65]
					
					\node at (-1.4,0) (input) {$X$};
					\node[scale= 0.8] at (-2.2,-0.65) (power) {$\mb E[X^2]\le P$};
					
					\node[draw, circle,scale = 0.5] (sum1) at (0.4,1.5) {\Large +};
					\node[draw, circle,scale = 0.5] (sum2) at (0.4,-1.5) {\Large +};
					\node[scale = 0.95] at (0.4,2.8) (zinput1) {$Z$};
					\node[scale = 0.95] at (0.4,-2.9) (zinput2) {$Z$};
					\node at (2,1.5) (output1) {$Y_1$};
					\node at (2,-1.5) (output2) {$Y_2$};

					\node[draw, rectangle,scale = 0.8] (relay1) at (4.2,1.5) {\bf Relay 1};
					\node[draw, rectangle,scale = 0.8] (relay2) at (4.2,-1.5) {\bf Relay 2};
					\node at (6,0) (dec)  {$D$};

					
					\draw[-Latex] (input) -- node[above,scale = 0.8] {$\alpha~~$}   (sum1);
					\draw[-Latex] (input) -- node[below,scale = 0.8] {$\beta~~$}   (sum2);

					\draw[-Latex] (0.4,2.55) -- ++ (0.,-0.8);
					\draw[-Latex] (0.4,-2.6) -- ++ (0.,0.85);

					\draw[-Latex] (sum2) -- (output2);
					\draw[-Latex] (sum1) -- (output1);
					\draw[-{Latex[fill=none]}, double] (relay1) -- ++ (1.6,-1.2);
					\draw[-{Latex[fill=none]}, double] (relay2) -- ++ (1.6, 1.2);
					
					\draw[-Latex] (output1) -- (relay1);
					\draw[-Latex] (output2) -- (relay2);
					
					\node[scale = 0.9] at (5.5,1) {$R_1$};
					\node[scale = 0.9] at (5.5,-1) {$R_2$};
				\end{tikzpicture}
			\end{minipage}
			\caption{}
		\end{subfigure}
		\caption{(a) AWGN channel with power constraint $P$ and noise $Z \sim \mc N(0,1)$. (b) Gaussian primitive relay channel in which the relay observes the noise and has a noiseless digital link of capacity $R_0$ to the receiver. The capacity of this channel tends to $R_0$ as $P \to 0$. 
			(c) Gaussian primitive diamond channel with perfect noise correlation and primitive relay links of capacities $(R_1,R_2)$. The capacity of this channel tends to $\min(R_1,R_2)$ as $P \to 0$, assuming $\alpha \neq \beta$.} 
	\label{fig:intro}
\end{figure*}


Consider an additive white Gaussian noise (AWGN) channel with transmitter power constraint $P > 0$ and noise distribution $\mc N(0,1)$ as shown in Fig.~\ref{fig:intro}(a). 
The capacity of this channel as a function of the signal-to-noise ratio (SNR) is $\psi(P) = \frac12 \log(1+ P)$. 
Clearly, as $P \rightarrow 0$, the capacity of the AWGN channel goes to zero.

Now suppose that a relay node is added to the AWGN channel model. The relay observes the noise realization perfectly and further it has a primitive (i.e., noiseless digital) relay link of finite capacity $R_0 > 0$ to the receiver, as shown in Fig.~\ref{fig:intro}(b). 
This is an example of the so-called Gaussian primitive relay channel with perfectly correlated noise, for which the capacity is known to be $\psi(P) + R_0$, as established in \cite{cover2007capacity}.

This above result is surprising in two ways. First, every bit of information sent by the relay is worth one bit in the capacity of the overall channel, despite the fact that the relay does not observe the transmitted signal. Second, even with infinitesimal amount of transmit power, the capacity of this channel remains bounded away from zero, converging to $R_0$ as $P \to 0$. This stands in contrast to the AWGN channel without the relay for which the capacity $\psi(P) \to 0$ as $P\to 0$. 

This surprising behaviour of the primitive relay channel stems from the fact that the observations at the relay and the receiver share the exact same noise. 
So if one were able to take the difference between the observed signals at the relay and at the receiver, the common noise can be cancelled completely. 
However, perfect noise cancellation is possible only if the relay link has infinite capacity.
The fact that even with \emph{finite-capacity} relay link, it is possible to achieve an overall finite capacity in this channel using only infinitesimal transmit power is highly nontrivial.



This interesting phenomenon has been pointed out for the broadcast channel with conferencing decoders in the earlier work \cite{farsani2025capacity}. The phenomenon also occurs in the 
Gaussian primitive diamond channel with perfect noise correlation, as illustrated in Fig.~\ref{fig:intro}(c). In this diamond channel model, there are two relays each observing a distinct linear combination of the transmitted signal and a \textit{common} Gaussian noise. The two relays have distinct primitive relay links of finite capacities $R_1$ and $R_2$ to a receiver, which has no direct observation of the transmitted signal. As in the single-relay case, the presence of a common noise component introduces exploitable structure. While the capacity of this channel remains unknown at finite transmit power $P$, we show in this paper that $\min\{R_1,R_2\}$ is always achievable at any power level---it is in fact the asymptotic capacity of this channel as $P \rightarrow 0$.  Remarkably, the capacity of this channel remains bounded away from zero even with infinitesimal transmit power, i.e., this primitive diamond channel exhibits a behaviour analogous to that of the Gaussian primitive relay channel with perfectly correlated noises.

The above observations motivate us to investigate practical coding schemes that can explicitly exploit the shared noise structure in the Gaussian relay channel and the Gaussian diamond channel with primitive relay links and correlated noises. 
In this paper, we propose modulo quantization (MQ) coding as an explicit, low-complexity, and inherently structured scheme for such channels.  
The proposed scheme capitalizes on the noise correlation through a simple algebraic modulo operation, which enables it to achieve an overall positive rate bounded away from zero when the noise correlation is perfect, using only infinitesimal amount of transmit power and with finite-capacity relay links for both the primitive relay channel and the primitive diamond channel, 
thereby providing an intuitive explanation of the surprising phenomenon above. 

\subsection{Related Work}

The study of the relay channel dates back to the pioneering works by van der Meulen \cite{van1971three} and Cover and El Gamal \cite{cover1979capacity}. Within this broad class, the primitive relay channel---in which the relay has a noiseless digital link to the receiver---is introduced in \cite{zhang1988partial}, where a partial converse is established, and further analyzed in \cite{kim2007coding}, which derives upper and lower bounds on its capacity. More recently, for the Gaussian primitive relay channel, upper bounds tighter than the cut-set bound have been established in \cite{wu2017cut,wu2018capacity}. 
However, the capacity of the general primitive relay channel remains elusive. 

One subclass of the primitive relay channel for which the capacity is tractable is the \emph{deterministic} variant, where the relay observation is a deterministic function of the channel input and the receiver observation. For this subclass of channels, it has been shown in \cite{cover2007capacity, kim2008capacity} that the capacity is equal to the cut-set upper bound.

When specialized to the Gaussian primitive relay channel in which the relay and the receiver observe noise-corrupted versions of the transmitted signal, the determinism as studied in \cite{cover2007capacity, kim2008capacity} reduces to the condition that 
the Gaussian noises at the relay and the receiver are perfectly correlated (assuming that the observations at the relay and at the receiver are not scaled versions of each other). In this case, the cut-set bound is the capacity, and it is equal to the sum of the direct-channel capacity and the relay-link capacity \cite{cover2007capacity, kim2008capacity}. Interestingly, each bit in the relay link is worth one bit in the overall capacity, regardless of the transmit power.


This paper aims to provide an explanation of this interesting phenomenon by 
investigating low-complexity coding strategies capable of achieving the capacity 
of the Gaussian primitive relay channel with perfectly correlated noises. 
To this end, the classical compress-forward (CF) coding strategy has been studied for the Gaussian relay
channel with correlated noise in \cite{cui_correlated, lei_correlated}.
In the CF scheme \cite{cover1979capacity}, the relay employs Wyner–Ziv coding \cite{wyner1976rate} to compress its observation while treating the observation at the receiver as side information. The CF scheme is capacity achieving when the noise correlation is perfect
\cite{cover2007capacity, kim2008capacity}.
For the discrete channel, a hash-forward (HF) scheme has also been shown to achieve the capacity of the deterministic primitive relay channels. In the HF scheme \cite{kim2007coding,kim2008capacity}, the relay randomly hashes its observed sequence and forwards the hash index to the receiver. 
Moreover, a flash helping approach is proposed in \cite{bross2019additive,bross2020decoder} for the deterministic primitive relay channels with additive noise. 
In this scheme, the relay quantizes only a small fraction of its observed sequence without binning and forwards the resulting description to the receiver. As the blocklength tends to infinity and the fraction of described symbols vanishes, this scheme also achieves capacity.


While in principle, the aforementioned coding strategies can already be applied to the Gaussian primitive relay channel and they are all capable of achieving capacity in the perfect noise correlation case, in practice these existing schemes suffer from various drawbacks.
First, the HF scheme cannot be directly implemented when the relay observation takes values from a continuous alphabet, because in the HF scheme each possible relay observation must be assigned a hash index. Moreover, both CF and flash helping are computationally intensive due to the use of high-dimensional vector quantization. As they rely on high-dimensional quantization to exploit the shared noise structure, they achieve capacity only in the limit of large block-length for quantization.

This motivates the development of MQ, a scalar implementation of quantize-and-bin that replaces high-dimensional vector quantization and random binning with a simple modular arithmetic operation, in this paper. We show that the capacity of the Gaussian primitive relay channel with perfect noise correlation can already be achieved with a scalar quantizer as a key component.

The primitive diamond channel \cite{schein2001distributed} is a model in which two relays observe noise-corrupted versions of the transmitted signal and forward information to a central receiver through separate primitive relay links. Unlike the primitive relay channel, the destination has no direct channel observation in this model, resulting in a distinct distributed relaying structure. 

The Gaussian primitive diamond channel is introduced in \cite{sanderovich2008communication}.
Most previous studies focus on the case of independent noises, for which capacity bounds are developed in \cite{sanderovich2008communication} and later refined in \cite{wu2019new}. 
Noise correlation, however, can significantly impact the capacity of diamond channel, because it gives rise to the possibility of noise cancellation. (For example, for the related models of Gaussian broadcast and interference channel with noiseless feedback, perfect noise correlation can double the pre-log of the sum capacity \cite{wigger2008pre,gastpar2014feedback}.)
For the Gaussian primitive diamond channel,
recent works \cite{katz2024gaussian,katz2024gaussian2} investigate 
the correlated noises case and establish achievable rates using CF and decode-forward (DF) strategies.


As in the primitive relay setting, existing coding schemes for the Gaussian primitive diamond channel with correlated noises have similar limitations. In particular, DF treats the relays independently and does not exploit the correlation between their observations. Meanwhile, CF retains the computational and structural limitations discussed earlier. This paper applies the MQ framework to this channel, leading to low-complexity coding schemes and improved achievability bounds in both the cases of perfect and near-perfect noise correlations.

The MQ strategy proposed in this paper relies on modulo arithmetic.
Similar ideas have appeared in prior studies of channels with related forms of relaying, for example, in \cite[Section 5.2]{lapidoth2021listsize} and \cite[Section V-C]{lapidoth2024communication}. On a conceptual level, the related modulo-based schemes in \cite{lapidoth2021listsize,lapidoth2024communication} serve purposes quite different from the MQ framework developed in this paper. The modulo-based construction in \cite{lapidoth2021listsize} is used to establish the list-size capacity, while in \cite{lapidoth2024communication}, a similar construction is used for feedback capacity. In contrast, the focus of this paper is on the classical Shannon capacity. The capacity-achieving argument in \cite{lapidoth2021listsize, lapidoth2024communication} relies on combining code construction with time-sharing, whereas the argument in this paper is based on a nested code construction. Finally, the modulo operation is introduced as an ad-hoc proof technique in \cite{lapidoth2021listsize, lapidoth2024communication}, while this paper formalizes MQ as a reusable building block applicable to a broader family of relay networks.

\subsection{Main Contributions}

In this paper, we propose the framework of MQ coding as a fundamental building block for coding over Gaussian channels with primitive relay links and correlated noises. Modulo quantization is an operation that maps a real input to the remainder of its uniform scalar quantization index modulo some positive integer, effectively ``wrapping'' the quantized real line into bins.


This paper shows that despite being only a \emph{scalar} binning operation, 
MQ coding, when used in conjunction with a Gaussian codebook for the AWGN channel, is capable of achieving the capacity of the Gaussian primitive relay channel with perfect noise correlation. 
As a first step, we construct a scalar MQ scheme capable of transmit up to $\lfloor 2^{R_0} \rfloor$ messages without error in a single channel use, irrespective of the (non-zero) transmit power. To achieve the capacity of $\psi(P) + R_0$, we propose a nested coding strategy by superposing the scalar MQ code component-wise on top of a sequence of Gaussian codes supported on a scaled integer lattice. The overall decoding complexity remains comparable to that of a standard Gaussian code, because the scalar MQ component is trivial to decode, and the MQ and the Gaussian codebook can be decoded in a successive manner.


Next, we derive novel achievability bounds based on MQ coding for the Gaussian primitive diamond channel with perfect noise correlation.
A slightly different scalar MQ scheme is constructed to accommodate this channel model. With this scheme, up to $\lfloor 2^{\min(R_1,R_2)} \rfloor$ messages can be transmitted without error with one channel use at any (non-zero) transmit power level. 
By time-sharing this scalar MQ scheme with DF, we obtain a new lower bound on the channel capacity. In both the high-SNR regime and certain low-SNR regime when the channel is asymmetric (with $R_1 \neq R_2$), this achievable rate matches the cut-set upper bound. To the best of authors' knowledge, this low-SNR capacity result is new to the literature. 

This paper also analyzes a specific anti-symmetric Gaussian diamond channel in detail, where the channel gains are equal in magnitude but opposite in sign and the primitive relay links have equal capacities. We show that the hybrid MQ-DF scheme outperforms all previously known relaying strategies across all SNRs. By comparing these results with CF, we also provide a corrected example demonstrating the suboptimality of CF with Gaussian input and Gaussian quantization for the oblivious diamond relay channel (i.e., when the relays operate without the knowledge of the codebook used by the transmitter).

Finally, this paper considers the setting of non-perfectly correlated noises
for both the Gaussian primitive relay channel and the diamond channel. 
We adapt the previously developed scalar MQ scheme to the scenario
with small perturbation in the noise correlation and compare its performance to
known strategies. In some specific SNR regimes, the achievable rate of MQ
coding approaches that of CF for the Gaussian primitive relay channel while
significantly outperforming DF. A
similar observation is made for the Gaussian primitive diamond channel: MQ
coding outperforms both CF and DF in some specific SNR regimes. These results
highlight the practicality of MQ coding as a low-complexity yet competitive
strategy for Gaussian channels with primitive relay links, when the noise correlation is close to but not perfect.

We remark that while the perfectly correlated noise setting is primarily of theoretical interest, the near-perfect case can arise in practice---for example, when multiple receivers are affected by a dominant common source of interference. As the strength of the interference increases, the effective noise correlation also increases, making the near-perfect noise correlation scenarios potentially relevant.

\subsection{Paper Organization and Notation}

The rest of the paper is organized as follows. Section II introduces modulo quantization and reviews its algebraic properties. Section III and IV discuss the Gaussian primitive relay channel and the Gaussian primitive diamond channel respectively, focusing exclusively on the case of perfect noise correlation. Section V extends the discussion to the near-perfect correlation case for the two aforementioned channels. Section VI concludes with a discussion of broader implications and future directions.

In this paper, we use uppercase letters (e.g. $X$, $Y$) to denote random variables and lowercase letters (e.g. $x$, $y$) to denote their realizations. The sequence of random variables $(X_1, \cdots ,X_n)$ is denoted $X^n$. We sometimes use uppercase boldface letters (e.g. $\bm X$) to denote random vectors. The all-one vector and the all-zero vector (of appropriate length) is denoted $\bf 1$ and $\bf 0$. The set of real numbers, positive real numbers, integers, positive integers are denoted $\mb R$, $\mb R_+$, $\mb Z$, and $\mb Z_+$ respectively. For positive integer $a \in \mb R_+$, we use $[a]$ to denote the set $\{0,1,\cdots, a-1\}$, whose cardinality is exactly $a$. 
We use $\indep$ to denote statistical independence. The multivariate Gaussian distribution with mean $\bm \mu$ and covariance matrix $\bm \Sigma$ is denoted $\mc N(\bm \mu, \bm \Sigma)$. We use the shorthand $\psi(\cdot)$ for the AWGN capacity function defined as $\psi(x) = \frac12 \log (1+x)$. All logarithms are base $2$.

\section{Modulo Quantization} \label{sec:prelim}

In this section, we introduce the modulo quantization operation and its basic properties. This operation is used to exploit the shared noise structure arising from perfect noise correlation; it lies at the core of the MQ coding framework developed in subsequent sections.

Fix positive real number $\Delta > 0$ and positive integer $k \in \mb Z_+$. The $(\Delta,k)$-\textit{modulo quantization} is the function $\mq{\cdot}{\Delta,k}: \mb R \to [k]$ defined as \begin{equation}\label{eq:mod_bin_idx}
	\mq{x}{\Delta,k} =  \left\lfloor \frac{x}{\Delta} \right\rfloor \mod k.
\end{equation}
Here, the floor function $\lfloor\cdot\rfloor: \mb R \to \mb Z$ returns the largest integer less than or equal to its argument, and the modulo-$k$ operation $(\cdot) \mod k: \mb Z \to [k]$ returns the remainder of the argument divided by $k$. When no ambiguity arises, we also use the term $(\Delta,k)$-\textit{modulo quantization} to refer to the value $\mq{x}{\Delta,k}$.

The modulo quantization operation can be interpreted as follows: $\lfloor \frac{x}{\Delta} \rfloor$ is the output of an infinite-support scalar quantizer with step-size $\Delta$ with input $x$, and $\mq{x}{\Delta,k}$ can be seen as the mod-$k$ ``wrapped'' version of the quantized output. 
Alternatively, $\mq{x}{\Delta,k}$ can be viewed as an instance of one-dimensional deterministic binning, where points are put into bins according to the colors of the intervals they belong to, as illustrated in Fig.~\ref{fig:mod_quan}.

A few properties of the floor function and the modulo operation are listed below. For all $x \in \mb R$, $a,b \in \mb Z$, and $k \in \mb Z_+$, the following relations hold:
\begin{align}
	\lfloor x + a\rfloor &= \lfloor x \rfloor + a, \label{eq:prop1} \\
	(a \mod k) \mod k &= a \mod k,\label{eq:prop2}\\
	(a + b) \mod k &= [(a \mod k) + b] \mod k  \\ &= [(a \mod k) + (b \mod k)] \mod k.\label{eq:prop3}
\end{align}
The following lemma is a direct consequence of the above.
\begin{lemma}\label{lem:mq}
	Fix any $\Delta > 0$ and $k \in \mb Z_+$, the following identities hold for all $z \in \mb R$:
	\begin{gather}
		\label{eq:colour_lemma}
		(\mq{z + a\Delta}{\Delta, k} -  \mq{z }{\Delta, k}) \mod k  = a \mod k,~~\forall\,a \in \mb Z,\\
		\label{eq:colour_lemma_2}
		\mq{z + b\Delta}{\Delta, k} =  (\mq{z }{\Delta, k} + b) \mod k, ~~\forall\,b \in \mb Z,\\
		\label{eq:lattice_colour_lemma}
		\mq{z + ck\Delta}{\Delta, k} = \mq{z}{\Delta, k},~~\forall\,c \in \mb Z,
	\end{gather}
	where $[k] = \{0,1,\cdots,k-1\}$.
\end{lemma}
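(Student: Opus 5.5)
The plan is to reduce every identity to the three listed properties \eqref{eq:prop1}--\eqref{eq:prop3} by first computing the floor of the shifted argument. For any $z \in \mb R$ and $a \in \mb Z$ we have $(z + a\Delta)/\Delta = z/\Delta + a$. Property \eqref{eq:prop1} with $x = z/\Delta$ then gives the key relation
\begin{equation*}
\left\lfloor \frac{z + a\Delta}{\Delta} \right\rfloor = \left\lfloor \frac{z}{\Delta} \right\rfloor + a .
\end{equation*}
All three identities follow from this by reducing modulo $k$.

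We prove \eqref{eq:colour_lemma_2} first, since the other two follow from it. Put $q = \lfloor z/\Delta \rfloor \in \mb Z$. By the definition \eqref{eq:mod_bin_idx}, $\mq{z+b\Delta}{\Delta,k} = (q + b) \mod k$. Applying the first equality in \eqref{eq:prop3} with the integers $q$ and $b$ turns this into $[(q \mod k) + b] \mod k$, which is exactly $(\mq{z}{\Delta,k} + b) \mod k$.

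Identity \eqref{eq:lattice_colour_lemma} is the special case $b = ck$ of \eqref{eq:colour_lemma_2}. Since $ck \mod k = 0$, the second form of \eqref{eq:prop3} gives $(\mq{z}{\Delta,k} + ck) \mod k = \mq{z}{\Delta,k} \mod k$. By \eqref{eq:prop2}, this equals $\mq{z}{\Delta,k}$, because that value already lies in $[k]$.

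For \eqref{eq:colour_lemma}, write $m = \mq{z}{\Delta,k}$. By \eqref{eq:colour_lemma_2}, $\mq{z+a\Delta}{\Delta,k} = (m + a) \mod k$. We must show $[((m+a) \mod k) - m] \mod k = a \mod k$. Applying \eqref{eq:prop3} with the integers $m + a$ and $-m$ gives $[((m+a) \mod k) + (-m)] \mod k = (m + a - m) \mod k = a \mod k$.

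No step is a real obstacle. The only point needing care is that the modulo operation is defined on all of $\mb Z$, including negative integers (for example $-m$ or a negative $a$), with values in $[k]$. Property \eqref{eq:prop3} must therefore be read as valid for arbitrary integers of either sign, which it is as stated.
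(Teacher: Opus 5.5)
Your proof is correct and takes the same route as the paper, which states only that the lemma is a direct consequence of the floor-shift property \eqref{eq:prop1} and the modular identities \eqref{eq:prop2}--\eqref{eq:prop3}; you have written out that reduction in full. Proving \eqref{eq:colour_lemma_2} first and then obtaining \eqref{eq:lattice_colour_lemma} and \eqref{eq:colour_lemma} from it is a clean way to organize the argument.
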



Lemma~\ref{lem:mq} shows that if two inputs differ by an integer multiple of the quantization step $\Delta$, then their MQ outputs preserve this offset modulo $k$, effectively retaining their relative shift while eliminating any component common to both. This difference-preserving property is what makes MQ useful when multiple terminals observe signals having a common noise component and send only rate-limited descriptions to a destination.


\begin{figure}
	\centering
	\includegraphics[width=0.9\linewidth]{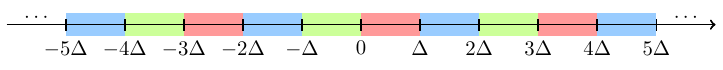}
	\caption{Illustration of the modulo quantizer $\mq{\cdot}{\Delta,3}$. Each color corresponds to one modulo quantization output.}
	\label{fig:mod_quan}
\end{figure} 

\section{Gaussian Primitive Relay Channel with Perfect Noise Correlation}
\label{sec:prim_relay}

\subsection{Channel Model} 
\label{sec:prim_relay_model}

The Gaussian primitive relay channel 
is depicted in Fig.~\ref{fig:prim_relay}. Given input $X$, the receiver and the relay observe $Y = X + Z$ and $Y_{0} =\alpha X + Z_{0}$ respectively, where $Z$ and $Z_{0}$ are the corresponding noises, and $\alpha \in \mb R$ is the relay's channel gain.
In this section, we assume that the noise variables are perfectly (and positively) correlated standard Gaussian variables, i.e., 
\begin{equation}
	\label{eq:noise_distr}
	\begin{bmatrix}
		Z \\ Z_0
	\end{bmatrix} \sim \mc N\left( \mf 0, \begin{bmatrix}
		1 & 1 \\ 1 & 1
	\end{bmatrix}\right).
\end{equation}
In effect, we have $Z_0=Z$.
The relay has a noiseless link of capacity $R_0$ bits per channel use to the receiver. Note that when $\alpha = 0$, this model reduces to the decoder-assisted Gaussian channel discussed in \cite{bross2020decoder}.

A $(2^{nR},n)$-code for this channel is specified by the triple $(f_{\sf enc}^{(n)}, f_{\sf relay}^{(n)},f_{\sf dec}^{(n)})$. The input signal is given by $X^n = f_{\sf enc}^{(n)}(M)$, where $M$ is a random message uniformly distributed over the message set $[2^{nR}]$, and $f_{\sf enc}^{(n)}:  [2^{nR}] \to \mb R^n$ is the encoding function. The encoding function maps a message $m$ to a codeword $x^n(m) = (x_1(m), \cdots, x_n(m))$. An average per-codeword power constraint is imposed such that $\frac1n\|x^n(m)\|^2 \le P$ for every $m \in [2^{nR}]$. Upon observing $Y_0^n$, the relay forwards a $nR_0$-bit message $V = f_{\sf relay}^{(n)}(Y_0^n)$ to the receiver through the noiseless link, where $f_{\sf relay}^{(n)}: \mb R^n \to [2^{nR_0}]$ is the relaying function. Upon observing the channel output $Y^n$ and receiving the description $V$ from the relay, the receiver decodes the transmitted message to be $\hat M = f^{(n)}_{\sf dec}(Y^n, V)$, where $f^{(n)}_{\sf dec}: \mb R^n \times [2^{nR_0}] \to [2^{nR}]$ is the decoding function. 

A rate $R$ is achievable if for every $\epsilon > 0$, there exists a sequence of $(2^{nR},n)$-codes such that for sufficiently large $n$, the error probability $P_e^{(n)} = \Pr\{\hat M \neq M\}$ is upper bounded by $\epsilon$. The capacity of this primitive relay channel $C_{\triangleleft}(R_0)$ is the supremum of all achievable rates.

\begin{figure}[t]
	\centering
	\begin{tikzpicture}[auto, thick, node distance=1.5cm,scale = 0.75]
		
		\node at (-4.4,0) (msg1) {$M$};
		\node[draw, rectangle,scale = 0.8] (enc) at (-2.9,0) {\bf Enc};
		\node at (-1.4,0) (input1) {$X^n$};
		
		\node[draw, circle,scale = 0.5] (sum1) at (0.45,0) {\Large +};
		\node[draw, rectangle,scale = 0.8] (relay) at (4.2,1.75) {\bf Relay};
		\node at (2.2,0) (output1) {$Y^n$};
		\node[draw, rectangle,scale = 0.8] at (4.2,0) (dec) {\bf Dec};
		\node at (5.7,0) (msg2) {$\hat M$};
		
		\draw[-Latex] (input1) -- (sum1);

		\node[scale = 0.95] at (0.45,2.75) (zinput0) {$Z_0^n$};
		\node[scale = 0.95] at (0.45,-1.05) (zinput1) {$Z^n$};
		\draw[-Latex] (0.45,2.55) -- ++ (0.,-0.55);
		\draw[-Latex] (0.45,-0.8) -- ++ (0.,0.55);
		
		\node[draw, circle,scale = 0.5] (sum2) at (0.45,1.75) {\Large +};
		\draw[-Latex] (input1) -- node[above,scale = 0.8] {$\alpha~$} ++ (sum2);

		\node at (2.2,1.75) (output2) {$Y_0^n$};
		\draw[-Latex] (sum2) -- (output2);
		\draw[-Latex] (sum1) -- (output1);
		\draw[-{Latex[fill=none]}, double] (relay) --  (dec);
		\draw[-Latex] (msg1) -- (enc);
		\draw[-Latex] (enc) -- (input1);
		\draw[-Latex] (output1) -- (dec);
		\draw[-Latex] (dec) -- (msg2);
		\draw[-Latex] (output2) -- (relay);
		\node[scale = 0.9] at (4.7,0.9) {$nR_0$};

	\end{tikzpicture}
	\caption{Gaussian primitive relay channel with correlated noises at the receiver and the relay.}
	\label{fig:prim_relay}
\end{figure}


In the following discussion, we exclude the special case where the relay gain $\alpha$ is equal to $1$. In this degenerate setting, the relay and the receiver have identical observations since $Z_0 = Z$, and relaying would provide no benefit. Thus the channel reduces to an AWGN channel without any relay. 

For the non-degenerate case where $\alpha \neq 1$, the capacity of this channel has been fully characterized. Observe that the relay observation $Y_0$ is a deterministic function of the transmitted signal $X$ and the receiver observation $Y$, i.e.,
\begin{equation}\label{eq:functional}
	Y_0 = X + \alpha (Y-X).
\end{equation}
For this class of \textit{deterministic} primitive relay channels, the  capacity has been shown in \cite{kim2008capacity} to be
\begin{equation}\label{eq:cap_prim_relay}
	C = \sup_{p(x)}\min \{I(X;Y) + R_0, I(X;Y,Y_0)\}.
\end{equation}
By specializing \eqref{eq:cap_prim_relay} to the Gaussian primitive relay channel, we have the following theorem.

\begin{theorem}\label{thm:capacity_prim_relay}
	Assume $\alpha \neq 1$. The capacity of the Gaussian primitive relay channel with perfectly correlated noises is 
	\begin{equation} \label{eq:capacity}
		C_{\triangleleft}(R_0) = \psi(P)  + R_0,
	\end{equation}
	where $\psi(x) = \frac12 \log(1+x)$ is the AWGN capacity function.
\end{theorem}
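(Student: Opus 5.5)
Starting from the capacity formula \eqref{eq:cap_prim_relay} of \cite{kim2008capacity} for deterministic primitive relay channels, we evaluate the supremum over input distributions $p(x)$ with $\mb E[X^2]\le P$. Two bounds suffice, one in each direction.

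\textbf{Upper bound.} For any admissible $p(x)$ we have $\min\{I(X;Y)+R_0,\,I(X;Y,Y_0)\} \le I(X;Y)+R_0$. Since $Y = X+Z$ with $Z\sim\mc N(0,1)$ independent of $X$, the maximum-entropy argument gives $I(X;Y)\le \psi(P)$. Hence $C_{\triangleleft}(R_0)\le \psi(P)+R_0$.

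\textbf{Lower bound.} We choose $X\sim\mc N(0,P)$, so $I(X;Y)=\psi(P)$. It remains to show that the second term of the minimum never binds, i.e.\ $I(X;Y,Y_0)\ge \psi(P)+R_0$. Because $\alpha\neq 1$, the pair $(Y,Y_0)$ is in one-to-one correspondence with the pair $(Y,\,Y_0-Y)=(Y,(\alpha-1)X)$. Consequently $X$ is a deterministic function of $(Y,Y_0)$. For a continuous $X$ this gives $I(X;Y,Y_0)=h(X)-h(X\mid Y,Y_0)=h(X)-(-\infty)=\infty$, so the minimum equals $\psi(P)+R_0$.

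\textbf{Main obstacle.} The only delicate step is the rigorous justification of $I(X;Y,Y_0)=\infty$, i.e.\ that mutual information between a continuous random variable and an invertible-noiseless observation of it is infinite. The plan is to avoid differential entropies of degenerate distributions and argue by quantization instead. For any finite quantizer $Q$, the data-processing inequality together with $Q(X)$ being a function of $(Y,Y_0)$ gives
\begin{equation*}
I(X;Y,Y_0)\ \ge\ I(Q(X);Y,Y_0)\ =\ H(Q(X)).
\end{equation*}
For a Gaussian $X$, $H(Q(X))$ can be made arbitrarily large by refining $Q$. An alternative to quantization is to add auxiliary independent noise of vanishing variance $\sigma^2$ to $Y_0$ and let $\sigma^2\to 0$.

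Either route shows the minimum in \eqref{eq:cap_prim_relay} is achieved at $\psi(P)+R_0$, matching the upper bound, which proves \eqref{eq:capacity}.
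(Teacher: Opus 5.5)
Your proposal is correct and follows the paper's route. The paper proves the theorem simply by specializing the deterministic-primitive-relay capacity formula \eqref{eq:cap_prim_relay} of \cite{kim2008capacity}, exactly as you do. Your quantization argument for $I(X;Y,Y_0)=\infty$ supplies a detail the paper leaves implicit; the paper makes the analogous claim $I(X;Y_1,Y_2)=\infty$ for the diamond channel without justification.
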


The capacity of this channel can be achieved by CF \cite{cover1979capacity}. In the CF scheme, Wyner-Ziv coding \cite{wyner1976rate} is used to compress the relay observation, while accounting for the receiver observation (as side information for source coding), and the compressed observation is then sent to the receiver. The receiver recovers the transmitted message based on its own observation and the compressed version of the relay's observation. 

An alternative coding technique called \textit{flash helping} is proposed in \cite{bross2019additive,bross2020decoder} for a special case of this channel where $\alpha = 0$. In this special case, the relay observes the Gaussian noise of the direct channel. 
In the scheme of \cite{bross2019additive,bross2020decoder}, the relay compresses a vanishing fraction of its observation into $nR_0$ bits in a lossy fashion. It then forwards that description to the receiver. Through time-sharing with a capacity-achieving Gaussian code, the Gaussian primitive relay channel capacity $C_\triangleleft(R_0)$ is achieved. 

Despite that both schemes achieve capacity, these two schemes suffer from notable limitations in terms of complexity and explicitness. For the CF scheme, the Wyner-Ziv coding step involves typicality-based vector quantization and random binning. This results in high encoding and decoding complexity. For flash helping, the high-resolution quantization involves searching for a closest vector to a given vector among exponentially many possibilities. This is also computationally expensive. Furthermore, both CF and flash helping assume the use of vector quantization codebooks that achieve the rate-distortion limit. This makes practical implementation challenging. These drawbacks motivate the development of a low-complexity and fully explicit coding scheme based on \emph{scalar} quantization alone.

In the rest of this section, we first establish a coding scheme that relies on the structural properties of MQ to harness the noise correlation. This scheme achieves the relay-link capacity $R_0$. We then demonstrate that, when combined with a capacity-achieving scheme for the AWGN channel, MQ achieves the channel capacity $\psi(P) + R_0$. Finally, we extend the MQ framework to the scenario where multiple relays share a common noise component and show that it remains capacity-achieving.



\subsection{MQ Coding}
\label{sec:prim_MQ}

We now present a simple, explicit, low-complexity scalar scheme called MQ coding which achieves \textit{zero error} for transmitting $\lfloor 2^{R_0} \rfloor$ messages in a single channel use using an \textit{arbitrarily small} (but non-zero) transmit power.

Fix power $P > 0$. Let the message set size be $K = \lfloor 2^{R_0} \rfloor \in \mb Z_+$, and set $\Delta = |\alpha - 1|\sqrt{P}/K$. To encode a message $M \in [K]$, the transmit signal $X$ is set to be 
\begin{equation}
	\label{eq:prim_MQ_enc}
	X=x(M) = f_{\sf enc}(M) = M\Delta/(\alpha - 1).
\end{equation}
Note that the power constraint is satisfied, as for all $m \in [K]$,
\begin{equation}
	\label{eq:prim_MQ_pow_constr}
	|x(m)|^2 = \left| \frac{m}{\alpha - 1} \cdot \frac{|\alpha-1|\sqrt{P}}{K} \right|^2  \le P.
\end{equation}
The relay forwards the $(\Delta,K)$-modulo quantization of its observation $Y_0$:
\begin{equation}
	\label{eq:MQ_relay}
	V = f_{\sf relay}(Y_0) = \mq{Y_0}{\Delta,K}.
\end{equation}
Upon noting that the range of the $(\Delta,K)$-MQ operation has cardinality $K \le 2^{R_0}$, forwarding $V$ to the destination does not exceed the capacity of the relay link. The receiver computes the following as the decoded message: 
\begin{equation}
	\label{eq:prim_MQ_dec}
	\hat M = f_{\sf dec}(Y,V) = (V - \mq{Y}{\Delta,K}) \mod K.
\end{equation}
We now show that the decoded message $\hat M$ is always equal to the actual message $M$. Observe that
\begin{align}
	Y_0 &=  \alpha X + Z \\ &= (X + Z) + ( \alpha - 1 )X  \\ &= Y + ( \alpha - 1) \frac{M\Delta}{\alpha - 1}\\
	&= Y + M \Delta.
\end{align}
Then by Lemma \ref{lem:mq}, we obtain that 
\begin{align}
	\hat M &= (V -\mq{Y}{\Delta,K} ) \mod K \label{eq:prim_ach_4}\\&= (\mq{Y_0}{\Delta,K} - \mq{Y}{\Delta,K}) \mod K \\
	&= (\mq{Y + M\Delta}{\Delta,K} - \mq{Y}{\Delta,K})\mod K  \\
	&= M, \label{eq:prim_ach_7}
\end{align}
where \eqref{eq:prim_ach_7} is due to     \eqref{eq:colour_lemma}.

\begin{figure}[t]
	\centering
	\includegraphics[width=0.99\linewidth]{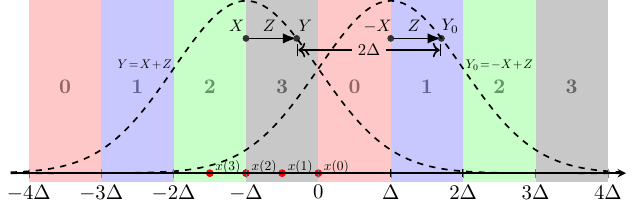}
	\caption{Illustration of the MQ coding scheme. In this example, the channel parameters are $\alpha=-1$ and $R_0 = 2$, and the message set size is $K = \lfloor 2^{R_0} \rfloor = 4$. Message $M = 2$ is transmitted. The transmitted signal is $X=x(2)=\Delta$. The observed signals $Y_0$ and $Y$ always differ by $2\Delta$. The relaying strategy is to forward $V = \mq{Y_0}{\Delta,K} = 1$. The decoder computes $\hat M = (V - \mq{Y}{\Delta,K}) \mod K = (1-3) \mod 4 = 2$, which equals $M$.}
	\label{fig:mq_illu}
\end{figure} 

To understand how MQ coding is able to exploit perfect noise correlation, despite having only a rate-limited link from the relay to the receiver, we provide a graphical illustration for the case of $\alpha=-1$ and $R_{0} = 2$ in Fig.~\ref{fig:mq_illu}. For fixed $\Delta$, the real line is partitioned into $K = 4$ different regions, indicated by different colours (red, blue, green, grey in the figure). Each region corresponds to a message $m \in \{0,1,2,3\}$ under the $(\Delta,K)$-modulo quantization operation and is a union of countably many length-$\Delta$ intervals. Suppose the message $M=m$ is transmitted, then the relay observation $Y_0$ and the receiver observation $Y$ always differ by exactly $m\Delta$, regardless of the realization of the noise $Z$. Consequently, taking the modulo-$K$ difference of $V = \mq{Y_0}{\Delta,K}$ and $\mq{Y}{\Delta, K}$ always recovers $m$ perfectly. This ensures error-free decoding, while satisfying the relay-destination link capacity $R_{0} = \log K$. Importantly, this argument holds for any choice of $\Delta > 0$, and hence the MQ scheme can operate with arbitrarily small transmit power by choosing $\Delta$ sufficiently small. 

By applying the above scalar MQ scheme independently across the channel uses, we can achieve a rate of $\log \lfloor 2^{R_0} \rfloor$.   

\begin{remark}
	\label{rem:scalar_time_sharing}
	We can match the achievable rate to $R_0$ exactly through time-sharing between a pair of MQ schemes. Recall $K = \lfloor 2^{R_0} \rfloor$. Choose $\eta \in [0,1)$ such that $\eta \log K + (1-\eta) \log (K+1) = R_0$. For any blocklength $n \in \mb Z_+$, let $n_1 = \lceil \eta n \rceil$ and $n_2 = n - n_1$. Note that 
	\begin{equation}
		\left|\frac{n_1}{n} - \eta\right| = \left|\frac{n_2}{n} - (1-\eta)\right| < \frac{1}{n}
	\end{equation}
	Apply the scalar MQ scheme with parameters $(\tilde\Delta, K)$ for $n_1$ channel uses and $(\tilde\Delta, K+1)$ for $n_2$ channel uses, where $\tilde\Delta = \frac{|\alpha - 1|\sqrt{P}}{K+1}$. It is straightforward to verify that the power constraint is satisfied. The average coding rate, which is equal to the average rate at the primitive relay link, is given by 
	\begin{align}
		\tilde{R}_0 &= \frac{n_1}{n}\log K + \frac{n_2}{n}\log(K+1) \notag \\
		&= R_0 - \frac{(\lceil \eta n \rceil - \eta n) \log(1+\frac1k) }{n}.
	\end{align}
	We observe that $R_0 - \frac{1}{n} \log(1+\frac1K)< \tilde R_0 < R_0$. On one hand, $\tilde R_0$ is bounded above by $R_0$, implying that the primitive relay capacity is not exceeded. On the other hand, $\tilde R_0$ can be made as close as possible to $R_0$ by choosing $n$ sufficiently large. Since each scalar MQ transmission incurs no error, the overall block error probability remains zero, and hence $R_0$ is achievable. 
\end{remark}

This achievability result is summarized below.

\begin{proposition}
	\label{prop:scalar_prim_relay}
	Assume $\alpha \neq 1$. MQ coding is able to achieve rate $R_0$ for the Gaussian primitive relay channel with perfectly correlated noises using any positive average transmit power $P > 0$.
\end{proposition}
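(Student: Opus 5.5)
The plan is to assemble Proposition~\ref{prop:scalar_prim_relay} from two ingredients already in place: the single-letter zero-error MQ scheme of Section~\ref{sec:prim_MQ}, and the time-sharing construction of Remark~\ref{rem:scalar_time_sharing}. First I would record a general single-use statement. For any integer $k \ge 1$, set $\Delta_k = |\alpha-1|\sqrt{P}/k$ and encode $m \in [k]$ as $x(m) = m\Delta_k/(\alpha-1)$. The relay sends $\mq{Y_0}{\Delta_k,k}$ and the receiver outputs $(V - \mq{Y}{\Delta_k,k}) \mod k$. Because $Z_0 = Z$, we have the identity $Y_0 = Y + m\Delta_k$ for every noise realization. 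Applying \eqref{eq:colour_lemma} of Lemma~\ref{lem:mq} with $a = m$ then gives $\hat M = m \mod k = m$ with probability one. The power constraint holds because $|x(m)| \le (k-1)\sqrt{P}/k < \sqrt{P}$.

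Note that the argument never uses the relation $k = \lfloor 2^{R_0}\rfloor$, and it never uses the specific value of $\Delta$ beyond $\Delta>0$. So the scheme works for any $k$ and any $P>0$, which is exactly what ``any positive power'' requires. Also, $\alpha \neq 1$ is used only so that the division by $\alpha-1$ is well defined.

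The second step is the blocklength-$n$ code. I take $K = \lfloor 2^{R_0}\rfloor$ and choose $\eta$ with $\eta\log K + (1-\eta)\log(K+1) = R_0$; if $2^{R_0}$ is an integer, take $\eta = 1$ and the single scheme already suffices. I split the block into $n_1 = \lceil \eta n\rceil$ uses of the $k=K$ scheme and $n_2 = n-n_1$ uses of the $k=K+1$ scheme, and the message is the product of the per-symbol messages. For the per-codeword constraint, each symbol has magnitude at most $\sqrt{P}$, so $\frac1n\|x^n\|^2 \le P$. For the relay link, the description set has size $K^{n_1}(K+1)^{n_2}$, and its logarithm is at most $nR_0$ because $n_1 \ge \eta n$ and $\log K \le \log(K+1)$. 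The rate is $\tilde R_0 = R_0 - (n_1-\eta n)\log(1+1/K)/n$, which tends to $R_0$. Since every symbol is decoded without error, $P_e^{(n)} = 0$.

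The only delicate point is formal: achievability is defined with message sets of size $2^{nR}$ for fixed $R$, whereas our codes have rate $\tilde R_0 < R_0$ depending on $n$. I would handle this in the standard way. For any target $R < R_0$, once $n$ is large enough that $\tilde R_0 \ge R$, I use a subset of $2^{nR}$ messages, so every $R<R_0$ is achievable with zero error. Since capacity is a supremum and the achievable region is closed, this yields rate $R_0$. Alternatively, a rate-$R_0$ code for a relay link of capacity exactly $R_0$ can be obtained by padding with an arbitrarily small loss.
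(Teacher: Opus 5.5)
Your proposal is correct and follows the paper's argument: the single-use zero-error MQ scheme of Section~\ref{sec:prim_MQ} (via $Y_0 = Y + M\Delta$ and \eqref{eq:colour_lemma}), combined with the $K$/$(K+1)$ time-sharing of Remark~\ref{rem:scalar_time_sharing}. Your differences are minor refinements rather than a different route: you use per-$k$ step sizes $\Delta_k$ instead of a common $\tilde\Delta$, you take $\eta=1$ when $2^{R_0}$ is an integer (the paper's stated range $\eta\in[0,1)$ should really be $(0,1]$), and you make the fixed-rate bookkeeping explicit.
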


\begin{remark}
	\label{rem:thp}
	The modulo operation plays an important role in a variety of practical communication strategies. A well-known example is Tomlinson-Harashima precoding (THP) \cite{tomlinson1971new,harashima1972matched}, which is a practical scalar implementation of dirty-paper coding \cite{costa1983writing}. However, the purpose of the modulo operation in THP differs from that of the MQ coding for the relay channel. In THP, the main purpose of the modulo operator is to reduce the transmit power. In contrast, the main purpose of the modulo quantizer in MQ coding for the primitive relay channel is to reduce the rate of the relay messages.
	
	Likewise, in a related setting of unlimited sampling \cite{bhandari2020unlimited}, the modulo operation is utilized to guarantee that the amplitude of the output does not exceed a certain threshold. Furthermore, a related construction appears in the instantaneous relaying setting \cite{khormuji2010instantaneous}, where the relay applies a modulo operation to a scaled version of its observation before shaping. In the asymptotic limit, the modulo operation removes the dependence between the relay input and output. These uses of modulo operation are again distinct from the purpose of MQ operation proposed in this paper.
\end{remark}

\begin{remark}
	\label{rem:wyner_ziv}
	MQ coding can be thought of as a scalar implementation of binning in Wyner-Ziv coding for the CF relay strategy. The remarkable result here is that as $P \rightarrow 0$, this scalar implementation is already capacity-achieving for the Gaussian primitive relay channel. This is due to the fact as $P \rightarrow 0$, the relay observation $Z$ and the signal at the receiver $X+Z$ become more and more highly correlated\footnote{In fact, as $P \rightarrow 0$, the residual noise power after CF by Wyner-Ziv coding also goes to zero. One can verify by computation that the effective SNR in the limit is such that the capacity of the overall channel is exactly $R_0$.}. In this case, even a scalar implementation of binning is already very effective. 
\end{remark}

\begin{remark}
	\label{remark:additive}    
	The proposed MQ coding scheme does not depend on the fact that the noise is Gaussian. In fact, it applies to any primitive relay channel with real-valued input and output alphabets and with additive noises.
\end{remark}

\subsection{Superposed MQ Coding}
\label{sec:superposed_MQ}

The scalar MQ scheme in the previous section 
achieves the relay-link capacity $R_0$ by exploiting the common noise structure. However, it does not make efficient use of the available input power. The achievable rate in fact remains $R_0$ regardless of the input power $P$. To attain the full capacity $\psi(P)+R_0$, the coding scheme must also exploit the capability of the direct AWGN channel. 
In this section, we propose the \textit{superposed MQ coding} scheme to achieve the full capacity. 

The superposed MQ coding scheme is a nested code construction that combines the scalar MQ code and a Gaussian channel code subject to an additional structural constraint. Specifically, the Gaussian channel code is supported on a scaled integer lattice. The construction resembles superposition coding for the broadcast channel \cite{cover1972broadcast}.

The message is split into two components, encoded independently using the MQ code and the Gaussian code, respectively. The transmitted signal is the superposition, i.e., entry-wise sum, of the two encoded sequences. The relay conveys the (symbol-wise) modulo quantization of the its observation to the receiver via the primitive link. By comparing the modulo quantization of the relay's and its own observations, the receiver is able to perfectly decode the message component encoded via the MQ code. The receiver then recovers the remaining message component after cancelling the decoded MQ component. This is reminiscent of successive cancellation decoding for the broadcast channel. 

This construction yields a code for the Gaussian primitive relay channel whose error probability matches that of the underlying Gaussian code, whose rate equals the sum of the Gaussian code rate and the primitive-link capacity, and whose average power constraint can be made arbitrarily close (from above) to that of the Gaussian code. We now proceed to describe this construction in detail. 


Let ${\mc C}_{\sf G}$ be a Gaussian $(2^{n R_{\sf G}}, n)$-code for the direct channel supported on a scaled integer lattice. The direct channel is an AWGN channel with input $X$ and output $Y = X + Z$. Let the codewords of ${\mc C}_{\sf G}$ be $\{x_{\sf G}^n(1), \cdots, x_{\sf G}^n(2^{n R_{\sf G}})\}$ and the decoding function be $f_{{\sf dec},{\sf G}}^{(n)}: \mb R^n \to [2^{nR_{\sf G}}]$. We impose an average power constraint on the codewords. The normalized squared norm of each codeword is upper bounded by $P_{\sf G}$, i.e., 
\begin{equation}
	\frac1n  \|x_{\sf G}^n(i)\|^2 \le  P_{\sf G},~\forall \,i \in [2^{n R_{\sf G}}].
\end{equation}
As noted above, we require that all codewords of $\mc C_{\sf G}$ lie on a scaled integer lattice. In other words, there exists some $\Delta_{\sf G} > 0$ such that \begin{equation}\label{eq:lattice}
	x_{\sf G}^n(i) \in (\Delta_{\sf G} \mb Z)^n,~\forall\, i \in [2^{n R_{\sf G}}].
\end{equation}
We later argue that there exist Gaussian codes with this structure that can achieve the AWGN channel capacity. 

\begin{remark}
	\label{rem:arb_small_delta_G}
	Given a code $\mathcal C_{\sf G}$ satisfying \eqref{eq:lattice} for some $\Delta_{\sf G}$, the constraint \eqref{eq:lattice} also holds for $\Delta_{\sf G}' = \Delta_{\sf G}/a$ for any $a \in \mb Z_+$. Therefore, the scaling factor of the integer lattice can be made arbitrarily small, which turns out to be a useful fact when dealing with the power constraint. 
\end{remark}

The average probability of error for the code $\mc C_{\sf G}$ on the direct channel is denoted by $P_e({\mc C}_{\sf G})$:
\begin{equation} \label{eq:P_e_C}
	P_e({\mc C}_{\sf G}) = 2^{-nR_{\sf G}} \sum_{i=1}^{2^{n R_{\sf G}}} \Pr\{ m \neq f_{{\sf dec},{\sf G}}^{(n)}( x_{\sf G}^n(i) + Z^n) \}.
\end{equation}

Upon the construction of one such $(2^{nR_{\sf G}}, n)$-code $\mc C_{\sf G}$ with average power constraint $P_{\sf G}$, we now construct a $(2^{n(R_{\sf G}+R_0)}, n)$-code for the Gaussian primitive relay channel with perfect noise correlation and under an average power constraint $P>P_{\sf G}$. We show that the error probability $P_e^{(n)}$ of this code is equal to $P_e({\mc C}_{\sf G})$.

Without loss of generality, we consider the case where $R_0 = \log K$ for some $K \in \mb Z_+$. (If $\log K \notin \mb Z_+$ we could use a similar time-sharing approach as in the discussion in Remark \ref{rem:scalar_time_sharing}.)
Identify the message set $[2^{n( R_{\sf G} + R_0)}]$ with the Cartesian product $[2^{n R_{\sf G}}] \times [K]^n$, i.e., consider messages of the form $(m_{\sf G}, m^n)$, where $m_{\sf G} \in [2^{n R_{\sf G}}]$ and $m^n = (m_1, \cdots, m_n) \in [K]^n$. Recall from Remark \ref{rem:arb_small_delta_G} that for a code supported on the scaled integer lattice, we can make the scaling factor arbitrarily small. Therefore we can choose $\Delta_{\sf G} > 0$ such that ${\mc C}_{\sf G} \subset (\Delta_{\sf G} \mb Z)^n$ and $P_{\sf G} \le P - \Delta_{\sf G}^2$. Let $\Delta = |\alpha - 1|\Delta_{\sf G}/K$. The proposed encoding function $f_{\sf enc}^{(n)}: [2^{n R_{\sf G}}] \times [K]^n \to \mb R^n$ is
\begin{equation}\label{eq:codebook}
	f_{\sf enc}^{(n)}(m_{\sf G}, m^n) = x^n(m_{\sf G}, m^n)=x_{\sf G}^n (m_{\sf G}) + \frac{\Delta}{\alpha-1} m^n.
\end{equation}
The above choice of $\Delta_{\sf G}$ ensures that the power constraint is satisfied for the Gaussian primitive relay channel, i.e., $\forall\,(m_{\sf G}, m^n)\in [2^{n R_{\sf G}}] \times [K]^n$,
\begin{align}
	&\frac1n\|x^n(m_{\sf G}, m^n)\|^2 \notag \\
	&\le \frac1n \left( \|x_{\sf G}^n(m_{\sf G})\|^2  + \frac{\Delta^2}{|\alpha-1|^2}  \|m^n\|^2\right) \notag \\
	&\le  P_{\sf G} + {\textstyle\frac1n} \cdot n K^2 \frac{\Delta^2}{(\alpha-1)^2} \notag \\
	&\le (P - \Delta_{\sf G}^2) + \Delta_{\sf G}^2 = P.
\end{align}

Given message $(M_{\sf G}, M^n)$, the relay observes $Y_0^n = \alpha x^n( M_{\sf G}, M^n) + Z^n$. The relay forwards the symbol-wise $(\Delta,K)$-modulo quantization $V^n$ to the receiver, where $V_{i} = \mq{Y_{0,i}}{\Delta,K}, \forall\,1\le i\le n$. The receiver receives $V^n$ from the relay and observes $Y^n =  x^n(M_{\sf G}, M^n) + Z^n$. It first aims to recover $M^n$. To do this, it 
declares
\begin{equation}
	\hat M_i =  \left(V_i - \mq{Y_i}{\Delta,K}\right) \mod K.
\end{equation}

We now show that the receiver is able to decode the $M^n$ part of the message perfectly, i.e., $\hat M_i = M_i$ holds for all $1\le i \le n$. Observe that for any such $i$,  
\begin{align}
	Y_{0,i} &= Y_{i} + (\alpha - 1) x_i(M_{\sf G},M^n) \notag \\
	&=  Y_{i} + (\alpha - 1) \left(x_{{\sf G},i}(M_{\sf G}) + \frac{\Delta}{\alpha-1} M_i \right) \notag \\
	&=   Y_{i} + (\alpha - 1) x_{{\sf G},i}(M_{\sf G}) + \Delta M_i 
\end{align}
The lattice support condition \eqref{eq:lattice} ensures $x_{{\sf G},i}(M_{\sf G}) \in \Delta_{\sf G} \mb Z$. Since $\Delta_{\sf G} = K\Delta/|\alpha-1|$, there exists some $c \in \mb Z$ such that $(\alpha-1) x_{{\sf G},i}(M_{\sf G})  = cK\Delta$. We can therefore establish the connection between $V_i$ and $Y_i$ through the following:
\begin{align}
	V_i &= \mq{Y_{0,i}}{\Delta,K} \notag \\ 
	&= \mq{Y_{i} + (\alpha - 1) x_{{\sf G},i}(M_{\sf G}) + \Delta M_i }{\Delta,K} \notag \\
	&= \mq{Y_{i} +  ck\Delta + \Delta M_i }{\Delta,K} \notag \\
	&= \mq{Y_{i} + \Delta M_i }{\Delta,K},
\end{align}
where the last step is due to \eqref{eq:lattice_colour_lemma} in Lemma \ref{lem:mq}. Using a similar argument as in \eqref{eq:prim_ach_4}-\eqref{eq:prim_ach_7}, we have
\begin{align}
	\hat M_i &= (V_i -\mq{Y_i}{\Delta,K} ) \mod K \notag \\
	&= (\mq{Y_{i} + \Delta M_i}{\Delta,K} - \mq{Y}{\Delta,K})\mod K \notag  \\
	&= M_i,
\end{align}
concluding that $\hat M_i = M_i$ exactly.

Upon recovering $M^n$, the receiver can subtract $\frac{\Delta}{\alpha-1} M^n$ from $Y^n$ and get 
\begin{equation}
	Y_{\sf G}^n = Y^n - \frac{\Delta}{\alpha-1} M^n = \beta x_{\sf G}^n(M_{\sf G}) + Z^n.
\end{equation}
Observe that $Y_{\sf G}^n$ is exactly the output of the direct channel when the channel input is $x_{\sf G}^n(M_{\sf G})$. By \eqref{eq:P_e_C}, the receiver can decode $M_{\sf G}$ correctly with error probability $P_e(\mc C_{\sf G})$. Since the recovery of $M^n$ is perfect, the overall average error probability for this superposed modulo quantization coding scheme is $P_e^{(n)} = P_e(\mc C_{\sf G})$. 

Based on the above rate expression and error probability analysis, to complete the proof that this superposed MQ coding scheme indeed achieves the overall capacity \eqref{eq:capacity}, it remains to show the existence of a sequence of capacity-achieving codes supported on a scaled integer lattice for the direct channel. Using a random coding argument, the problem can be reduced to finding a sequence of distributions $\{\mu^{(j)}\}_{j \in \mb Z_+}$ with support on some one-dimensional lattice and second moment strictly less than $P$, such that the following mutual information converges to the Gaussian channel capacity:
\begin{equation}\label{eq:mi_convergence}
	\lim_{j \to \infty} I\left(X_{\sf G}^{(j)}; X_{\sf G}^{(j)} + Z\right) \to \psi(P), ~~ X_{\sf G}^{(j)} \sim \mu^{(j)}.
\end{equation}
One choice of such sequence of distributions has been explored in \cite[Section VII]{wu2010impact}. In particular, let $B^{(j)}\sim {\rm Binomial}(j-1,1/2)$, then $\mu^{(j)}$ is the distribution of 
\begin{equation}
	X_{\sf G}^{(j)} = c_j \left(B^{(j)} - \frac{j-1}{2}\right),
\end{equation}
where $c_j > 0$ is chosen such that $\mb E[(X_{\sf G}^{(j)})^2]$ approaches $P$ from below as $j \to \infty$. The convergence in \eqref{eq:mi_convergence} follows from the asymptotic normality of binomial distributions. We summarize this achievability result in the following theorem.

\begin{theorem}
	\label{thm:prim_MQ_capacity}
	Assume $\alpha \neq 1$. The superposed MQ coding scheme is able to achieve the capacity $C_{\triangleleft}(R_0) = \psi(P) + R_0$ for the Gaussian primitive relay channel with perfectly correlated noises using any positive average transmit power $P > 0$.
\end{theorem}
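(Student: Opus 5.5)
The proof assembles the pieces already developed in this section. The converse is Theorem~\ref{thm:capacity_prim_relay}, so only achievability of every rate $R<\psi(P)+R_0$ is needed.

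\emph{Reduction to a lattice code.} Fix $\epsilon>0$ and suppose first that $R_0=\log K$ for an integer $K$. It suffices to produce, for some $P_{\sf G}<P$, a sequence of $(2^{nR_{\sf G}},n)$-codes $\mc C_{\sf G}$ for the direct AWGN channel with the following three properties:
\begin{itemize}
\item $R_{\sf G}>\psi(P)-\epsilon$;
\item every codeword has power at most $P_{\sf G}$ and lies in $(\Delta_{\sf G}\mb Z)^n$ for some $\Delta_{\sf G}>0$;
\item $P_e(\mc C_{\sf G})\to 0$.
\end{itemize}
Given such codes, Remark~\ref{rem:arb_small_delta_G} lets us shrink $\Delta_{\sf G}$ so that $P_{\sf G}\le P-\Delta_{\sf G}^2$. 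The superposed construction \eqref{eq:codebook} then yields a code of rate $R_{\sf G}+R_0$ that satisfies the power constraint $P$. By the analysis preceding the theorem, its error probability is exactly $P_e(\mc C_{\sf G})$. This follows because the $M^n$ component is decoded with zero error via \eqref{eq:lattice_colour_lemma} and \eqref{eq:colour_lemma}, and after subtraction the receiver sees a clean direct-channel output.

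\emph{Existence of lattice-supported codes.} Use the binomial distributions $\mu^{(j)}$ of \cite{wu2010impact}, scaled so that the second moment is at most $P'<P$. Each $\mu^{(j)}$ is supported on $c_j(\mb Z-\frac{j-1}{2})$. When $j$ is odd this set lies in $c_j\mb Z$; otherwise use $\Delta_{\sf G}=c_j/2$. Choose $P'$ close to $P$ and $j$ large so that $I(X_{\sf G}^{(j)};X_{\sf G}^{(j)}+Z)>\psi(P)-\epsilon/2$; this uses the convergence \eqref{eq:mi_convergence} together with continuity of $\psi$.

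Next draw i.i.d.\ codebooks from $\mu^{(j)}$ and apply the standard random-coding and joint-typicality argument for the memoryless channel with a discrete input and continuous output. This gives codes with vanishing average error at rate $R_{\sf G}=\psi(P)-\epsilon$.

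\emph{Main obstacle: the per-codeword power constraint.} Random i.i.d.\ codewords satisfy the power bound only on average. I would handle this by expurgation. The law of large numbers gives $\frac1n\|x^n\|^2\le P_{\sf G}:=(P'+P)/2<P$ with high probability. So either discard atypical-power codewords, or keep the better half of the codewords after assigning error $1$ to power violators. Both the lattice support and the rate loss of $1/n$ are preserved, and every remaining codeword satisfies the strict bound $P_{\sf G}<P$ required above.

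\emph{Non-integer $2^{R_0}$.} Time-share, as in Remark~\ref{rem:scalar_time_sharing}, between superposed schemes using $K$ and $K+1$. The MQ component is applied on disjoint coordinate sets, with $\Delta$ chosen per coordinate as $|\alpha-1|\Delta_{\sf G}/K$ or $|\alpha-1|\Delta_{\sf G}/(K+1)$. The lattice identity \eqref{eq:lattice_colour_lemma} still applies coordinatewise, because $(\alpha-1)x_{{\sf G},i}$ is a multiple of the relevant $K\Delta$ or $(K+1)\Delta$. The MQ rate approaches $R_0$ from below and the power bound is unchanged. Letting $\epsilon\to0$ gives achievability of $\psi(P)+R_0$ for every $P>0$.
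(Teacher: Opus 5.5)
Your proposal is correct and follows essentially the same route as the paper: superpose the scalar MQ code on a lattice-supported AWGN code built from the scaled binomial inputs, decode the MQ part exactly via Lemma~\ref{lem:mq} and cancel it, and time-share for non-integer $2^{R_0}$. Your expurgation step also usefully makes explicit the per-codeword power constraint that the paper leaves to ``a random coding argument.''

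One small fix is needed, and it is inherited from the paper's own power calculation. The condition $P_{\sf G}\le P-\Delta_{\sf G}^2$ does not by itself give $\frac1n\|x^n\|^2\le P$. The cross term $\frac{2}{n}\langle x_{\sf G}^n,\frac{\Delta}{\alpha-1}m^n\rangle$ can be positive and of order $\sqrt{P_{\sf G}}\,\Delta_{\sf G}$, which exceeds the $\Delta_{\sf G}^2$ slack when $\Delta_{\sf G}$ is small. Instead, use the triangle inequality: each MQ entry has magnitude below $\Delta_{\sf G}$. Then shrink $\Delta_{\sf G}$ until $\sqrt{P_{\sf G}}+\Delta_{\sf G}\le\sqrt{P}$, which is possible because $P_{\sf G}<P$.
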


\begin{remark}
	The code construction in the proof above involves both a coarser lattice and a finer lattice. This is reminiscent of nested lattice coding \cite{zamir2002nested,yu2005trellis}. However, the roles of the lattices in superposed MQ coding are quite different from that of nested lattice coding. In nested lattice coding, only the finer lattice is used for channel coding. The coarser lattice is used for source coding. In contrast, in the superposed MQ coding scheme, both lattices are used for channel coding. 
\end{remark}

\begin{remark}
	The superposed MQ coding scheme is not the only way to achieve the capacity of this channel via MQ; it is also possible to do so through time-sharing between the scalar MQ scheme and the usual AWGN channel codes.
\end{remark}  

In the following we describe such an alternative construction based on time-sharing. Let $n$ be the total blocklength of the transmission. In this scheme, the relay link operates \textit{only} in the first channel use at the \textit{high} rate of $nR_0$. For this single channel use, we implement the scalar MQ scheme discussed in Section \ref{sec:prim_MQ}. In the remaining channel uses the relay link is inactive, and the channel becomes a regular AWGN channel. For these channel uses, we use an AWGN channel code of length $(n-1)$ and average power constraint $P$.

By Proposition \ref{prop:scalar_prim_relay},  there exists an error-free $(\lfloor 2^{nR_0} \rfloor,1)$-code for this channel when the relay rate is $nR_0$. Furthermore, this code can be made to satisfy average power constraint $P$. The overall transmit power is then upper bounded by $nP$.
Meanwhile, the rate $\psi(P)$ is achievable for the AWGN channel over the $n-1$ channel uses when the relay is inactive. The time-shared rate is lower bounded as
\begin{align}
	\label{eq:time_shared_rate}
	\frac{1}{n} \log \lfloor 2^{nR_0} \rfloor  \ge  \psi(P) + R_0 - O\left(\frac{1}{n}\right),
\end{align}
which approaches $C_\triangleleft = \psi(P) + R_0$ as $n \to \infty$. Hence, this construction also achieves the channel capacity.

We note here that although the time-sharing argument above is conceptually simpler than the earlier proof involving Gaussian AWGN code supported on the lattice, the MQ code in the time-sharing argument would require high-precision implementation when $n$ is large, because such an MQ code requires a lattice with a spacing that scales as $2^{-nR_0}$. In contrast, the earlier proof only requires a lattice of a spacing proportional to $2^{-R_0}$, independent of $n$. Nevertheless, due to the conceptual simplicity of the time-sharing argument, we will adopt this argument when deriving results for the diamond channel in Section \ref{sec:diamond}.



\begin{remark}
	The time-sharing argument in the above proof resembles the flash helping approach introduced in \cite{bross2020decoder}. In flash helping, the relay is active only in $\tau$-fraction of time, where $\tau > 0$ tends to zero as the blocklength $n$ goes to infinity. The relay helps by quantizing a length-$\tau n$ subsequence of its observation into $nR_0$ bits. This requires a rate-distortion code. The error probability of the rate-distortion code vanishes only as the length of the subsequence grows without bound, i.e., $\tau n \to \infty$. In comparison, the MQ coding scheme does not make any error even when the relay is active only in one channel use.  
\end{remark}

\subsection{Multiple Relays}

In this subsection, we generalize the MQ coding framework to the Gaussian primitive relay channel with multiple relays under perfect noise correlation. In particular, we show that the capacity of this channel can also be achieved via an MQ-based scheme.

For a positive integer $L \in \mb Z_+$, the Gaussian primitive $L$-relay channel with perfect noise correlation is depicted in Fig.~\ref{fig:prim_L_relay}. Each relay observes a noisy version of the input $X$ according to $Y_\ell = \alpha_\ell X + Z$ for $1 \le \ell \le L$, where the $\alpha_\ell$'s are the respective channel gains at the relays, and $Z \sim \mc N(0,1)$ is the common Gaussian noise affecting all relays under the perfect correlation assumption. Similar to the single-relay case, the input is subject to an average-power constraint $P$. The destination observes $Y = X + Z$, together with the messages forwarded from the relays over noiseless bit pipes of capacity $R_1, \cdots, R_{L}$. As before, we assume that $\alpha_\ell \neq 1$ for all $1 \le  \ell \le L$, for the same reasons discussed in Section \ref{sec:prim_relay_model}. The capacity of this channel is denoted $C_{\triangleleft,L}(R_1, \cdots, R_L)$.

\begin{figure}[t]
	\centering
	\begin{tikzpicture}[auto, thick, node distance=1.5cm,scale = 0.7]
		
		\node at (-4.4,0) (msg1) {$M$};
		\node[draw, rectangle,scale = 0.75] (enc) at (-2.9,0) {\bf Enc};
		\node at (-1.4,0) (input1) {$X^n$};
		
		\node[draw, circle,scale = 0.5] (sum1) at (0.45,0) {\Large +};
		\node[draw, rectangle,scale = 0.75] (relay) at (4.2,1.25) {\bf Relay\,$L$};
		\node at (2.2,0) (output1) {$Y^n$};
		\node[draw, rectangle,scale = 0.75] at (5.7,0) (dec) {\bf Dec};
		\node at (7.2,0) (msg2) {$\hat M$};
		
		\draw[-Latex] (input1) -- (sum1);

		\node[scale = 0.95] at (0.45,2.25) (zinput0) {$Z^n$};
		\node[scale = 0.95] at (0.45,-1.05) (zinput1) {$Z^n$};
		\draw[-Latex] (0.45,2.05) -- ++ (0.,-0.55);
		\draw[-Latex] (0.45,-0.8) -- ++ (0.,0.55);
		\draw[-Latex] (0.45,4.2) -- ++ (0.,-0.55);
		\node[draw, circle,scale = 0.5] (sum3) at (0.45,3.4) {\Large +};
		
		\node[draw, circle,scale = 0.5] (sum2) at (0.45,1.25) {\Large +};
		\draw[-Latex] (input1) -- node[above,scale = 0.8] {$\alpha_L~$} ++ (sum2);
		\draw[-Latex] (input1) -- node[left,scale = 0.8] {$\alpha_1$} ++ (sum3);

		\node at (2.2,1.25) (output2) {$Y_L^n$};

		\node at (2.2,3.4) (output3) {$Y_1^n$};
		\node[scale = 0.95] at (0.45,4.4) (zinput0) {$Z^n$};
		\node[draw, rectangle,scale = 0.75] (relay3) at (4.2,3.4) {\bf Relay\,$1$};
		
		\node at (2.2,2.5) {$\vdots$};

		\draw[-Latex] (sum2) -- (output2);
		\draw[-Latex] (sum1) -- (output1);
		\draw[-{Latex[fill=none]}, double] (relay) --  (dec);
		\draw[-Latex] (msg1) -- (enc);
		\draw[-Latex] (enc) -- (input1);
		\draw[-Latex] (output1) -- (dec);
		\draw[-Latex] (dec) -- (msg2);
		\draw[-Latex] (output2) -- (relay);
		\draw[-Latex] (sum3) -- (output3);
		\draw[-Latex] (output3) -- (relay3);
		\draw[-{Latex[fill=none]}, double] (relay3) --  (dec);
		\node[scale = 0.8] at (4.5,0.52) {$nR_L$};
		\node[scale = 0.8] at (5.3,2) {$nR_1$};

	\end{tikzpicture}
	\caption{Gaussian primitive $L$-relay channel with perfectly correlated noises at the receiver and the relays.}
	\label{fig:prim_L_relay}
\end{figure}

From the cut-set bound, it is immediate that the capacity of this channel cannot exceed the sum of the AWGN channel capacity and all the primitive link capacities, namely $\psi(P) + \sum_{\ell = 1}^L R_\ell$. We now construct a time-sharing scheme based on MQ coding that attains this upper bound.

Let $n \in \mb Z_+$ be the transmission blocklength with $n \gg L$. In each of the first $L$ channel uses, only one of the primitive relay links is active while all other remain inactive. Specifically, during the $\ell$-th channel use, we apply the scalar MQ scheme where $\ell$-th relay link operates at the rate of $nR_\ell$. Using the scalar MQ coding scheme in Section \ref{sec:prim_MQ}, $2^{nR_\ell}$ messages can be transmitted from the transmitter to the receiver without error in this channel use. For all the remaining $(n-L)$ channel uses, no relay is active and the channel reduces to an AWGN channel, and as $n \to \infty$, an additional $\psi(P)$ bits of information can be conveyed reliably. Summing up, the overall achievable rate of this scheme is equal to $\psi(P) + \sum_{\ell = 1}^L R_\ell$, which is exactly the cut-set upper bound. This leads to the following corollary.

\begin{corollary}
	Assume $\alpha_\ell \neq 1$ for all $1\le \ell \le L$. The capacity of the Gaussian primitive $L$-relay channel is 
	\begin{equation}
		\label{eq:multi_relay_capacity}
		C_{\triangleleft,L}(R_1, \cdots, R_L) = \psi(P) + \sum_{\ell=1}^L R_\ell.
	\end{equation}
	It is achievable by using an MQ-based time-sharing scheme. 
\end{corollary}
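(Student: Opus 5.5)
The proof has two parts: a cut-set converse and an achievability argument that makes the time-sharing sketch above rigorous. The converse gives $C_{\triangleleft,L}\le \psi(P)+\sum_{\ell}R_\ell$. The achievability shows that an MQ-based time-sharing code attains this value for every $\epsilon>0$.

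For the converse, I take the cut separating the destination from the transmitter and all relays. The information crossing it is $(Y^n, V_1,\dots,V_L)$, where $V_\ell\in[2^{nR_\ell}]$. Fano's inequality and the chain rule give
\begin{equation*}
nR \le I(M;Y^n,V_1,\dots,V_L)+n\epsilon_n \le I(M;Y^n) + H(V_1,\dots,V_L) + n\epsilon_n .
\end{equation*}
The first term is at most $n\psi(P)$ by the standard AWGN converse under the per-codeword power constraint. The second term is at most $n\sum_\ell R_\ell$ because of the alphabet sizes.

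For achievability, fix a blocklength $n>L$ and split it into two phases.

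\emph{Relay phase.} Channel use $\ell\in\{1,\dots,L\}$ is reserved for relay $\ell$. In that use, relay $\ell$ sends its whole budget of $nR_\ell$ bits, and all other relays are silent in it. I apply the scalar scheme of Section~\ref{sec:prim_MQ} with $K_\ell=\lfloor 2^{nR_\ell}\rfloor$ and $\Delta_\ell=|\alpha_\ell-1|\sqrt{P}/K_\ell$. Relay $\ell$ sends $\mq{Y_{\ell}}{\Delta_\ell,K_\ell}$. The receiver computes $(V_\ell-\mq{Y}{\Delta_\ell,K_\ell})\mod K_\ell$. Since $Y_\ell=Y+M_\ell\Delta_\ell$ and $\alpha_\ell\neq1$, Lemma~\ref{lem:mq} gives zero-error decoding of $M_\ell\in[K_\ell]$. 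This holds for every realization of the common noise $Z$, so the other relays play no role in that use.

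\emph{Direct phase.} In the remaining $n-L$ channel uses, all relay links are idle. I use a capacity-achieving AWGN code of length $n-L$ with power constraint $P$, giving error probability $\epsilon_{n-L}\to0$ at any rate below $\psi(P)$.

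The power constraint holds: each relay-phase symbol has $|x|^2\le P$ by \eqref{eq:prim_MQ_pow_constr}, and the AWGN codeword has average power at most $P$, so the overall per-codeword average is at most $P$. The relay link constraints hold because relay $\ell$ sends exactly $\log K_\ell\le nR_\ell$ bits in total. The overall error probability equals that of the AWGN component. The total rate is
\begin{equation*}
\frac1n\sum_{\ell=1}^L \log\lfloor 2^{nR_\ell}\rfloor + \frac{n-L}{n}\,(\psi(P)-\delta) \ \ge\ \sum_\ell R_\ell - \frac{L}{n} + \frac{n-L}{n}(\psi(P)-\delta),
\end{equation*}
using $\log\lfloor 2^{x}\rfloor\ge x-1$ for $x\ge1$. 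Letting $n\to\infty$ and then $\delta\to0$ gives the cut-set value.

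The main obstacle is small relay rates. If $nR_\ell<1$ then $K_\ell=0$, and the bound $\log\lfloor 2^x\rfloor\ge x-1$ fails below $x=1$. For fixed $R_\ell>0$ this disappears once $n\ge 1/\min_\ell R_\ell$, since then every $K_\ell\ge 2$. Any relay with $R_\ell=0$ is simply dropped. The other point to check is that reusing the same common noise $Z$ across the scalar uses causes no interference; it does not, because each scalar decoding is exact conditional on any noise value.
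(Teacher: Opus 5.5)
Your proposal is correct and follows essentially the paper's argument: the destination cut gives the converse. Achievability reserves one channel use per relay, running scalar MQ at $nR_\ell$ bits, and uses an AWGN code on the remaining $n-L$ uses, with your Fano and floor bookkeeping making the paper's sketch precise. One minor slip: when $nR_\ell<1$ you get $K_\ell=\lfloor 2^{nR_\ell}\rfloor=1$, not $0$, and $\log\lfloor 2^{x}\rfloor\ge x-1$ holds for all $x\ge 0$, so the small-rate case needs no special handling.
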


\section{Gaussian Primitive Diamond Channel with Perfect Noise Correlation}
\label{sec:diamond}

\subsection{Channel Model}
\label{sec:diamond_model}

The Gaussian primitive diamond channel 
is depicted in Fig.~\ref{fig:diamond}. This channel has been studied in \cite{sanderovich2008communication}. Given input $X$, the two relays observe $Y_{1} = \alpha X + Z_{1}$ and $Y_{2} =\beta X + Z_{2}$ respectively, where $Z_{1}$ and $Z_{2}$ are the respective noises at the relays, and $\alpha, \beta \in \mb R$ are the corresponding channel gains. In this section, we assume that the noise variables are perfectly (and positively) correlated standard Gaussian variables, namely, 
\begin{equation}
	\label{eq:diam_noise_distr}
	\begin{bmatrix}
		Z_1 \\ Z_2
	\end{bmatrix} \sim \mc N\left( \mf 0, \begin{bmatrix}
		1 & 1 \\ 1 & 1
	\end{bmatrix}\right).
\end{equation}
For simplicity of exposition, we can introduce a new Gaussian random variable $Z$ equal to both $Z_1$ and $Z_2$, thus we can write $Y_1 = \alpha X +Z$ and $Y_2 = \beta X + Z$.
The two relays communicate with a central receiver through primitive relay links of capacities $R_1$ and $R_2$ bits per channel use, respectively. Importantly, there is no direct link from the transmitter to the central receiver; the receiver obtains its information exclusively through the relays. 

\begin{figure}[t]
	\centering
	\begin{tikzpicture}[auto, thick, node distance=1.5cm,scale =0.75]
		\node at (-4.2,0) (msg1) {$M$};
		\node[draw, rectangle,scale = 0.8] (enc) at (-2.8,0) {\bf Enc};
		\node at (-1.4,0) (input) {$X^n$};
		
		\node[draw, circle,scale = 0.5] (sum1) at (0.4,1.5) {\Large +};
		\node[draw, circle,scale = 0.5] (sum2) at (0.4,-1.5) {\Large +};
		\node[scale = 0.95] at (0.45,2.5) (zinput1) {$Z_1^n$};
		\node[scale = 0.95] at (0.45,-2.6) (zinput2) {$Z_2^n$};
		\node at (2,1.5) (output1) {$Y_1^n$};
		\node at (2,-1.5) (output2) {$Y_2^n$};

		\node[draw, rectangle,scale = 0.8] (relay1) at (3.7,1.5) {\bf Relay 1};
		\node[draw, rectangle,scale = 0.8] (relay2) at (3.7,-1.5) {\bf Relay 2};
		\node [draw, rectangle,scale = 0.8] (dec) at (5.5,0) {\bf Dec};
		\node at (6.8,0) (msg2) {$\hat M$};

		\draw[-Latex] (msg1) -- (enc);
		\draw[-Latex] (enc) -- (input);
		\draw[-Latex] (input) -- node[above,scale = 0.8] {$\alpha$}   (sum1);
		\draw[-Latex] (input) -- node[below,scale = 0.8] {$\beta$}   (sum2);

		\draw[-Latex] (0.4,2.3) -- ++ (0.,-0.55);
		\draw[-Latex] (0.4,-2.3) -- ++ (0.,0.55);

		\draw[-Latex] (sum2) -- (output2);
		\draw[-Latex] (sum1) -- (output1);
		\draw[-{Latex[fill=none]}, double] (relay1) --  (dec);
		\draw[-{Latex[fill=none]}, double] (relay2) --  (dec);
		
		\draw[-Latex] (output1) -- (relay1);
		\draw[-Latex] (output2) -- (relay2);
		\draw[-Latex] (dec) -- (msg2);
		\node[scale = 0.9] at (5,0.95) {$nR_1$};
		\node[scale = 0.9] at (5,-1) {$nR_2$};
	\end{tikzpicture}
	\caption{Gaussian primitive diamond channel with correlated noises at the two relays.}
	\label{fig:diamond}
\end{figure}

A $(2^{nR}, n)$-code for this channel is specified by the quadruple $(f_{\sf enc}^{(n)}, f_{\sf relay,1}^{(n)}, f_{\sf relay,2}^{(n)}, f_{\sf dec}^{(n)})$. The encoding function $f_{\sf enc}^{(n)}:  [2^{nR}] \to \mb R^n$ maps a message $M$ to a codeword $X^n=x^n(M)$ such that the average power constraint $\frac1n\|x^n(m)\|^2 \le P$ is satisfied for all $m \in [2^{nR}]$. 

For $j = 1,2$, the $j$-th relay forwards an $nR_j$-bit message based on its observation $V_j = f_{\sf relay,j}^{(n)}(Y_j^n)$ to the receiver through a noiseless link, where $f_{\sf relay,j}^{(n)}: \mb R^n \to [2^{nR_j}]$ is the corresponding relay function. The decoding function $f_{\sf dec}^{(n)}: [2^{nR_1}] \times [2^{nR_2}] \to [2^{nR}]$ for this channel takes input $V_1^n$ and $V_2^n$ and outputs the decoded message $\hat M$. The definitions of achievable rate and channel capacity are similar to those in Section \ref{sec:prim_relay_model}. We denote the capacity of the Gaussian primitive diamond channel by $C_{\diamond}(R_1,R_2)$.

Consider first the special case where the two relays have identical gains, i.e. $\alpha = \beta$. In this case, the relay observations are exactly the same, and thus the channel model reduces to a two-hop channel consisting of three nodes: the intermediate node observes $\alpha X + Z$ and is able to communicate to the receiver through a noiseless bit pipe of capacity $R_1 + R_2$. The capacity of this simplified model is just the minimum between the AWGN channel capacity and the capacity of the bit pipe, i.e., 
\begin{equation}
	\label{eq:capacity_diamond_alpha=beta}
	C_\diamond(R_1,R_2) = \min\{ \psi(\alpha^2 P), R_1 + R_2 \}.
\end{equation}
The converse follows directly from the cut-set bound, while the achievability is established via DF, i.e., the intermediate node fully decodes the transmitted message and forwards it to the receiver. This case is trivial. For the remainder of this paper, we focus on the more interesting case where $\alpha \neq \beta$.

In the following, we first give an overview of the known capacity bounds of this channel. We then show that, with a slight modification, the scalar MQ coding scheme introduced in the previous section can be extended to the diamond channel. This scheme is able to achieve a rate of $\min\{R_1, R_2\}$, namely, the smaller of the two relay link capacities. Finally, we propose a way to combine this MQ scheme with DF to yield a new achievability bound for the diamond channel. We show that this bound is tight in certain parameter regimes, thereby revealing a new capacity result.


\subsection{Known Capacity Bounds}
\label{sec:known_bd_diam}

In contrast to the Gaussian primitive relay channel, the capacity of Gaussian primitive diamond channel is not yet known, even if the noise correlation is perfect. In this subsection, we summarize the best-known upper and lower bounds for this channel model. These results serve as benchmarks for analyzing the MQ coding schemes and evaluating their achievable rates in Section \ref{sec:diamond_MQ}.

For the converse, the cut-set bound for the diamond channel can be constructed by a minimization over the broadcast cut-set bound, the multiple-access cut-set bound, and two cross cut-set bounds \cite[Chapter 3.2]{schein2001distributed}. In the present setting where the relay observations have a common noise component, the channel input can be written as a function of the relay outputs, hence the broadcast cut-set bound becomes vacuous. 

For the achievability, we consider lower bounds derived from the DF and CF schemes. In the DF scheme, the relays recover the message sent by the transmitter. Specifically, the channel from the transmitter to the two relays can be regarded as a degraded broadcast channel \cite[Chapter 5.4]{el2011network}, in which a common message is decoded by both relays, and an additional private message is decoded by the stronger relay. The decoded messages are then forwarded to the central receiver, subject to the constraint that the rate of each forwarded message does not exceed the capacity of the corresponding relay link. 

In the CF scheme, the relays do not perform decoding; instead, each relay compresses its observed sequence via typicality-based vector quantization followed by binning. The bin indices are then forwarded to the receiver. The receiver recovers the transmitted message by performing joint decompression and decoding. This approach has been studied in \cite{sanderovich2008communication} for the case of independent noises, and a correlated-noise variant has been introduced in \cite{katz2024gaussian}.

The bounds discussed above are listed in the proposition below. For completeness, we provide a proof in Appendix \ref{app:bd_diamond}.

\begin{proposition}
	\label{prop:bd_diamond}
	Assume $\alpha \neq \beta$ and $|\alpha| \ge |\beta|$. The capacity of the Gaussian primitive diamond channel with perfectly correlated noises is bounded from above by the following expression of the cut-set bound:
	\begin{equation} \label{eq:diamond_cutset}
		\textnormal{Cutset}_{\diamond}(R_1, R_2) =  \min\left\{\begin{matrix}
			R_1 + R_2 \\ R_1 + \psi(\beta^2 P)\\ R_2 + \psi(\alpha^2 P)
		\end{matrix}\right\}.
	\end{equation}
	The capacity is bounded from below by the following expression of DF lower bound:
	\begin{align} 
		\label{eq:diamond_DF}
		R_{\diamond,\textnormal{DF}}(R_1,R_2) =\max_{\gamma \in [0,1]} \min \left\{\begin{matrix}
			R_1 + R_2 \\ R_1 + \psi\left(\frac{\bar\gamma \beta^2 P }{\gamma \beta^2 P + 1}  \right) \\ \psi (\gamma \alpha^2 P )  + \psi \left( \frac{\bar\gamma \beta^2 P }{\gamma \beta^2 P + 1}\right)
		\end{matrix} \right\},
	\end{align}
	where $\gamma$ is the power-splitting parameter in the underlying superposition coding scheme and $\bar \gamma = 1-\gamma$. Specifically, a fraction $\gamma$ of the total transmit power is allocated to the private message decoded only by the stronger relay, while the remaining fraction $\bar\gamma$ is allocated to the common message decoded by both relays.
	
	The capacity is also bounded from below by the following expression of CF lower bound:
	\begin{align} 
		\label{eq:diamond_CF}
		&R_{\diamond,\textnormal{CF}}(R_1,R_2)  \notag \\&=  \max_{N_1,N_2 > 0} \min \left\{\begin{matrix}
			\frac12\log \left(1+\frac{((\alpha - \beta)^2 + \alpha^2 N_2 + \beta^2 N_1 )P}{N_1N_2 + N_1 + N_2}\right) \\ R_1 + \frac{1}{2}\log \frac{N_1(\beta^2 P + 1 + N_2)}{N_1N_2 + N_1 +N_2}  \\
			R_2 + \frac{1}{2}\log \frac{N_2(\alpha^2 P + 1 + N_1)}{N_1N_2 + N_1 +N_2} \\ 
			R_1 + R_2 + \frac{1}{2}\log \frac{N_1 N_2}{N_1N_2 + N_1 +N_2}
		\end{matrix} \right\}.
	\end{align}
\end{proposition}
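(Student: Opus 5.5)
The proposition has three parts: the cut-set upper bound \eqref{eq:diamond_cutset}, the DF lower bound \eqref{eq:diamond_DF}, and the CF lower bound \eqref{eq:diamond_CF}. I will prove them separately, specializing known general results to the Gaussian model with $Z_1=Z_2=Z$. Throughout, for the upper bound I use that Gaussian inputs maximize the relevant conditional mutual informations under the power constraint.

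\emph{Cut-set bound.} I will use the four-cut bound of \cite[Chapter 3.2]{schein2001distributed}:
\begin{equation*}
R\le\sup_{p(x)}\min\{I(X;Y_1,Y_2),\,R_1+R_2,\,R_1+I(X;Y_2),\,R_2+I(X;Y_1)\}.
\end{equation*}
The broadcast term is vacuous. Since $\alpha\neq\beta$, we have $X=(Y_1-Y_2)/(\alpha-\beta)$, so $I(X;Y_1,Y_2)=h(X)$. This is $+\infty$ for continuous $X$, so the term can be dropped; equivalently, the supremum is unbounded. Each cross cut is bounded by an AWGN capacity: $I(X;Y_2)\le\psi(\beta^2P)$ and $I(X;Y_1)\le\psi(\alpha^2P)$. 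Both are attained simultaneously by Gaussian $X$, which gives \eqref{eq:diamond_cutset}.

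\emph{DF bound.} Since $|\alpha|\ge|\beta|$, the channel $X\to Y_2$ is a degraded version of $X\to Y_1$ (noise-normalized SNRs $\alpha^2P\ge\beta^2P$). I will use superposition coding. A cloud codeword $U$ with power $\bar\gamma P$ carries a common message of rate $R_c$, decodable by both relays provided $R_c\le\psi\bigl(\bar\gamma\beta^2P/(\gamma\beta^2P+1)\bigr)$; the weaker relay's constraint dominates. A satellite codeword with power $\gamma P$ carries a private message of rate $R_p\le\psi(\gamma\alpha^2P)$, decoded by relay 1 after cancelling $U$. Relay 1 forwards $(M_p,\ \text{part of }M_c)$ and relay 2 forwards the remaining part of $M_c$. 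The link constraints are then $R_p+R_{c,1}\le R_1$ and $R_{c,2}\le R_2$, with $R_{c,1}+R_{c,2}=R_c$. Fourier--Motzkin elimination of $R_{c,1},R_{c,2},R_p$ gives
\begin{equation*}
R\le\min\{R_1+R_2,\ R_1+R_c^{\max},\ R_p^{\max}+R_c^{\max}\},
\end{equation*}
and maximizing over $\gamma$ yields \eqref{eq:diamond_DF}. In this elimination, the constraint $R_2+R_p^{\max}$ is implied by the others only after the term-by-term minimization is done, so I will check it. If needed, I will note that the common message can also be decoded by relay 1 and split freely across the links, so the set of constraints is exactly the one listed.

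\emph{CF bound.} I will invoke the distributed Wyner--Ziv / CF rate for the diamond (CEO-type) network from \cite{sanderovich2008communication,katz2024gaussian}:
\begin{equation*}
R\le\min_{S\subseteq\{1,2\}}\Bigl[\sum_{j\in S}\bigl(R_j-I(Y_j;U_j\mid X)\bigr)+I(X;U_{S^c})\Bigr],
\end{equation*}
with $X\sim\mathcal N(0,P)$ and $U_j=Y_j+Q_j$, where $Q_j\sim\mathcal N(0,N_j)$ are independent. The main computation is the four terms. First, $I(Y_j;U_j\mid X)=\tfrac12\log\frac{1+N_j}{N_j}$; this is the same whether or not we condition on the other observation, but the joint term is $I(Y_1,Y_2;U_1,U_2\mid X)$. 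Conditioned on $X$, the $Y_j$ share the same $Z$, so I compute it from the covariance $\begin{bmatrix}1+N_1&1\\1&1+N_2\end{bmatrix}$, whose determinant is $N_1N_2+N_1+N_2$. This gives $\tfrac12\log\frac{N_1N_2+N_1+N_2}{N_1N_2}$, which produces the fourth line. Next, $I(X;U_1,U_2)$ is computed from the determinant of the $2\times2$ covariance of $(U_1,U_2)$, $\det=(N_1N_2+N_1+N_2)+P\bigl((\alpha-\beta)^2+\alpha^2N_2+\beta^2N_1\bigr)$, which gives the first line. For the mixed terms I write $R_1-I(Y_1;U_1\mid X,U_2)+I(X;U_2)$. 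Using the chain rule together with these determinants, they simplify to $R_1+\tfrac12\log\frac{N_1(\beta^2P+1+N_2)}{N_1N_2+N_1+N_2}$, and symmetrically for $S=\{2\}$.

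The main obstacle is this last algebra. The correct conditional term for the correlated-noise case must use the joint conditional covariance; this is where the corrected variant of \cite{katz2024gaussian} differs from the independent-noise formula. I will verify the result by checking the limits $N_j\to0$ and $N_j\to\infty$.
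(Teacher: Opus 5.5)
Your proposal follows essentially the same route as the paper. The cut-set bound drops the broadcast cut because $X=(Y_1-Y_2)/(\alpha-\beta)$; DF uses superposition coding followed by Fourier--Motzkin elimination, where no extra $R_2+R_p^{\max}$ constraint arises, so that check is unnecessary; and CF plugs Gaussian test channels $U_j=Y_j+Q_j$ into a simultaneous-decoding bound, and your four computed terms match \eqref{eq:diamond_CF} exactly. The one thing to fix is the general CF inequality you display: it carries the independent-noise penalties $\sum_{j\in S}I(Y_j;U_j\mid X)$, which would give a strictly smaller expression here, whereas the bound you actually compute with is $\min_{S}\bigl[\sum_{j\in S}R_j-I(Y_S;U_S\mid X,U_{S^c})+I(X;U_{S^c})\bigr]$, which is algebraically identical to the paper's form built from $I(Y_j;\hat Y_j)$ and $I_0=I(X;\hat Y_1,\hat Y_2)+I(\hat Y_1;\hat Y_2)$.
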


\begin{remark}
	The CF scheme underlying \eqref{eq:diamond_CF} uses Gaussian test channels $\hat Y_i=Y_i+U_i$, where $U_i\sim\mathcal N(0,N_i)$ is the independent quantization noise used by relay $i$ for $i = 1,2$. Each relay compresses its observation and sends the corresponding bin index to the destination, which performs joint decompression and decoding. Notably, the first term in \eqref{eq:diamond_CF} is $I(X;\hat Y_1,\hat Y_2)$ and plays a role analogous to the broadcast
	cut-set term $I(X;Y_1,Y_2)$. Under perfect noise correlation,
	$X$ is a deterministic function of $(Y_1,Y_2)$, so $I(X;Y_1,Y_2)=\infty$ and the
	broadcast cut does not constrain the cut-set bound. In CF, however,
	the added independent quantization noises destroy this deterministic functional
	relationship, making $I(X;\hat Y_1,\hat Y_2)$ finite and potentially
	rate-limiting.
\end{remark}

\subsection{MQ Coding}
\label{sec:diamond_MQ}

In Proposition \ref{prop:scalar_prim_relay}, we introduce a zero-error scalar coding scheme for the Gaussian primitive relay channel with perfect noise correlation. We now extend the underlying ideas to develop a coding scheme for the Gaussian primitive diamond channel with perfect correlation. Similar to Proposition~\ref{prop:scalar_prim_relay}, the resulting scheme achieves a rate equal to the smaller of the two relay-link capacities.

To understand this extension, we briefly recall the code construction as described in the proof of Proposition \ref{prop:scalar_prim_relay}. In the decoding step, the receiver does not require the complete observation to decode the transmitted message; rather, a modulo quantization of the observation suffices. The adaptation of the scheme to the diamond channel is analogous: each relay forwards a modulo-quantized version of its observation to the central destination. In order to decode the message, the parameters of the modulo quantization at each of the two relays must be identical. In addition, the decodability condition of MQ requires that the logarithm of the message set size does not exceed either relay-link capacity. The resulting achievable rate for MQ coding applied to the Gaussian primitive diamond channel is summarized below.

\begin{proposition}
	\label{prop:scalar_diamond_relay}
	Assume $\alpha \neq \beta$. For the Gaussian primitive diamond channel with perfectly correlated noises at the relays, there exists a $(\lfloor 2^{R_{\textnormal{min}}} \rfloor,1)$ code with zero error probability with any positive average transmit power $P > 0$, where $R_{\textnormal{min}}=\min(R_1,R_2)$.
\end{proposition}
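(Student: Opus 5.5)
The plan is to mimic the scalar MQ construction behind Proposition~\ref{prop:scalar_prim_relay}, with relay~2 taking over the role that the receiver's own observation played there. Set $K=\lfloor 2^{R_{\textnormal{min}}}\rfloor$, so $K\le 2^{R_1}$ and $K\le 2^{R_2}$. Take $\Delta=|\alpha-\beta|\sqrt{P}/K$, which is positive because $\alpha\neq\beta$. Encode $M\in[K]$ as $X=x(M)=M\Delta/(\alpha-\beta)$. The power constraint holds exactly as in \eqref{eq:prim_MQ_pow_constr}, since $|x(m)|\le (K-1)\sqrt{P}/K<\sqrt{P}$ for every $m\in[K]$.

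Both relays apply the same $(\Delta,K)$-modulo quantization. Relay $j$ sends $V_j=\mq{Y_j}{\Delta,K}$ for $j=1,2$. Each $V_j$ takes at most $K$ values, and $K\le 2^{R_j}$, so neither link capacity is exceeded. The receiver outputs $\hat M=(V_1-V_2)\mod K$.

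For correctness, use the common noise to write
\begin{equation*}
Y_1=\alpha X+Z=(\beta X+Z)+(\alpha-\beta)X=Y_2+M\Delta .
\end{equation*}
Identity \eqref{eq:colour_lemma} of Lemma~\ref{lem:mq}, applied with $z=Y_2$ and $a=M$, then gives
\begin{equation*}
(\mq{Y_2+M\Delta}{\Delta,K}-\mq{Y_2}{\Delta,K})\mod K=M\mod K=M .
\end{equation*}
This holds for every realization of $Z$, so the error probability is zero.

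There is no substantive obstacle. The only points to watch are that both relays use identical quantization parameters, that $\alpha\neq\beta$ makes the encoder well defined, and that the message set size is limited by the weaker link. These requirements are what give the rate $\min(R_1,R_2)$.
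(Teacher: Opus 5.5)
Your proposal is correct and follows the paper's proof, which also takes $K=\lfloor 2^{R_{\textnormal{min}}}\rfloor$, has both relays apply the same $(\Delta,K)$-modulo quantization, and uses \eqref{eq:colour_lemma} on the relation between $Y_1$ and $Y_2$ that comes from the common noise. The one difference is a relabeling of the two relays: the paper encodes with $X=M\Delta/(\beta-\alpha)$ and decodes $(V_2-V_1)\mod K$ via $Y_2=Y_1+M\Delta$, whereas you use $\alpha-\beta$ and $(V_1-V_2)\mod K$ via $Y_1=Y_2+M\Delta$.
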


\begin{IEEEproof}
	The following scheme is a variant of the MQ coding scheme described in the proof of Proposition \ref{prop:scalar_prim_relay}. Let the message set be of size $K = \lfloor 2^{R_{\textnormal{min}}} \rfloor$.
	Set $\Delta = |\beta - \alpha|\sqrt{P}/K$. The encoding function is $f_{\sf enc}(m) = m\Delta/(\beta - \alpha)$. The power constraint is satisfied, since for all $m \in [K]$
	\begin{equation}
		|x(m)|^2 = |f_{\sf enc}(m)|^2 = \left| \frac{m}{\beta - \alpha} \cdot \frac{|\beta - \alpha|\sqrt{P}}{K} \right|^2  \le P.
	\end{equation}
	
	To encode a message $M \in [K]$, the transmitter sends $X=M\Delta/(\beta - \alpha)$. The two relays both compute the following function based on their observations: $f_{\sf relay,i}(y) = \mq{y}{\Delta,K}$ for $i = 1,2$. In other words, each relay forwards the
	$(\Delta,K)$-modulo quantization of its observation to the receiver:
	\begin{align}
		V_1 =  \mq{Y_1}{\Delta,K}= \mq{\alpha X + Z}{\Delta,K},\\
		V_2 = \mq{Y_2}{\Delta,K} = \mq{\beta X + Z}{\Delta,K}.
	\end{align}
	Note that since $K \le \lfloor 2^{R_i} \rfloor$ for $i = 1,2$, forwarding $V_i$ to the destination satisfies the capacity limit of the noiseless link between the $i$-th relay and the destination.
	
	The decoding function $f_{\sf dec}(v_1,v_2)$ is defined as:
	\begin{equation}
		f_{\sf dec}(v_1,v_2) = (v_2 - v_1) \mod K.
	\end{equation}
	We now show that $\hat M = (V_2 - V_1) \mod K$ is equal to $M$. Observe that
	\begin{align}
		Y_2 &= \beta X + Z \notag \\
		&= (\beta - \alpha) X + \alpha X+ Z \notag \\
		&= (\beta - \alpha) \frac{M\Delta}{\beta - \alpha} + Y_1 \notag \\
		&= M\Delta + Y_1.
	\end{align}
	Then,
	\begin{align}
		\hat M &= (V_2 - V_1 ) \mod K \notag \\
		&= (\mq{Y_2}{\Delta,K} - \mq{Y_1}{\Delta,K}) \mod K \notag \\
		&= (\mq{Y_1 + M\Delta}{\Delta,K} - \mq{Y_1}{\Delta,K})\mod K  = M, \label{eq:MAC_ach}
	\end{align}
	by Lemma \ref{lem:mq}. This concludes the proof. 
\end{IEEEproof}

\begin{remark}
	Similar to the discussion in Section \ref{sec:prim_MQ}, we can achieve the rate $\log \lfloor 2^{R_{\rm min}} \rfloor$ by applying the scalar MQ scheme introduced in Proposition \ref{prop:scalar_diamond_relay} independently across channel uses. By employing time-sharing among MQ schemes with different parameters (cf. Remark \ref{rem:scalar_time_sharing}), we obtain the following achievable rate
	\begin{equation}
		\label{eq:naive_MQ_diam}
		R_{\diamond,\textnormal{MQ}}(R_1,R_2) = R_\textnormal{min}.
	\end{equation}
\end{remark}

From the above discussion, we see that any rate below $R_{\text{min}}$ can be achieved using MQ coding with \emph{arbitrarily small} (non-zero) power. 
Moreover, MQ coding is asymptotically optimal in the low power limit. 
This is because 
\begin{align}
	&\lim_{P \to 0} \textnormal{Cutset}_{\diamond}(R_1, R_2) \notag \\
	&=  \lim_{P \to 0} \min\{R_1+R_2, R_1 + \psi(\beta^2 P),R_2 + \psi(\alpha^2 P)\} \notag \\
	&= \min\{R_1+R_2, R_1 ,R_2 \} \notag \\
	&= R_{\rm min}.
\end{align}
Thus, $R_{\rm min}$ is the asymptotic capacity of the Gaussian primitive diamond channel as $P \to 0$. Just as for the Gaussian primitive relay channel, the capacity is bounded away from zero even when $P$ is infinitesimally small.

In the following, we derive higher achievable rates by combining MQ with other types of relay strategies.

\subsection{Hybrid MQ-DF Coding}
\label{sec:MQ_DF}

When the relay-link capacities are unequal, MQ coding can only exploit $R_{\text{min}}$ bits per channel use of the higher-capacity relay link.  This leaves the balance of the higher capacity link unused. In addition, as the MQ rate is independent of the transmit power, it cannot leverage higher operating SNRs.

The aforementioned limitations motivate the following length-$n$ block hybrid scheme that combines MQ and DF. We partition the block into one single channel use for MQ and $n-1$ channel uses for DF. 
We use a constant transmit power $P$ in each channel use. The total energy $nP$ is thereby partitioned so that $P$ is allocated to MQ and $(n-1)P$ to DF.
Finally, we partition the relay-link capacities such that each relay uses $n\rho \in [0, nR_{\text{min}}]$ bits for MQ and the excess capacity is used for DF. These choices are made because MQ requires only one channel use and arbitrarily small power to achieve a rate equal to the minimum of the two relay link rates. Here, we do not necessarily allocate the full $nR_{\text{min}}$ to MQ (i.e., $n\rho < nR_{\rm min}$ is allowed), because it may be advantageous to allocate some additional relay link rate to DF. In the DF phase, we further use parameter $\gamma \in [0,1]$ to control the power allocation between the private and common messages. 

Per the discussion above, the MQ phase of the proposed scheme takes place only in the first channel use. In this single channel use, each relay sends $n\rho$ bits to the receiver.  According to Proposition~\ref{prop:scalar_diamond_relay}, $\log \lfloor 2^{n\rho}\rfloor = n(\rho - o(1))$ bits of information can be conveyed at zero error using any non-zero power.

For the remaining $(n-1)$ channel uses, we employ a DF scheme with power $P$ per channel use. 
Note that the Gaussian primitive diamond channel can be regarded as a degraded Gaussian BC cascaded with a noiseless digital MAC. To apply DF, we use superposition coding with a common and a private message for the Gaussian BC component. The private message is encoded using a Gaussian codebook with i.i.d. $\mc N(0, \gamma P)$ entries. The common message is encoded using a Gaussian codebook with i.i.d. $\mc N(0, (1-\gamma) P)$ entries. Both relays decode the common message. The stronger relay (the relay with the larger-magnitude channel gain) also  decodes the private message. The stronger relay forwards its private message; the remaining capacity of the stronger relay's digital link is used to forward part of the common message. The weaker relay forwards the balance of the common message. 

The receiver decodes the MQ message at zero-error, and obtains the DF messages from the digital links. The following theorem gives the rate achieved using this scheme.

\begin{theorem}
	\label{thm:diam_MQ_DF}
	Assume $|\alpha| \ge |\beta|$ and $\alpha \neq \beta$. The capacity of the Gaussian primitive diamond channel with perfectly correlated noises at the relays is bounded from below by
	\begin{align}
		\label{eq:R_MQ_DF}
		& R_{\diamond,\textnormal{MQ-DF}}(R_1, R_2)  \notag \\
		&=\!\max_{\substack{\gamma \in [0,1]\\ \rho \in [0,R_{\textnormal{min}}]}}\!\!\! \min \left\{ 
		\begin{matrix}
			R_1 + R_2 - \rho\\
			R_1 + \psi(\frac{\bar{\gamma} \beta^2 P}{\gamma \beta^2 P + 1})\\
			\psi(\gamma \alpha^2 P) + \psi(\frac{\bar{\gamma} \beta^2 P}{\gamma \beta^2 P + 1}) + \rho
		\end{matrix}\right\},
	\end{align}
	where $\bar \gamma = 1- \gamma$. 
\end{theorem}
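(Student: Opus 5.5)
The proof will analyze the length-$n$ hybrid block scheme described just before the statement. Channel use 1 is the MQ phase; channel uses $2,\dots,n$ are the DF phase. Fix $\gamma\in[0,1]$ and $\rho\in[0,R_{\min}]$, and let $R_{\sf c}$ and $R_{\sf p}$ denote the DF common and private rates per channel use.

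\emph{MQ phase.} In the first channel use, each relay spends $n\rho$ bits of its $nR_j$-bit budget. Since $n\rho\le nR_{\min}$, Proposition~\ref{prop:scalar_diamond_relay} (applied with relay-link budgets $n\rho$) gives a zero-error $(\lfloor 2^{n\rho}\rfloor,1)$ code meeting per-symbol power $P$. This conveys $n\rho-O(1)$ bits.

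\emph{DF phase, BC component.} Over the remaining $n-1$ uses with power $P$ per symbol, I would invoke standard superposition coding for the degraded Gaussian BC. Since $|\alpha|\ge|\beta|$, relay 1 is the stronger relay; since both noises equal $Z$, each relay individually sees a standard AWGN. Relay 2 treats the private layer as noise and decodes the common message reliably provided
\[
R_{\sf c}<\psi\Bigl(\tfrac{\bar\gamma\beta^2P}{\gamma\beta^2P+1}\Bigr).
\]
Relay 1 can also decode the common message, because $\alpha^2\ge\beta^2$ makes the corresponding SINR larger. It then cancels the common layer and decodes the private message provided $R_{\sf p}<\psi(\gamma\alpha^2P)$. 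Only marginal decoding at each relay matters, so the noise correlation is irrelevant here.

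\emph{DF phase, link allocation.} The residual link budgets are $nR_1-n\rho$ and $nR_2-n\rho$ bits over the block. Relay 1 forwards the private message plus a share of the common message, and relay 2 forwards the rest of the common message. Normalizing by $n$, the allocation is feasible if
\[
R_{\sf p}\le R_1-\rho \qquad\text{and}\qquad R_{\sf p}+R_{\sf c}\le R_1+R_2-2\rho .
\]
The relay-2 constraint $R_{\sf c}-(\text{share}_1)\le R_2-\rho$ is absorbed by choosing the split appropriately. Adding the MQ bits $\rho$, the total rate is $R=\rho+R_{\sf p}+R_{\sf c}$ (up to $(n-1)/n$ factors), subject to:
\begin{itemize}
\item $R\le R_1+R_2-\rho$, from the sum link constraint;
\item $R\le \rho+R_{\sf c}+(R_1-\rho)$, from the private-message link constraint;
\item $R\le \rho+\psi(\gamma\alpha^2P)+\psi(\tfrac{\bar\gamma\beta^2P}{\gamma\beta^2P+1})$, from the BC constraints.
\end{itemize}
With $R_{\sf c}$ at its maximum, the second line becomes $R_1+\psi(\tfrac{\bar\gamma\beta^2P}{\gamma\beta^2P+1})$. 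These three bounds are exactly the three terms of \eqref{eq:R_MQ_DF}. One may also reduce $R_{\sf p}$ or $R_{\sf c}$ to meet whichever term binds, so any rate below the minimum is achievable. Letting $n\to\infty$ removes the $O(1/n)$ losses, the power constraint holds per codeword, and maximizing over $(\gamma,\rho)$ gives the theorem.

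\emph{Main obstacle.} The delicate step is the bookkeeping that turns the BC and link constraints into the stated three-term minimum via a Fourier–Motzkin-type elimination. I would check separately that relay 2's constraint $R_{\sf c}-\text{share}_1\le R_2-\rho$ is implied, given a nonnegative share. The check is that the share can be chosen in
\[
\bigl[\max(0,\,R_{\sf c}-R_2+\rho),\ \min(R_{\sf c},\,R_1-\rho-R_{\sf p})\bigr],
\]
and this interval is nonempty exactly under the listed inequalities. Its nonemptiness uses $\rho\le R_{\min}$ and the nonnegativity of $R_{\sf c}$, $R_{\sf p}$, and $R_j-\rho$.
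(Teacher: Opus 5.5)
Your proposal is correct and follows the paper's proof essentially step for step. Both use the one-shot MQ code of Proposition~\ref{prop:scalar_diamond_relay} at link rate $n\rho$ in a single channel use, superposition coding for the degraded BC in the DF phase, the two link constraints $R_{\text{DF-priv}}+\rho\le R_1$ and $R_{\text{DF-priv}}+R_{\text{DF-comm}}+2\rho\le R_1+R_2$, and Fourier--Motzkin elimination; your nonempty-interval check that relay~2's share of the common message always fits makes explicit a step the paper leaves implicit, without changing the route.
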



\begin{IEEEproof}[Proof]
	Fix $\gamma,\rho$. Denote the common and private message rate (normalized by $n$) of the DF component by $R_{\text{DF-comm}}$ and $R_{\text{DF-priv}}$ respectively. As $n \to \infty$, the superposition coding inner bound of BC gives
	\begin{align}  
		R_{\text{DF-comm}} &\le \psi\left(\frac{(1-\gamma) \beta^2 P}{\gamma \beta^2 P + 1}\right),\label{eq:MQ_DF_rate_constr_DF_11} \\[1mm]
		R_{\text{DF-priv}}& \le \psi(\gamma \alpha^2 P). \label{eq:MQ_DF_rate_constr_DF_12}
	\end{align}
	Meanwhile, in order to forward the MQ message and the DF messages to the receiver via the relay links, the following bounds hold as $n \to \infty$:
	\begin{align}
		R_{\text{DF-priv}} + \rho& \le R_1, \label{eq:MQ_DF_rate_constr_21}\\
		R_{\text{DF-priv}} + R_{\text{DF-comm}} + 2\rho &\le R_1 + R_2. \label{eq:MQ_DF_rate_constr_22}
	\end{align}
	The total message rate for this scheme is $\rho + R_{\text{DF-priv}} + R_{\text{DF-comm}}$. By eliminating the parameters $R'_{\text{DF-comm}}$ and $R'_{\text{DF-priv}}$ via Fourier-Motzkin elimination, and further optimizing over the parameters $\gamma$ and $\rho$, we conclude that the rate expression in \eqref{eq:R_MQ_DF} is achievable.
\end{IEEEproof}

\begin{remark}
	The above proof relies on a time-sharing argument, analogous to the alternative proof of Theorem \ref{thm:capacity_prim_relay}. Theorem \ref{thm:diam_MQ_DF} could also be proved using a nested code construction. Specifically, it suffices to construct a DF code supported on a scaled integer lattice and superimpose on top of the one-shot MQ code described in the proof of Proposition \ref{prop:scalar_diamond_relay}. The resulting proof closely parallels our first proof of Theorem \ref{thm:capacity_prim_relay}. We therefore omit the details.
\end{remark}

The achievability bound \eqref{eq:R_MQ_DF} in Theorem \ref{thm:diam_MQ_DF} is tight in several parameter regimes. In particular, it coincides with the 
cut-set upper bound \eqref{eq:diamond_cutset} at high SNR. Moreover, when the two relay-link capacities differ, the lower and upper bounds also coincide at low SNR. 
The following theorem formalizes this observation. We provide the proof in Appendix \ref{app:capacity_diamond}.

\begin{theorem}
	\label{thm:capacity_diamond}
	Assume $|\alpha| \ge |\beta|$. The capacity of the Gaussian primitive diamond channel with perfectly correlated noises is characterized in the following cases:
	\begin{enumerate}[(a)]
		\item For $P \ge \frac{2^{2R_2} - 1}{\beta^2} + \frac{2^{2R_2}(2^{2R_1} - 1)}{\alpha^2}$,
		\begin{equation}
			\label{eq:C_diam_high_SNR}
			C_{\diamond}(R_1,R_2) = R_1 + R_2.
		\end{equation}
		\item If $R_1 > R_2$, then for $P \le  \frac{2^{2(R_1 - R_2)} - 1}{\alpha^2}$, 
		\begin{equation}
			\label{eq:C_diam_low_SNR_1}
			C_{\diamond}(R_1,R_2) = R_2+\psi(\alpha^2 P).
		\end{equation}
		If $R_2 > R_1$, then for $P \le  \frac{2^{2(R_2 - R_1)} - 1}{\beta^2}$,
		\begin{equation}
			\label{eq:C_diam_low_SNR_2}
			C_{\diamond}(R_1,R_2) = R_1+\psi(\beta^2 P).
		\end{equation}
	\end{enumerate}
	Furthermore, if $R_1 = R_2 = \bar R$, then as $P \to 0$, 
	\begin{equation}
		\label{eq:C_diam_asymp_low_SNR}
		C_{\diamond}(\bar R, \bar R) \to \bar R.
	\end{equation}
\end{theorem}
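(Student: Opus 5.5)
The plan is to sandwich the capacity between the hybrid MQ-DF lower bound of Theorem~\ref{thm:diam_MQ_DF} and the cut-set upper bound \eqref{eq:diamond_cutset} of Proposition~\ref{prop:bd_diamond}. In each regime I will exhibit a specific choice of $(\gamma,\rho)$ for which the min in \eqref{eq:R_MQ_DF} equals one of the terms of the cut-set minimum. Throughout I use the standing assumption of this section that $\alpha\neq\beta$, so that both results apply.

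For part (a), the cut-set bound is at most $R_1+R_2$, so it suffices to make the MQ-DF expression equal $R_1+R_2$. I take $\rho=0$, i.e.\ pure DF. Then the first term is $R_1+R_2$. The second term is $\ge R_1+R_2$ iff the common-message rate satisfies $\psi\big(\frac{\bar\gamma\beta^2P}{\gamma\beta^2P+1}\big)\ge R_2$. I choose $\gamma$ so that this holds with equality, i.e.\ $\beta^2P+1=2^{2R_2}(\gamma\beta^2P+1)$, giving
\begin{equation*}
\gamma P=\frac{(\beta^2P+1)2^{-2R_2}-1}{\beta^2}.
\end{equation*}
This is a valid $\gamma\in[0,1]$ exactly when $P\ge(2^{2R_2}-1)/\beta^2$. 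The third term then equals $\psi(\gamma\alpha^2P)+R_2$, so it is $\ge R_1+R_2$ iff $\gamma\alpha^2P\ge 2^{2R_1}-1$. Substituting $\gamma P$, this rearranges to $P\ge\frac{2^{2R_2}-1}{\beta^2}+\frac{2^{2R_2}(2^{2R_1}-1)}{\alpha^2}$, which is precisely the hypothesis of (a). If $\beta=0$ the hypothesis is vacuous, read as $P=\infty$.

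For part (b) with $R_1>R_2$, I take $\rho=R_2=R_{\min}$ and $\gamma=1$, so all DF power goes to the private message. The three terms become $R_1$, $R_1+\psi(0)=R_1$, and $\psi(\alpha^2P)+R_2$. The minimum is therefore $R_2+\psi(\alpha^2P)$ whenever $\psi(\alpha^2P)\le R_1-R_2$, i.e.\ $P\le (2^{2(R_1-R_2)}-1)/\alpha^2$. Since $R_2+\psi(\alpha^2P)$ is one of the cut-set terms, the bounds match.

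For $R_2>R_1$, I take $\rho=R_1$ and $\gamma=0$. The terms become $R_2$, $R_1+\psi(\beta^2P)$, and $\psi(\beta^2P)+R_1$. The minimum is $R_1+\psi(\beta^2P)$ when $\psi(\beta^2P)\le R_2-R_1$, which again equals a cut-set term.

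For the symmetric asymptotic statement, the MQ rate \eqref{eq:naive_MQ_diam} gives $C_\diamond(\bar R,\bar R)\ge\bar R$ for every $P>0$. The limit computation already displayed after \eqref{eq:naive_MQ_diam} shows the cut-set bound tends to $\bar R$, and squeezing yields \eqref{eq:C_diam_asymp_low_SNR}.

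The only step requiring care is (a): choosing $\gamma$ to saturate the common-rate constraint and verifying $\gamma\in[0,1]$ together with the algebraic equivalence to the stated power threshold. The remaining cases are direct substitutions.
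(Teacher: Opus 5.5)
Your proof is correct. Parts (b) and the $P\to0$ limit match the paper's argument: the same choices $(\rho,\gamma)=(R_2,1)$ and $(R_1,0)$, and the same squeeze between the MQ rate and the cut-set limit. Your $R_2>R_1$ computation is in fact cleaner than the corresponding passage in the paper, which has copy-paste slips there.

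The real difference is in part (a). With $\rho=0$, the admissible $\gamma$ form an interval. The second term requires $\gamma\le\gamma_{\max}$, which is your saturating choice. The third term, $R_\alpha(\gamma)+R_\beta(\gamma)$, is nondecreasing in $\gamma$ when $|\alpha|\ge|\beta|$, so it requires $\gamma\ge\gamma_{\min}$.

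The paper uses the lower endpoint: it saturates the sum-rate term. That costs it three things: a separate case with $\gamma=0$ when $\psi(\beta^2P)\ge R_1+R_2$, an explicit sign check that $\tilde\gamma\in[0,1]$, and a separate verification that $\psi(\tilde\gamma\alpha^2P)\le R_1$.

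You use the upper endpoint. It lies in $[0,1]$ automatically under the hypothesis, so no case split is needed, and your final step is an exact equivalence with the stated threshold. This gives more than brevity. Combined with the monotonicity above, it shows the threshold in (a) is exactly the condition under which MQ-DF reaches $R_1+R_2$. It also identifies that threshold as the minimum power for superposition coding to deliver private rate $R_1$ to the stronger relay and common rate $R_2$ to the weaker one.

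Finally, your explicit appeal to $\alpha\neq\beta$ is necessary, not cosmetic. The statement assumes only $|\alpha|\ge|\beta|$, but for $\alpha=\beta$ the capacity is $\min\{\psi(\alpha^2P),R_1+R_2\}$. That contradicts part (b) and the limit $\bar R$ as soon as the link rates are positive.
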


We now provide intuition on why
the MQ-DF scheme achieves capacity on the diamond channel with perfectly correlated noises in these two distinct regimes. In the high-SNR regime, the optimal rate-allocation parameter in \eqref{eq:R_MQ_DF} is $\rho^* = 0$.  In this regime, MQ is not used, and the hybrid scheme reduces to DF. This is capacity achieving, because when the transmit power $P$ is sufficiently large each relay can decode the transmitted message at a rate equal to its outgoing bit-pipe capacity. This saturates the multiple-access cut-set bound. 

In the more interesting low-SNR regime, the optimum $\rho$ is $\rho^* = R_{\textnormal{min}}$.  This means that the MQ component operates at its largest permissible rate. The DF component becomes degenerate. It now operates on the two-hop relay channel involving only the relay with higher forward link capacity with outgoing link capacity $R_{\text{ex}} \triangleq |R_1 - R_2|$. In this low-SNR regime, the bottleneck of this two-hop relay channel is the first hop, i.e., the single-user Gaussian channel between the transmitter and the relay with excess forward link capacity. The overall achievable rate is therefore the sum of $R_{\rm min}$ and this Gaussian channel capacity, which coincides with one of the cross cut-set bounds. Notably, when the two relay links have the same capacity, this scheme reduces to the pure MQ strategy and achieves the channel capacity in the low-SNR limit.


In Fig.~\ref{fig:bounds_diam}(a) we plot the new achievability bounds \eqref{eq:naive_MQ_diam} and \eqref{eq:R_MQ_DF} together with other known capacity bounds discussed in Section \ref{sec:known_bd_diam}. These bounds are plotted as functions of SNR, and the model parameters are given by $\alpha = -\beta = 1$, $R_1 = 2$, $R_2 = 1$. It is observed that MQ–DF outperforms all other schemes and matches the cut-set bound, and hence is \emph{capacity-achieving} in both high-SNR and low-SNR regimes.

\subsection{Anti-Symmetric Gaussian Primitive Diamond Channel}
\label{sec:bounds_sym}

To better understand how the achievability bound in Theorem \ref{thm:diam_MQ_DF} behaves in the moderate-SNR regime (not covered by the capacity results in Theorem \ref{thm:capacity_diamond}), we investigate a scenario where the relay gains have equal magnitudes but opposite signs, and relay-link capacities are the same, then compare the rate achieved by MQ-DF with other known capacity bounds for this special case.

Formally, the \textit{anti-symmetric Gaussian primitive diamond channel} is a special instance of the Gaussian primitive diamond channel where $R_1 = R_2 = \bar R$ and $\alpha = -\beta = 1$. It is straightforward that the MQ achievable rate \eqref{eq:naive_MQ_diam} is given by 
\begin{equation}
	\label{eq:naive_MQ_sym_diam}
	R_{\textnormal{sym},\diamond,\textnormal{MQ}}(\bar R)= R_{\diamond,\textnormal{MQ}}(\bar R, \bar R) = \bar R.
\end{equation}
It turns out that several other capacity bounds can also be computed explicitly in this anti-symmetric setting. The following proposition shows that the cut-set upper bound, the DF lower bound, and the MQ–DF lower bound all take simpler forms. We present a proof of this proposition in Appendix \ref{app:sym_diam_simplify_bds}.

\begin{figure*}[t]
	\centering
	\begin{subfigure}[t]{0.4\linewidth}
		\centering
		\includegraphics[width=\linewidth]{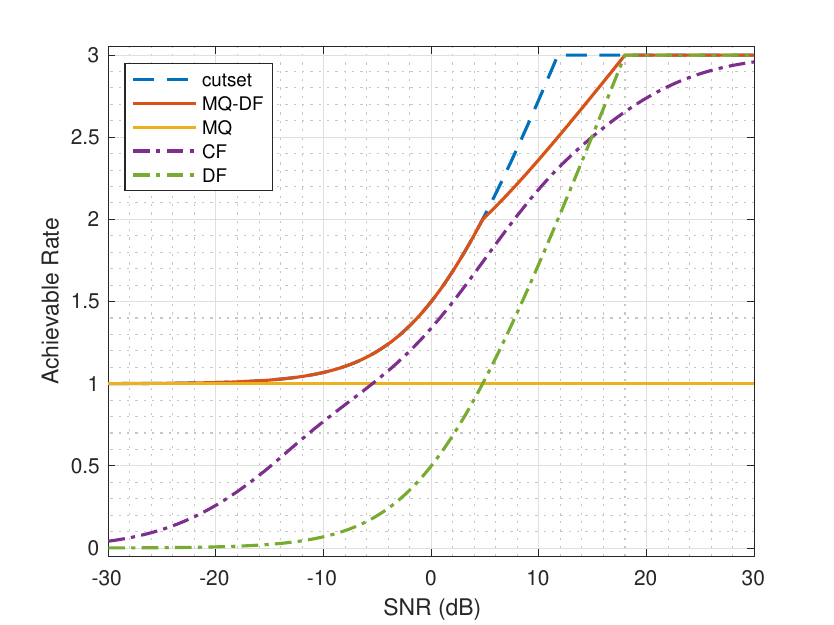}
		\caption{$\lambda=1$, $\alpha = -\beta = 1$, $R_1 = 2$, $R_2 = 1$}
	\end{subfigure}        
	\hspace{2mm}
	\begin{subfigure}[t]{0.4\linewidth}
		\centering
		\includegraphics[width=\linewidth]{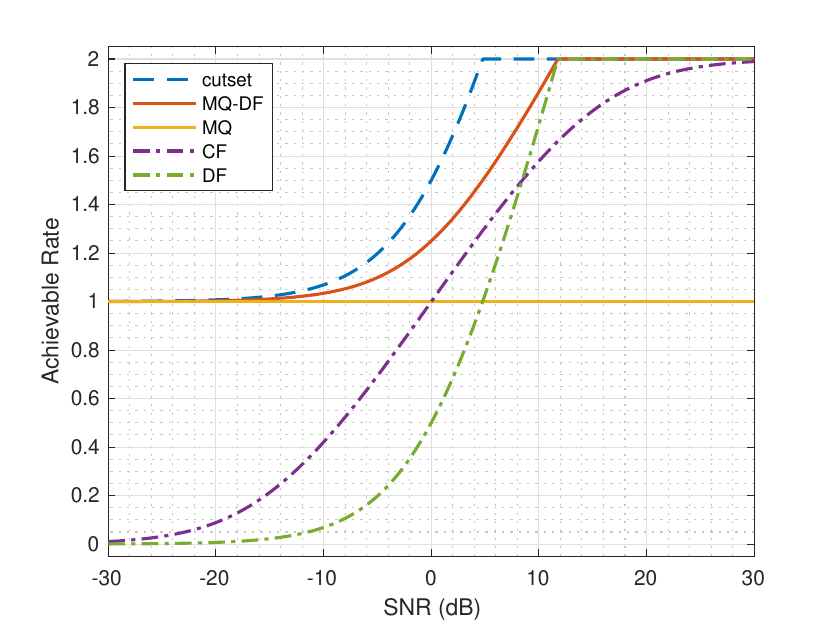}
		\caption{$\lambda=1$, $\alpha = -\beta = 1$, $R_1 = R_2 = 1$}
	\end{subfigure}        
	\hspace{1mm}
	
	\caption{Comparison of capacity bounds for the Gaussian primitive diamond channel.}
	\label{fig:bounds_diam}
\end{figure*} 

\begin{proposition}
	\label{prop:sym_diam_simplify_bds}
	For the anti-symmetric Gaussian primitive diamond channel, the cut-set upper bound \eqref{eq:diamond_cutset}, the DF lower bound \eqref{eq:diamond_DF}, and the MQ-DF lower bound \eqref{eq:R_MQ_DF} are given as below:
	\begin{align}
		\textnormal{Cutset}_{\textnormal{sym},\diamond}(\bar R) &=  \begin{cases}
			\bar R + \psi(P), &\psi(P) < \bar R,\\
			2\bar R, &\psi(P) \ge \bar R.\\
		\end{cases}\label{eq:sym_diam_cutset_2}\\
		R_{\textnormal{sym},\diamond,\textnormal{DF}}(\bar R) &= \begin{cases}
			\psi(P), &\psi(P) < 2\bar R,\\
			2\bar R, &\psi(P) \ge 2\bar R.\\
		\end{cases}\label{eq:sym_diam_DF_2}\\
		R_{\textnormal{sym},\diamond,\textnormal{MQ-DF}}(\bar R) 
		&= \begin{cases}
			\bar R + \frac{\psi(P)}{2}, &\psi(P) < 2\bar R,\\
			2\bar R, &\psi(P) \ge  2\bar R.
		\end{cases}\label{eq:sym_diam_MQ_DF_2}
	\end{align}
\end{proposition}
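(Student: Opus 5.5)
The plan is to substitute $\alpha = 1$, $\beta = -1$ and $R_1 = R_2 = \bar R$ into Proposition~\ref{prop:bd_diamond} and Theorem~\ref{thm:diam_MQ_DF}, and then solve each max–min explicitly. Since $\alpha^2 = \beta^2 = 1$, every $\psi(\alpha^2 P)$ and $\psi(\beta^2 P)$ becomes $\psi(P)$, and $R_{\rm min} = \bar R$. The one identity doing real work is the chain rule for the superposition terms:
\begin{equation*}
\psi(\gamma P) + \psi\!\left(\tfrac{\bar\gamma P}{\gamma P + 1}\right) = \tfrac12 \log\!\left[(1+\gamma P)\cdot \tfrac{1+P}{1+\gamma P}\right] = \psi(P),
\end{equation*}
which holds for every $\gamma \in [0,1]$. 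The other fact I will use is that $\psi(\bar\gamma P/(\gamma P+1))$ is decreasing in $\gamma$, with maximum $\psi(P)$ at $\gamma = 0$.

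\emph{Cut-set.} Expression \eqref{eq:diamond_cutset} becomes $\min\{2\bar R,\ \bar R + \psi(P)\}$, with the cross term appearing twice. Comparing $\psi(P)$ with $\bar R$ gives \eqref{eq:sym_diam_cutset_2} directly.

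\emph{DF.} By the identity, the third term of \eqref{eq:diamond_DF} equals $\psi(P)$ for every $\gamma$, so the objective is at most $\min\{2\bar R, \psi(P)\}$ for every $\gamma$. Taking $\gamma = 0$ makes the second term $\bar R + \psi(P) \ge \psi(P)$, so it never binds and the bound $\min\{2\bar R, \psi(P)\}$ is attained. This gives \eqref{eq:sym_diam_DF_2}.

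\emph{MQ-DF.} Again by the identity, the third term of \eqref{eq:R_MQ_DF} is $\psi(P) + \rho$ for every $\gamma$. The first term does not depend on $\gamma$, and the second is maximized at $\gamma = 0$, so $\gamma = 0$ is optimal. The problem then reduces to
\begin{equation*}
\max_{\rho \in [0,\bar R]} \min\{2\bar R - \rho,\ \bar R + \psi(P),\ \psi(P) + \rho\}.
\end{equation*}
\begin{itemize}
\item If $\psi(P) < 2\bar R$, the minimum of the first and third terms is at most their average, $\bar R + \psi(P)/2$, for every $\rho$. Equality holds at $\rho^* = (2\bar R - \psi(P))/2$, which lies in $(0, \bar R]$. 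The middle term $\bar R + \psi(P)$ is at least $\bar R + \psi(P)/2$, so it does not bind. The value is $\bar R + \psi(P)/2$.
\item If $\psi(P) \ge 2\bar R$, the first term caps the value at $2\bar R$. The cap is attained at $\rho = 0$, since then the other two terms are $\bar R + \psi(P)$ and $\psi(P)$, both at least $2\bar R$.
\end{itemize}
This gives \eqref{eq:sym_diam_MQ_DF_2}.

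The only step needing care is recognizing that the chain-rule identity removes $\gamma$ from the binding terms; without it the joint optimization over $(\gamma, \rho)$ looks harder than it is. The rest is checking that $\rho^*$ is feasible and that the non-binding terms really are slack.
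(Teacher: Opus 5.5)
Your proposal is correct and follows essentially the same route as the paper: substitute the anti-symmetric parameters, use the identity $\psi(\gamma P)+\psi\bigl(\bar\gamma P/(\gamma P+1)\bigr)=\psi(P)$ to remove $\gamma$ from the binding term, set $\gamma=0$, and solve the remaining one-dimensional problem in $\rho$. Your ``minimum is at most the average'' step is just a compact version of the paper's increasing-then-decreasing turning-point argument for locating $\rho^*$.
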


By comparing the bounds above, we observe that the cut-set bound is attainable by both the DF scheme and the MQ-DF scheme whenever $\psi(P) \ge 2\bar R$. This behaviour is consistent with the high SNR result \eqref{eq:C_diam_high_SNR} in Theorem \ref{thm:capacity_diamond}. In contrast, when $\psi(P) < 2\bar R$, the MQ-DF lower bound strictly improves upon the DF lower bound, demonstrating the advantage of incorporating MQ in this regime. In particular, the optimal rate-allocation parameter $\rho$ for the MQ-DF lower bound \eqref{eq:R_MQ_DF} is 
\begin{equation}
	\label{eq:rho_star}
	\rho^* = \max\left\{0,\bar R - \frac{\psi(P)}{2}\right\}.
\end{equation}
For fixed $\bar R$, this optimum $\rho^*$ converges to $\bar R$ when $P \to 0$, implying that the MQ-DF scheme reduces to the naive MQ scheme in the low power limit. This is again consistent with the discussion in the previous subsections.

For the CF lower bound \eqref{eq:diamond_CF}, due to the symmetry at the two relays, it is optimal to choose identical quantization noise levels, i.e., setting $N_1 = N_2 = N$. As a result, the CF rate for the anti-symmetric Gaussian primitive diamond channel reduces to
\begin{align}
	\label{eq:sym_cf_1}
	R_{\textnormal{sym},\diamond,\textnormal{CF}}(\bar R) = \max_{N > 0} \min \left\{\begin{matrix}
		\frac12 \log \left(1+\frac{2P}{N}\right)\\
		\bar R + \frac12 \log \left(1 + \frac{P-1}{N+2}\right)\\
		2\bar R + \frac12 \log \left(\frac{N}{N+2}\right)
	\end{matrix}\right\}.
\end{align}
The optimization program in \eqref{eq:sym_cf_1} does not admit a simple closed-form solution. Nevertheless, in the following proposition we compare the CF bound to the MQ and the hybrid MQ-DF bounds. The proof is deferred to Appendix \ref{app:sym_cf}.

\begin{proposition}
	\label{prop:sym_cf}
	The CF lower bound $R_{\textnormal{sym},\diamond,\textnormal{CF}}(\bar R)$ for the anti-symmetric Gaussian primitive diamond channel satisfies the following:
	
	(a) For all $P > 0$, $R_{\textnormal{sym},\diamond,\textnormal{CF}}(\bar R) < R_{\textnormal{sym},\diamond,\textnormal{MQ-DF}}(\bar R)$;
	
	(b)
	\begin{equation}
		R_{\textnormal{sym},\diamond,\textnormal{CF}}(\bar R) \begin{cases}
			< R_{\textnormal{sym},\diamond,\textnormal{MQ}}(\bar R) = \bar R, & P < 1,\\
			= R_{\textnormal{sym},\diamond,\textnormal{MQ}}(\bar R)= \bar R, & P = 1,\\
			> R_{\textnormal{sym},\diamond,\textnormal{MQ}}(\bar R)=\bar R, & P > 1.\\
		\end{cases}
	\end{equation}
\end{proposition}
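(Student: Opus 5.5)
Both parts reduce to elementary comparisons among the three branches of \eqref{eq:sym_cf_1}. Write them as
\begin{align*}
f_1(N)&=\tfrac12\log\!\left(1+\tfrac{2P}{N}\right),\\
f_2(N)&=\bar R+\tfrac12\log\!\left(1+\tfrac{P-1}{N+2}\right),\\
f_3(N)&=2\bar R+\tfrac12\log\tfrac{N}{N+2}.
\end{align*}
These are continuous on $N>0$. $f_1$ is strictly decreasing, with $f_1\to\infty$ as $N\to0$ and $f_1\to0$ as $N\to\infty$. $f_3$ is strictly increasing, with $f_3\to-\infty$ as $N\to0$ and $f_3\to 2\bar R$ from below as $N\to\infty$. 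Set $g(N)=\min\{f_1(N),f_2(N),f_3(N)\}$. Since $g\le\min\{f_1,f_3\}$ and $\min\{f_1,f_3\}$ tends to $-\infty$ or $0$ at the two ends, the supremum of $g$ is attained at some finite $N^\circ>0$. This attainment is what makes the strict inequalities work: it suffices to show strict bounds pointwise in $N$.

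For part (a), we use the closed form \eqref{eq:sym_diam_MQ_DF_2} from Proposition~\ref{prop:sym_diam_simplify_bds}.
\begin{itemize}
\item If $\psi(P)\ge 2\bar R$, then $R_{\textnormal{sym},\diamond,\textnormal{CF}}(\bar R)=g(N^\circ)\le f_3(N^\circ)<2\bar R$, which is the claim.
\item If $\psi(P)<2\bar R$, bound the minimum by the average of $f_1$ and $f_3$:
\begin{equation*}
g(N)\le \tfrac12(f_1+f_3)=\bar R+\tfrac14\log\frac{N+2P}{N+2}.
\end{equation*}
Since $N+2P<(1+P)(N+2)$ for every $N>0$ (the difference is $2+NP>0$), this gives $g(N)<\bar R+\tfrac14\log(1+P)=\bar R+\psi(P)/2$. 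Evaluating at $N^\circ$ gives the strict inequality.
\end{itemize}

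For part (b), the sign of $P-1$ controls $f_2$, and $f_1,f_3$ cross the level $\bar R$ at explicit points.
\begin{itemize}
\item If $P<1$, then $f_2(N)<\bar R$ for all $N$, so $g(N^\circ)<\bar R$.
\item Solving the thresholds: $f_1(N)\ge\bar R$ iff $N\le 2P/(2^{2\bar R}-1)$, and $f_3(N)\ge \bar R$ iff $N\ge N^*\triangleq 2/(2^{2\bar R}-1)$.
\item If $P=1$, then $f_2\equiv\bar R$, so $g\le\bar R$. The two thresholds coincide at $N^*$, where $f_1(N^*)=f_3(N^*)=\bar R$, so $g(N^*)=\bar R$ and the CF rate equals $\bar R$.
\item If $P>1$, then $f_2>\bar R$ everywhere. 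At $N^*$ we have $f_3(N^*)=\bar R$ and $f_1(N^*)=\tfrac12\log(1+P(2^{2\bar R}-1))>\bar R$. Take $N=N^*+\epsilon$ for small $\epsilon>0$. Then $f_3>\bar R$ by strict monotonicity, and $f_1>\bar R$ still holds by continuity, so $g(N)>\bar R$. Hence the CF rate exceeds $\bar R$.
\end{itemize}

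The only step needing a real idea is the averaging bound in part (a) for $\psi(P)<2\bar R$. Using $f_2$ there instead would fail for large $P$, because $f_2$ grows with $P$ in a way that does not match $\bar R+\psi(P)/2$. The rest is monotonicity bookkeeping plus the attainment argument needed for strictness.
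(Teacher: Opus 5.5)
Your proof is correct, and it takes a genuinely different route from the paper. The paper first finds the maximizer explicitly. It solves $f_1(N)=f_3(N)$ to get $N^*=\frac{P+1+\sqrt{(P-1)^2+4P\cdot 2^{4\bar R}}}{2^{4\bar R}-1}$, checks that $f_2(N^*)\ge f_3(N^*)$ so the middle branch is inactive, and obtains the closed form $R_{\textnormal{sym},\diamond,\textnormal{CF}}(\bar R)=\psi(2P/N^*)$. Part (a) then takes a lower bound $\sqrt{(P-1)^2+4P\cdot 2^{4\bar R}}\ge 2\sqrt{P}\,2^{2\bar R}$ followed by a fairly long algebraic case split on $P\ge 1$ versus $P<1$. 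Part (b) compares $N^*$ against $2P/(2^{2\bar R}-1)$ case by case.

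You never identify the optimizer. Part (a) follows in one line from $\min\{f_1,f_3\}\le\frac12(f_1+f_3)=\bar R+\frac14\log\frac{N+2P}{N+2}$. Part (b) follows from the sign of $P-1$, which puts $f_2$ above, on, or below the level $\bar R$, together with the level-$\bar R$ thresholds of $f_1$ and $f_3$. Your argument is shorter and avoids all square-root manipulation. The paper's route, in exchange, yields the explicit closed form of the CF rate, which it reuses in Appendix~\ref{app:sym_mq_cf} to differentiate the MQ-CF objective.

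One small justification is missing in your attainment step. The endpoint limits alone do not force the supremum of $g$ to be attained; you also need $\sup_N g>0$. This holds because all three branches are positive for large $N$. Alternatively, note that if the supremum were not attained it would be at most $0$, and every strict inequality you need would then hold trivially.

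In the second bullet of (a) you do not need attainment at all. There $\frac14\log\frac{N+2P}{N+2}\le\frac14\log\max\{1,P\}<\frac14\log(1+P)$ holds uniformly in $N$, so the gap is uniform.
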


The above proposition reveals the suboptimality of CF for this channel. In particular, part (a) shows that the MQ-DF coding scheme strictly outperforms CF scheme for all values of $P$, while part (b) further asserts that when the SNR is smaller than $1$, even MQ alone achieves a higher rate than CF. Taken together, these results emphasize that MQ-based schemes, despite their simplicity, are inherently more suited to the structure of the anti-symmetric Gaussian primitive diamond channel with perfectly correlated noises, while CF with Gaussian codebooks and independent Gaussian quantization fails to exploit this structure.

Inspired by the achievability result in Theorem \ref{thm:diam_MQ_DF}, one might wonder whether time-sharing between MQ and CF could yield a better achievable rate than either MQ or CF alone. Imitating the derivation in the proof of Theorem \ref{thm:diam_MQ_DF}, we derive an achievable rate for the MQ-CF coding scheme. The rate simplifies to the following for the anti-symmetric setting: 
\begin{align}
	\label{eq:sym_mq_cf}
	R_{\textnormal{sym},\diamond,\textnormal{MQ-CF}}(\bar R)  = \max_{\substack{N > 0\\\rho \in [0,\bar R]}} \!\!\min \left\{\begin{matrix}
		\frac12 \log \left(1+\frac{2P}{N}\right) + \rho\\
		\bar R + \frac12 \log \left(1 + \frac{P-1}{N+2}\right)\\
		2\bar R + \frac12 \log \left(\frac{N}{N+2} \right)- \rho
	\end{matrix}\right\}.
\end{align}
The following surprising result asserts that rate achievable by the MQ-CF scheme \eqref{eq:sym_mq_cf} is in fact equal to the higher of the CF rate \eqref{eq:sym_cf_1} and the naive MQ rate $\bar R$. Combining with Propositions \ref{prop:sym_diam_simplify_bds} and \ref{prop:sym_cf}, we conclude that the MQ-CF lower bound is dominated by the MQ-DF lower bound at all power levels. The proof of Proposition \ref{prop:sym_mq_cf} is given in Appendix \ref{app:sym_mq_cf}.
\begin{proposition}
	\label{prop:sym_mq_cf}
	The MQ-CF lower bound $R_{\textnormal{sym},\diamond,\textnormal{MQ-CF}}(\bar R)$ for the anti-symmetric Gaussian primitive diamond channel is given by
	\begin{align}
		R_{\textnormal{sym},\diamond,\textnormal{MQ-CF}}(\bar R) &= \max\{R_{\textnormal{sym},\diamond,\textnormal{CF}}(\bar R), \bar R\}  \\ &= \begin{cases}
			\bar R,~&P \le 1,\\
			R_{\textnormal{sym},\diamond,\textnormal{CF}}(\bar R),~&P> 1.
		\end{cases}
		\label{eq:prop_MQ_CF}
	\end{align}
\end{proposition}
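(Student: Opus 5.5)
The plan is to prove the two inequalities $R_{\textnormal{MQ-CF}} \ge \max\{R_{\textnormal{CF}},\bar R\}$ and $R_{\textnormal{MQ-CF}} \le \max\{R_{\textnormal{CF}},\bar R\}$ directly from \eqref{eq:sym_mq_cf}. The second equality in \eqref{eq:prop_MQ_CF} then follows from Proposition~\ref{prop:sym_cf}(b).

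The lower bound is immediate. Setting $\rho=0$ in \eqref{eq:sym_mq_cf} recovers \eqref{eq:sym_cf_1}. Setting $\rho=\bar R$ and letting $N\to\infty$ sends the first term to $\bar R$, the second term to $\bar R$, and the third term to $\bar R$. Hence $\bar R$ is approached as a supremum, which is enough.

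For the upper bound, write $A(N)=\frac12\log(1+2P/N)$, $B(N)=\bar R+\frac12\log(1+\frac{P-1}{N+2})$ and $C(N)=2\bar R+\frac12\log\frac{N}{N+2}$. Fix $N$. The objective in $\rho$ is $\min\{A+\rho,\,B,\,C-\rho\}$, which is concave and piecewise linear. Its maximizer over $[0,\bar R]$ either makes $A+\rho=C-\rho$, occurs at an endpoint, or is capped by $B$.

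We then split into cases on $P$.
\begin{enumerate}[(i)]
\item \emph{Case $P\le 1$.} Here $B\le \bar R$ for every $N$, so the MQ-CF objective is at most $\bar R$, and the bound holds trivially.
\item \emph{Case $P>1$.} We must show that no interior $\rho$ beats the CF optimum. The key observation is that $(A+C)/2$ is the value reached at the balancing point $\rho=(C-A)/2$. Whenever that point is feasible, the rate is bounded by $\min\{B,(A+C)/2\}$. We will show that $\max_N \min\{B,(A+C)/2\}\le R_{\textnormal{CF}}$ when $P>1$. Equivalently, for every $N$ with $A(N)<C(N)$ there is some $N'$ with $\min\{A,B,C\}(N')\ge\min\{B,(A+C)/2\}(N)$.
\end{enumerate}

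To handle case (ii), we will compute
\[
\tfrac{A+C}{2}=\bar R+\tfrac14\log\frac{N+2P}{N+2}.
\]
Compare this with $B-\bar R=\frac12\log\frac{N+1+P}{N+2}$. Since $(N+1+P)^2\ge(N+2P)(N+2)$ reduces to $(P-1)^2\ge0$, we get $(A+C)/2\le B$. So the MQ-CF value at fixed $N$ is at most $\bar R+\frac14\log\frac{N+2P}{N+2}$ for $P>1$, and this quantity is decreasing in $N$. Meanwhile the constraint $\rho\ge 0$ requires $C\ge A$, which bounds $N$ from below.

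The main obstacle is showing that this constrained maximum never exceeds the CF value. The approach is as follows. At the smallest admissible $N$, where $A=C$, the balancing point is $\rho=0$, and the value there equals $\min\{A,B,C\}$ at that $N$, which is at most $R_{\textnormal{CF}}$. For larger $N$, the value only decreases, except where the endpoint $\rho=\bar R$ binds. In that regime the rate is at most $\min\{A+\bar R,\,B,\,C-\bar R\}\le C-\bar R<\bar R$. I expect the delicate part to be checking the monotonicity and endpoint bookkeeping carefully, including the case where $B$ is the active constraint at the CF optimum.
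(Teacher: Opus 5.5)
Your proof is correct, and it takes a genuinely different route from the paper's. The paper substitutes $\bar\rho=\bar R-\rho$ and rewrites the MQ-CF rate as $\max_{\bar\rho\in[0,\bar R]} g(\bar\rho)$, where $g(\bar\rho)=R_{\textnormal{sym},\diamond,\textnormal{CF}}(\bar\rho)+\bar R-\bar\rho$ is the best split of each link between CF and MQ. For $P\le 1$ it bounds $g\le\bar R$ using Proposition~\ref{prop:sym_cf}(b). For $P>1$ it inserts the closed-form CF optimizer $N^*$ from Appendix~\ref{app:sym_cf} and differentiates, proving $g'>0$ via the factorization $S_1+S_2=S_0(P-1)(1+w)$.

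You instead fix $N$ and optimize over $\rho$ first. Your tools are:
\begin{itemize}
\item the averaging bound $\min\{A+\rho,C-\rho\}\le\frac12(A+C)=\bar R+\frac14\log\frac{N+2P}{N+2}$;
\item the identity $(N+1+P)^2-(N+2P)(N+2)=(P-1)^2$, which shows $\frac12(A+C)\le B$;
\item the monotonicity of $\frac{N+2P}{N+2}$ in $N$.
\end{itemize}
This avoids the closed form of $N^*$ and all calculus. Your $P\le 1$ case is a one-liner, since the $\rho$-free term $B$ is at most $\bar R$. Combined with the choice $\rho=0$, it even gives $R_{\textnormal{sym},\diamond,\textnormal{CF}}(\bar R)\le\bar R$ for free. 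So Proposition~\ref{prop:sym_cf}(b) is needed only for the second equality when $P>1$. In exchange, the paper's route yields the structural fact that $g$ is increasing, which identifies the optimal rate split explicitly.

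There are two points of presentation to fix.

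First, the sentence ``$\max_N\min\{B,(A+C)/2\}\le R_{\textnormal{CF}}$'' is false if $N$ ranges over all of $(0,\infty)$. As $N\to0$ the averaging bound tends to $\bar R+\frac14\log P$, which can exceed $R_{\textnormal{sym},\diamond,\textnormal{CF}}(\bar R)$: for $P=4$, $\bar R=1$ it is $1.5$ versus about $1.37$. You must restrict to $N\ge N^*$, where $A(N^*)=C(N^*)$. You should also state that for $N<N^*$ the inequality $C<A$ makes $\rho=0$ optimal, which gives exactly the CF objective at that $N$.

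Second, the ``delicate part'' you anticipate is already closed by your own ingredients.
\begin{itemize}
\item The averaging bound holds for every $\rho$, so no endpoint bookkeeping is needed. In fact $C-A<2\bar R$ always, so $\rho=\bar R$ never binds.
\item Because $\frac12(A+C)\le B$ everywhere, the CF objective at $N^*$ equals $A(N^*)=\frac12(A+C)(N^*)$. For $P>1$ this dominates $\frac12(A+C)(N)$ for all $N\ge N^*$.
\item That value is at most $R_{\textnormal{sym},\diamond,\textnormal{CF}}(\bar R)$ by definition, so you never need to know where the CF maximum is attained.
\end{itemize}
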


In Fig.~\ref{fig:bounds_diam}(b) we compare all known capacity bounds (except the MQ-CF bound in Proposition \ref{prop:sym_mq_cf}) for the anti-symmetric Gaussian primitive diamond channel. The bounds are plotted across the entire SNR range. Again, it can be seen that the hybrid MQ-DF scheme achieves the highest rate among all schemes, in accordance to the discussion above; it is \emph{capacity-achieving} in the low SNR limit.

\begin{remark}
	In \cite{sanderovich2008communication}, the Gaussian primitive diamond
	channel with zero noise correlation and a \textit{nomadic} transmitter is
	studied. Nomadic means that the relays do not have the knowledge of the 
	codebook used by the transmitter.  In this setting, \cite{sanderovich2008communication} 
	shows that for the CF strategy, if the codebook at the transmitter is restricted to 
	be Gaussian, then it is optimal to use Gaussian quantization at the relays. 
	In addition, the authors state that it is not always optimal to use CF with a Gaussian
	codebook for this channel. They provide a counterexample where CF with binary phase-shift keying (BPSK) outperforms CF with Gaussian codebook and Gaussian quantization.
	However, 
	there appears to be an error in the numerical calculation in \cite{sanderovich2008communication} (see Appendix \ref{app:bpsk_Gaussian_CF}).
	The counterexample no longer stands after the error is fixed. 
	
	Nonetheless, when the noises at the relays are perfectly correlated, 
	Proposition \ref{prop:sym_cf}(b) provides a corrected example demonstrating that 
	the use of CF with Gaussian codebook and Gaussian quantization is suboptimal. 
	Consider the naive MQ
	coding scheme for the anti-symmetric Gaussian primitive diamond channel with
	relay-link capacities $\bar R = \log M$ for some positive integer $M$. 
	When $P < 1$, the MQ approach achieves a strictly higher rate than CF with Gaussian codebooks and Gaussian quantization.  
\end{remark}


\section{Non-Perfect Correlation}

In this section, we extend the MQ coding framework to the Gaussian primitive relay and diamond channels with non-perfectly correlated noises. The primary focus is the regime of strong positive correlation, i.e., $\lambda \approx 1$, which can be viewed as a perturbation of the perfectly correlated case. In this regime, the high noise correlation allows suitably adapted scalar MQ schemes (i.e., without superposition or time-sharing, cf. Section \ref{sec:superposed_MQ} and Section \ref{sec:MQ_DF}) to retain some of their advantages. We show that such MQ schemes can still yield achievability results comparable to classical coding strategies for both channels. 

\subsection{Gaussian Primitive Relay Channel}
\label{sec:prim_relay_nonperfect}

The Gaussian primitive relay channel with perfect noise correlation has been introduced in Section \ref{sec:prim_relay_model}. In this subsection, the noises at the receiver and the relay are not perfectly correlated, i.e., the assumption in \eqref{eq:noise_distr} is replaced by 
\begin{equation}
	\begin{bmatrix}
		Z \\ Z_0
	\end{bmatrix} \sim \mc N\left( \mf 0, \begin{bmatrix}
		1 & \lambda \\ \lambda & 1
	\end{bmatrix}\right), ~\text{where}~\lambda \in (-1,1).
\end{equation}
We refer to such noises as \textit{$\lambda$-correlated}. By rewriting $Z_0 = \lambda Z + Z'$, where $Z' \sim \mc N(0, 1-\lambda^2)$ and $Z' \indep Z$, the relay observation can thus be expressed as 
\begin{equation}
	Y_0 = \alpha X + Z_0 = \alpha X + \lambda Z + Z'.
\end{equation}
Note that when $\lambda = \alpha$, $Y_0 = \alpha (X+Z) + Z' = \alpha Y + Z'$,
making $X - Y - Y_0$ a Markov chain. In this case, the relay’s observation is a degraded version of the receiver’s, providing no additional information about 
$X$. Relaying is therefore futile, analogous to the $\alpha = 1$ case under perfect noise correlation (see Section \ref{sec:prim_relay_model}). In the following, we assume $\alpha \neq \lambda$.

The capacity of the Gaussian primitive relay channel with $\lambda$-correlated noises is not yet known (except for the trivial case where $\alpha = 
\lambda$). Below, we state several known achievability and converse bounds. The proof is deferred to Appendix \ref{app:prim_nonperf_bd}.
\begin{proposition}
	\label{prop:prim_nonperf_bd}
	Assume $\alpha \neq \lambda$. The capacity of the Gaussian primitive relay channel with $\lambda$-correlated noise is bounded from above by the cut-set bound
	\begin{align}
		&\textnormal{Cutset}_{\triangleleft}(R_0;\lambda)   \notag \\
		&= \min\left\{ \psi\left( \frac{(\alpha - \lambda)^2 P}{1-\lambda^2} + P\right), \psi(P) + R_0 \right\}.
		\label{eq:nonperf_cutset}
	\end{align}
	It is bounded from below by the CF lower bound
	\begin{align}
		&R_{\triangleleft,\textnormal{CF}}(R_0;\lambda)  \notag  \\ 
		&= \psi(P) + \psi\left( \frac{(2^{2R_0} - 1)(\alpha - \lambda)^2 P}{ 2^{2R_0} (1-\lambda^2)(P+1) + (\alpha - \lambda)^2 P}\right),
		\label{eq:nonperf_cf_lb}
	\end{align}
	and by the DF lower bound
	\begin{align}
		&R_{\triangleleft,\textnormal{DF}}(R_0;\lambda)  \notag \\
		&= \begin{cases}
			\psi(P),&\!\!\alpha^2 \le 1,\\
			\psi(\alpha^2 P), &\!\!1 < \alpha^2 \le \frac{2^{2R_0}-1}{P}, \\[1mm]
			\psi\!\left(\! P \!+\! \frac{(\alpha^2-1)(2^{2R_0} - 1)}{\alpha^2}\! \right)\!, &\!\! \alpha^2 \!>\! \max \!\left\{1, \frac{2^{2R_0}-1}{P}\right\}\!.
		\end{cases} \label{eq:nonperf_df_lb}
	\end{align}

\end{proposition}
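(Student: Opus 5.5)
We treat the three bounds one at a time, each time specializing a standard result for the primitive relay channel to jointly Gaussian inputs and $\lambda$-correlated noise. Throughout we use the decomposition $Z_0=\lambda Z+Z'$ with $Z'\sim\mathcal N(0,1-\lambda^2)$ independent of $Z$. This gives
\[
Y_0-\lambda Y=(\alpha-\lambda)X+Z'.
\]
Hence $(Y,Y_0)$ is an invertible linear transform of $(Y,\tilde Y)$, where $\tilde Y=(\alpha-\lambda)X+Z'$ has noise independent of the noise in $Y$.

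For the cut-set bound, we start from $C\le \sup_{p(x)}\min\{I(X;Y)+R_0,\;I(X;Y,Y_0)\}$, the cut-set bound for the primitive relay channel. Under $\mathbb E[X^2]\le P$, both terms are maximized by Gaussian $X$. For the first term, $I(X;Y)\le\psi(P)$. For the second, $I(X;Y,Y_0)=I(X;Y,\tilde Y)$, which is the capacity of a SIMO channel with independent noises $1$ and $1-\lambda^2$. Maximum-ratio combining gives the effective SNR $P+(\alpha-\lambda)^2P/(1-\lambda^2)$, which yields \eqref{eq:nonperf_cutset}. Gaussian input maximizes both terms simultaneously, so the min–sup exchange is harmless.

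For the CF bound, we use the Cover–El Gamal CF rate for the primitive relay channel: $R=I(X;Y,\hat Y_0)$ subject to $I(Y_0;\hat Y_0\mid Y)\le R_0$. We take $X\sim\mathcal N(0,P)$ and $\hat Y_0=Y_0+U$ with $U\sim\mathcal N(0,N)$.
\begin{itemize}
\item The rate is $\psi(P)+I(X;\hat Y_0\mid Y)$.
\item The conditional variance of $Y_0$ given $Y$ is $\sigma^2=(1-\lambda^2)+(\alpha-\lambda)^2P/(P+1)$, computed from the MMSE of $X$ given $Y$ together with the $Z'$ part.
\item The constraint becomes $\tfrac12\log(1+\sigma^2/N)=R_0$, so $N=\sigma^2/(2^{2R_0}-1)$.
\item The term $I(X;\hat Y_0\mid Y)$ equals $\psi\bigl((\alpha-\lambda)^2\,\mathrm{Var}(X\mid Y)/(1-\lambda^2+N)\bigr)$ with $\mathrm{Var}(X\mid Y)=P/(P+1)$.
\end{itemize}
Substituting $N$ and clearing denominators should give exactly the argument $(2^{2R_0}-1)(\alpha-\lambda)^2P/[2^{2R_0}(1-\lambda^2)(P+1)+(\alpha-\lambda)^2P]$. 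I expect this algebraic simplification to be the main obstacle: it is where a slip is most likely, so I would carefully double-check that the denominator $2^{2R_0}(1-\lambda^2)(P+1)+(\alpha-\lambda)^2P$ emerges.

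For the DF bound, we use the primitive-relay DF rate $\min\{I(X;Y_0),\,I(X;Y)+R_0\}$. We also allow the option of ignoring the relay, which gives $\psi(P)$. With Gaussian input, this is $\max\{\psi(P),\min\{\psi(\alpha^2P),\psi(P)+R_0\}\}$, and we refine it by the natural partial-DF/superposition variant. The cases are as follows.
\begin{itemize}
\item If $\alpha^2\le1$, the relay is weaker than the destination, so $\psi(P)$ is the answer.
\item If $\alpha^2>1$ and $\psi(\alpha^2P)\le\psi(P)+R_0$, the rate is $\psi(\alpha^2P)$. I would check that the threshold $(2^{2R_0}-1)/P$ stated in the proposition corresponds to the regime where the relay link does not bind. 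If the threshold arises from a different DF variant (for example, the relay forwarding a superposition layer), I would redo the case with power split $\gamma$ and optimize.
\item In the remaining case, we use superposition: the relay decodes a layer of rate $R_0$ of its own part, and the destination treats the rest as a Gaussian channel. Optimizing the split should give $\psi\bigl(P+(\alpha^2-1)(2^{2R_0}-1)/\alpha^2\bigr)$, which we verify by direct computation.
\end{itemize}
Notably, DF does not depend on $\lambda$, since decoding at the relay uses only $Y_0$ and the forwarded bits are digital.
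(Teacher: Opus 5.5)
Your cut-set and CF parts are correct and follow the paper. The paper plugs a Gaussian input and a Gaussian test channel into the same two formulas and evaluates them from the joint covariance. Your decorrelation $Y_0-\lambda Y=(\alpha-\lambda)X+Z'$ is a cleaner route to the same numbers. The CF algebra you flag is routine: with $A=(\alpha-\lambda)^2P/(P+1)$, $B=1-\lambda^2$ and $K=2^{2R_0}$, you get $B+N=(KB+A)/(K-1)$, so the argument is $A(K-1)/(KB+A)$, as stated.

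The DF part does not close as planned. The stated closed form is not the rate of the binned DF scheme $\min\{I(X;Y_0),I(X;Y)+R_0\}$ you start from. The paper uses superposition \emph{without} binning. The relay decodes the layer of power $\gamma P$, treating the other layer as noise, and forwards its index verbatim. The destination uses $Y$ only for the remaining layer after cancellation. This gives $\max_{\gamma}\,\psi(\bar\gamma P)+\min\{\psi(\frac{\alpha^2\gamma P}{\alpha^2\bar\gamma P+1}),R_0\}$. Two facts determine the optimal $\gamma$: $\frac{(1+\bar\gamma P)(1+\alpha^2P)}{1+\alpha^2\bar\gamma P}-(1+P)=\frac{\gamma P(\alpha^2-1)}{1+\alpha^2\bar\gamma P}$, and $(1+s)/(1+\alpha^2 s)$ decreases in $s=\bar\gamma P$ when $\alpha^2>1$. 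Hence the optimum is:
\begin{itemize}
\item $\gamma=0$ when $\alpha^2\le 1$;
\item $\gamma=1$ when the whole message fits through the link, i.e.\ $\psi(\alpha^2P)\le R_0\iff\alpha^2\le(2^{2R_0}-1)/P$;
\item otherwise, the $\gamma$ at which the link constraint becomes tight, which yields the third case.
\end{itemize}
With binning, your link constraint is instead $\psi(\alpha^2P)\le\psi(P)+R_0$, so the threshold check you propose will fail. In your \emph{remaining case}, superposition is a downgrade rather than a refinement, since binned DF already gives $\psi(P)+R_0$ there. Your fallback of forwarding a superposition layer and optimizing $\gamma$ is the right scheme. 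However, it must replace binned DF on all of $\alpha^2>1$, not be patched into one case.

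Because the proposition only asserts a lower bound, there is a shorter fix that keeps your own scheme. Show that $\max\{\psi(P),\min\{\psi(\alpha^2P),\psi(P)+R_0\}\}$ dominates the stated expression. It coincides with it in the first two cases. For $\alpha^2>\max\{1,(2^{2R_0}-1)/P\}$, write $T=1+P+\frac{(\alpha^2-1)(2^{2R_0}-1)}{\alpha^2}$. Then $1+\alpha^2P-T=\frac{(\alpha^2-1)(\alpha^2P-2^{2R_0}+1)}{\alpha^2}>0$ and $2^{2R_0}(1+P)-T=(2^{2R_0}-1)(P+\alpha^{-2})\ge 0$. This also shows that, for $R_0>0$, the DF bound stated in the proposition is strictly weaker than the standard binned Gaussian DF rate throughout the third regime.
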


In the following, we present a scalar MQ coding scheme for an example of this channel where $R_0 = 1$. The proposed scheme transmits one of two possible messages in a single channel use, albeit with a nonzero probability of error due to imperfect correlation between the noises. This nonzero error probability, however, does not fundamentally limit the practicality of the scheme. In effect, the one-shot MQ construction induces a binary symmetric channel between the transmitted and decoded messages, and by applying standard block coding across multiple uses of this induced channel, the overall decoding error probability can be driven arbitrarily low. Moreover, when the correlation coefficient tends to $1$, the crossover probability of the induced binary symmetric channel diminishes.

Fix power $P > 0$, let $\Delta = 2|\frac{\alpha}{\lambda} - 1|\sqrt{P}$. Note that $\Delta >0$ since $\alpha\neq \lambda$. To encode a message $M \in \{0,1\}$, the transmit signal is set to be
\begin{equation}
	\label{eq:nonperf_mq_enc}
	X  = \frac{(M-\frac12)\Delta}{\frac{\alpha}{\lambda} - 1}= \sqrt{P}\cdot\sgn\left( \frac{(M-\frac12)\Delta}{\frac{\alpha}{\lambda} - 1}\right).
\end{equation}
Observe that the power constraint is satisfied, because for either $M \in \{0,1\}$, $|X^2| = P$. 

The relay observes $Y_0 = \alpha X + Z_0 = \alpha X + \lambda Z + Z'$. It transmits 
\begin{equation}
	\label{eq:nonperf_mq_relay}
	V = \mq{\frac{Y_0}{\lambda}}{\Delta,2} = \mq{\frac{\alpha}{\lambda} X + Z + \frac{1}{\lambda} Z'}{\Delta,2}
\end{equation} 
Given the encoding rule \eqref{eq:nonperf_mq_enc} and that the receiver observation is $Y = X + Z$, we can rewrite \eqref{eq:nonperf_mq_relay} as
\begin{align}
	V &= \mq{\left(\frac{\alpha}{\lambda}-1\right) X + Y + \frac{1}{\lambda} Z'}{\Delta,2}  \notag \\
	&= \mq{(M-\frac12)\Delta + Y + \frac{1}{\lambda} Z'}{\Delta,2} \notag \\
	&= \left(\mq{ \tilde Y + \frac{1}{\lambda} Z'}{\Delta,2} + M \right) \mod 2, \label{eq:nonperf_mq_relay_2}
\end{align} 
where $\tilde Y = Y - \frac12 \Delta$, and the last equality is due to \eqref{eq:colour_lemma_2}.

To decode the message, the receiver first computes $\tilde V = \tmq{\tilde Y}{\Delta,2} = \mq{Y - \frac12\Delta}{\Delta,2}$. Upon receipt of the relay's transmission $V$, the receiver can form
\begin{align}
	\hat M &= (V - \tilde V) \mod 2 \notag \\
	&= \left(M+ \mq{ \tilde Y + \frac{1}{\lambda} Z'}{\Delta,2}-\mq{ \tilde Y }{\Delta,2}\right) \mod 2.
\end{align}
Observe that $\hat M \neq M$ if and only if $\tmq{\tilde Y + \frac{1}{\lambda} Z'}{\Delta,2} \neq \tmq{\tilde Y}{\Delta,2}$. 
Now for $m \in \{0,1\}$, we denote
\begin{equation}
	\breve{\bm Y}_m = \begin{bmatrix}
		\tilde Y \\ \tilde Y + \frac{1}{\lambda} Z'
	\end{bmatrix}
\end{equation}
conditioned on $M = m$.
We observe that  
\begin{equation}
	\breve{\bm Y}_m \sim \mc N\left(\bm \mu_m, \begin{bmatrix}
		1 & 1\\
		1 & 1/\lambda^2
	\end{bmatrix}\right),
\end{equation}
where 
\begin{equation}
	\bm\mu_0 = \frac{\alpha \Delta}{2(\lambda - \alpha)} \begin{bmatrix}
		1 \\ 1
	\end{bmatrix} ,~\bm\mu_1 =  \begin{bmatrix}
		-\Delta \\ -\Delta
	\end{bmatrix} - \bm\mu_0.
\end{equation}
Define $\mc R_{\Delta}$ to be the following two-dimensional region which is a countable union of squares of side length $\Delta$:
\begin{equation}
	\label{eq:mc_R_Delta}
	\mc R_{\Delta} = \bigcup_{i,j \in \mb Z} [i \Delta, (i+1)\Delta] \times [(i + 2j - 1 )\Delta, (i +2j)\Delta],
\end{equation}
where $\mc A \times \mc B$ denotes the Cartesian product of two sets $\mc A$ and $\mc B$.
It is straightforward from the definition of modulo quantization that $\mc R_{\Delta} = \{[x,y]^T: \mq{x}{\Delta,2}\neq \mq{y}{\Delta,2}\}$. We then have
\begin{align}
	\Pr\{\hat M = 1 | M = 0\} &= \Pr\{\breve{\bm Y}_0\in \mc R_{\Delta} \} \\
	\Pr\{\hat M = 0 | M = 1\} &=  \Pr\{\breve{\bm Y}_1 \in \mc R_{\Delta} \}.
\end{align}
We claim that 
\begin{equation}
	\label{eq:pe_0=pe_1}
	\Pr\{\breve{\bm Y}_0\in \mc R_{\Delta} \}  = \Pr\{\breve{\bm Y}_1\in \mc R_{\Delta} \}.
\end{equation}
This is because $\breve{\bm Y}_1$ is the point reflection of $\breve{\bm Y}_0$ about the point $[-\frac{\Delta}{2}, -\frac{\Delta}{2}]^T$, and that $\mc R_\Delta$ is centrally symmetric about the same point. Therefore, the overall error probability of this scheme is equal to 
\begin{align}
	P_{e,\triangleleft} &=  \frac{\Pr\{\hat M = 1 | M = 0\} + \Pr\{\hat M = 0| M =1\} }{2} \notag \\
	&=\Pr\{\breve{\bm Y}_0\in \mc R_{\Delta} \}.	\label{eq:pe=pe_0}
\end{align}

\begin{figure*}[t]
	\centering
	\begin{subfigure}{0.34\linewidth}
		\centering
		\includegraphics[width=\linewidth]{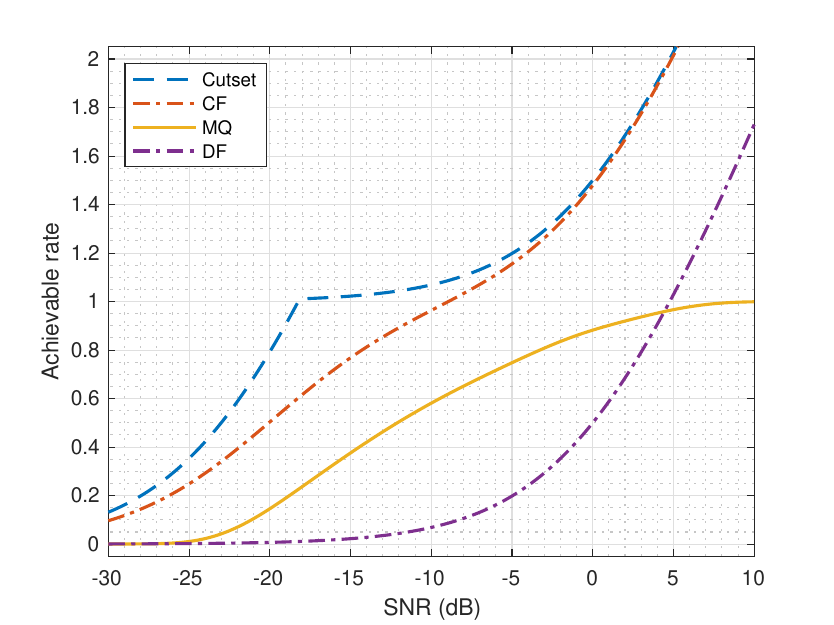}
		\caption{Fix $\alpha = -1$ and $\lambda = 0.99$, vary SNR.}
	\end{subfigure}
	\hspace{-5mm}
	\begin{subfigure}{0.34\linewidth}
		\centering
		\includegraphics[width=\linewidth]{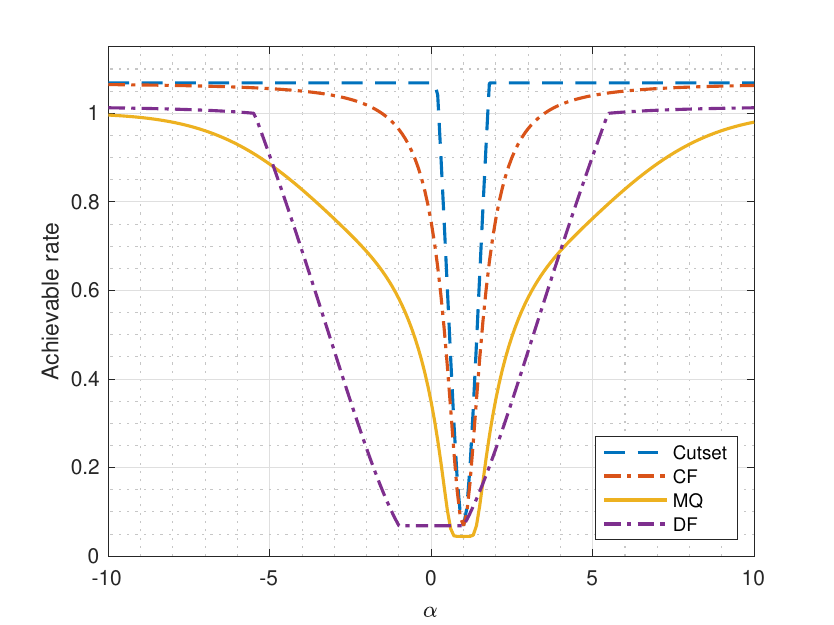}
		\caption{Fix SNR $= -10$dB and $\lambda = 0.99$, vary $\alpha$.}
	\end{subfigure}
	\hspace{-5mm}
	\begin{subfigure}{0.34\linewidth}
		\centering
		\includegraphics[width=\linewidth]{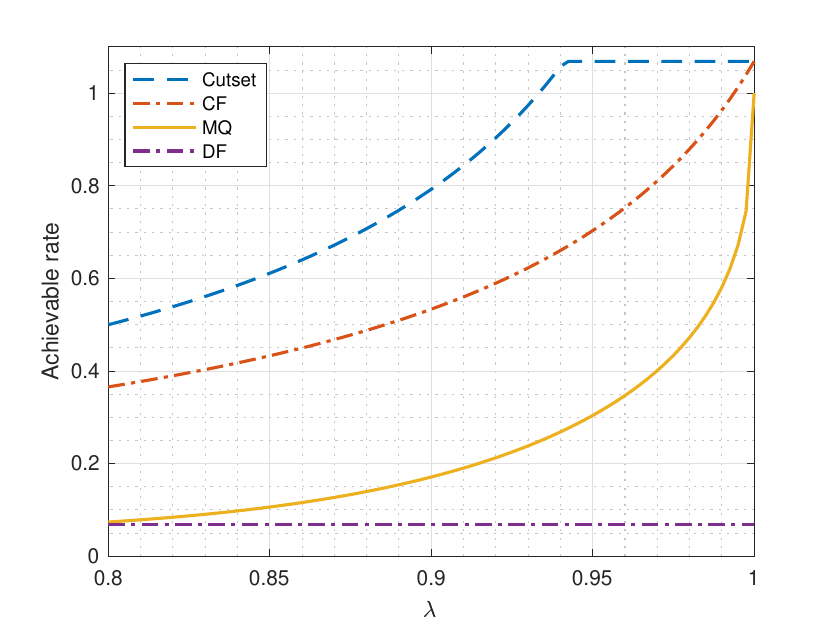}
		\caption{Fix $\alpha = -1$ and SNR $= -10$dB, vary $\lambda$.}
	\end{subfigure}
	
	\caption{Comparison of bounds on the capacity $C_{\triangleleft}(R_0;\lambda)$ of the Gaussian primitive relay channel with $\lambda$-correlated noises at the relays, $R_0 = 1$. Note that in (b) there is a dip around $\alpha = \lambda$ in the curves corresponding to the cut-set upper bound, the MQ lower bound, and the CF lower bound. This is because relaying does not help when $\alpha = \lambda$. The DF curve is symmetric about $\alpha = 0$, as the DF scheme does not take noise correlation into account, and its achievable rate is non-decreasing in $|\alpha|$.}
	\label{fig:nonperf_prim_bounds}
\end{figure*} 

As $|\lambda| \to 1$, the density of $\breve {\bm Y}_0$ becomes increasingly concentrated around the line with slope $1$ passing through the origin, which is (almost everywhere) disjoint with $\mc R_\Delta$, therefore $P_{e,\triangleleft} \to 0$, corroborating the result for the perfect noise correlation in the previous subsection.

As discussed above, the scalar MQ construction induces an effective binary channel from the transmitted message $M$ to the decoded message $\hat M$. Furthermore, from \eqref{eq:pe_0=pe_1} and \eqref{eq:pe=pe_0} we see that the induced binary channel is symmetric with crossover probability $P_{e,\triangleleft}$. By applying standard channel coding over this induced channel, we obtain the following achievable rate:
\begin{equation}
	\label{eq:MQ_nonperf}
	R_{\triangleleft, \textnormal{MQ}}(1;\lambda) = I(M;\hat M) =  1-h_b(P_{e,\triangleleft}),
\end{equation}
where $h_b(\cdot)$ is the binary entropy function.

The achievable rate in \eqref{eq:MQ_nonperf} can be further improved. In the construction above, only the modulo quantization of the receiver observation $Y$ is used for decoding, while other information, e.g., its sign, is ignored. To exploit the sign information, we introduce a binary quantization of $Y$ obtained by thresholding at $0$:
\begin{equation}W = \mb I\{Y \ge 0\} = \begin{cases}
		1, ~&Y \ge 0,\\
		0,~&Y < 0.
	\end{cases}
\end{equation}
This induces a binary-input quaternary-output channel from $M$ to $(\hat M, W)$, which yields the following improved MQ coding rate:
\begin{equation}
	\label{eq:prim_nonperf_MQ_rate}
	R_{\triangleleft,\textnormal{MQ-BQ}}(R_0;\lambda) = I(M; \hat M, W).
\end{equation}

We compare the new achievability result in \eqref{eq:prim_nonperf_MQ_rate} with the known bounds on the capacity of the Gaussian primitive relay channel presented in Proposition~\ref{prop:prim_nonperf_bd}, with particular emphasis on the high noise-correlation regime. In the three subfigures of Fig.~\ref{fig:nonperf_prim_bounds}, we plot the achievable rates and the cut-set upper bound as functions of the SNR~$P$, the relay gain~$\alpha$, and the noise-correlation coefficient~$\lambda$, respectively, while keeping the other two parameters fixed. These plots collectively illustrate how the performance of MQ coding evolves across different operating conditions. 

We observe that when the correlation~$\lambda$ is sufficiently high---that is, when the relay and receiver noises are strongly aligned---MQ coding achieves a substantially higher rate than DF over a wide range of~$\alpha$ and~$P$. Although its achievable rate remains lower than that of CF, MQ coding offers an attractive tradeoff between performance and implementation simplicity. Notably, as $\lambda \to 1$, the rate gap between MQ and CF diminishes, indicating that MQ effectively captures the essential structure of the optimal strategy in the near-perfect correlation regime. Taken together, these results demonstrate that MQ coding delivers near-optimal performance in the very high-correlation scenarios while maintaining low encoding and decoding complexity.

\subsection{Gaussian Primitive Diamond Channel}
\label{sec:diam_relay_nonperfect}

\begin{figure*}[t]
	\centering
	\begin{subfigure}[t]{0.4\linewidth}
		\centering
		\includegraphics[width=\linewidth]{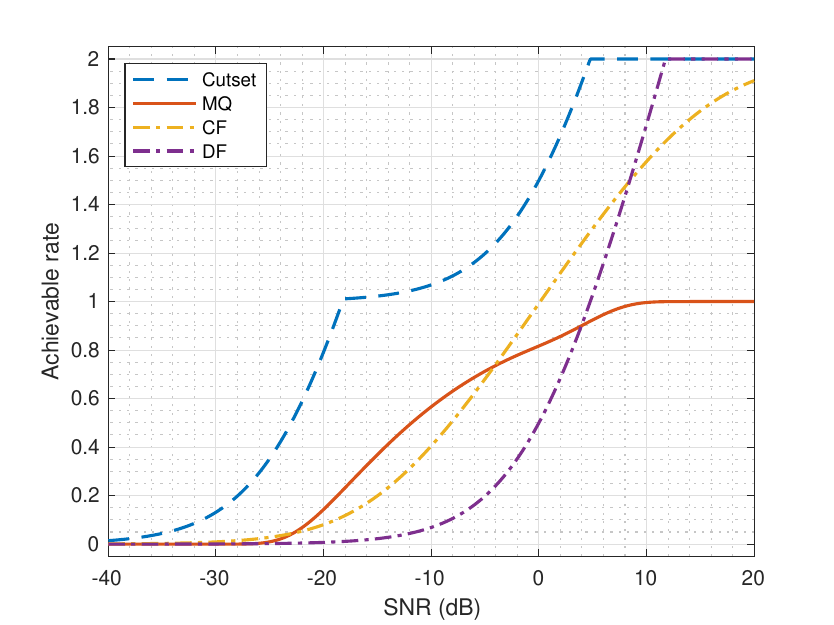}
		\caption{Fix $\lambda = 0.99$, vary SNR.}
	\end{subfigure}        
	\begin{subfigure}[t]{0.4\linewidth}
		\centering
		\includegraphics[width=\linewidth]{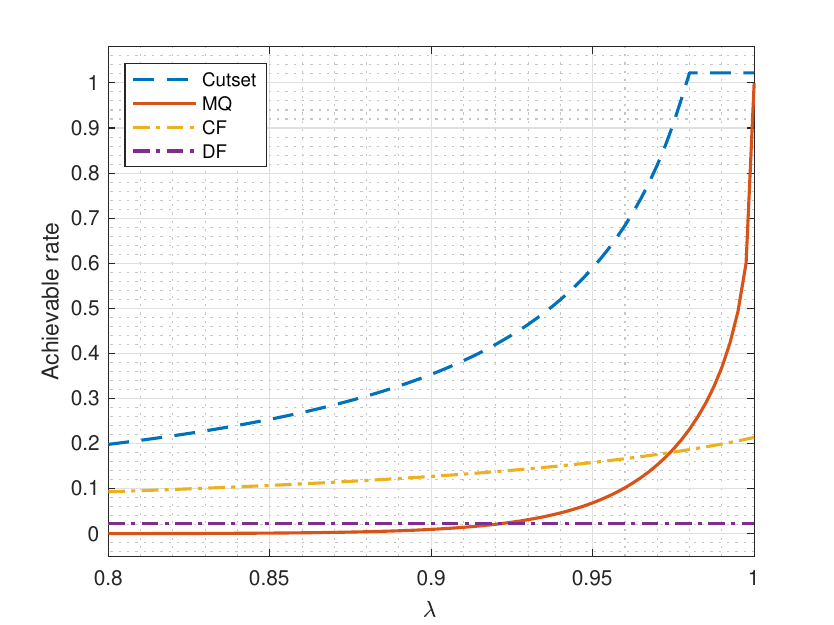}
		\caption{Fix SNR $= -15$dB, vary $\lambda$.}
	\end{subfigure}
	\caption{Comparison of capacity bounds of the anti-symmetric Gaussian primitive diamond channel with $\lambda$-correlated noises at the relays, $\bar R = 1$.}
	\label{fig:nonperf_sym_diam_bounds}
	
\end{figure*}

The Gaussian primitive diamond channel with perfect noise correlation has been introduced in Section \ref{sec:diamond_model}. In this subsection, we consider the scenario where the noises at the relays are not perfectly correlated, i.e., the assumption in \eqref{eq:diam_noise_distr} is replaced by
\begin{equation}
	\begin{bmatrix}
		Z_1 \\ Z_2
	\end{bmatrix} \sim \mc N\left( \mf 0, \begin{bmatrix}
		1 & \lambda \\ \lambda & 1
	\end{bmatrix}\right), ~\text{where}~\lambda \in (-1,1).
\end{equation}
We first present several known capacity bounds for this channel; the proof is provided in Appendix \ref{app:diam_nonperf_bd}.
\begin{proposition}
	\label{prop:diam_nonperf_bd}
	The capacity of the Gaussian primitive diamond channel with $\lambda$-correlated noise is bounded from above by the cut-set bound
	\begin{equation}
		\textnormal{Cutset}_\diamond(R_1,R_2;\lambda) = \min\left\{\begin{matrix}
			R_1 + R_2 \\ R_1 + \psi(\beta^2 P)\\ R_2 + \psi(\alpha^2 P) \\
			\psi \left(\frac{(\alpha^2 + \beta^2 - 2\lambda \alpha \beta)P}{1-\lambda^2} \right)
		\end{matrix}\right\}.
		\label{eq:diam_nonperf_cutset}
	\end{equation}
	It is bounded from below by the CF lower bound
	\begin{align}
		&R_{\diamond,\textnormal{CF}}(R_1,R_2;\lambda) \notag \\&= \max_{N_1,N_2 > 0} \!\!\min \left\{\begin{matrix}
			\frac12\log \frac{(\alpha^2 P + 1 + N_1)(\beta^2 P + 1 + N_2) - (\alpha \beta P + \lambda)^2}{(1+N_1)(1+N_2)-\lambda^2}\\ R_1 + \frac{1}{2}\log \frac{N_1(\beta^2 P + 1 + N_2)}{(1+N_1)(1+N_2)-\lambda^2} \\
			R_2 + \frac{1}{2}\log \frac{N_2(\alpha^2 P + 1 + N_1)}{(1+N_1)(1+N_2)-\lambda^2}\\ 
			R_1 + R_2 + \frac{1}{2}\log \frac{N_1 N_2}{(1+N_1)(1+N_2)-\lambda^2}
		\end{matrix}\right\},
		\label{eq:diam_nonperf_cf_lb}
	\end{align}
	where $N_i$ is the variance of the independent Gaussian quantization noise used by relay $i$ for $i = 1,2$. It is also bounded from below by the DF lower bound \eqref{eq:diamond_DF}.
\end{proposition}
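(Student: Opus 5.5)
Three separate bounds must be established. Throughout, $\mathrm{Cut}$, CF, DF stand for the three claimed expressions, and all mutual informations are evaluated at Gaussian $X\sim\mc N(0,P)$.

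\emph{Cut-set bound.} I will apply the standard cut-set bound for the diamond network, which takes the maximum over $p(x)$ of the minimum over the four cuts separating source from destination. The multiple-access cut gives $R_1+R_2$. The two cross cuts give $R_1+I(X;Y_2)$ and $R_2+I(X;Y_1)$. The broadcast cut gives $I(X;Y_1,Y_2)$. Each of the last three terms is maximized by Gaussian $X\sim\mc N(0,P)$, the conditional-entropy terms being fixed by the Gaussian noise. This gives $\psi(\beta^2P)$ and $\psi(\alpha^2P)$ for the cross cuts. For the broadcast cut,
\begin{equation}
I(X;Y_1,Y_2)=\tfrac12\log\frac{\det(P\,\mf h\mf h^T+\Sigma)}{\det\Sigma},
\end{equation}
where $\mf h=[\alpha,\beta]^T$ and $\Sigma$ is the $\lambda$-correlated noise covariance. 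By the matrix determinant lemma this equals $\tfrac12\log(1+P\,\mf h^T\Sigma^{-1}\mf h)$. Since $\mf h^T\Sigma^{-1}\mf h=(\alpha^2+\beta^2-2\lambda\alpha\beta)/(1-\lambda^2)$, we recover the fourth term.

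\emph{CF bound.} I will invoke the distributed Wyner--Ziv / noisy network coding inner bound for the diamond channel with oblivious relays, as used in \cite{sanderovich2008communication,katz2024gaussian}. It states that $R$ is achievable if $R\le I(X;\hat Y_1,\hat Y_2)$ and, for every nonempty $\mc S\subseteq\{1,2\}$,
\begin{equation}
R\le \sum_{i\in\mc S}R_i - I(Y_{\mc S};\hat Y_{\mc S}\mid X,\hat Y_{\mc S^c}) + I(X;\hat Y_{\mc S^c}).
\end{equation}
Here $\hat Y_i=Y_i+U_i$ with independent $U_i\sim\mc N(0,N_i)$, and the bound follows by the usual joint-typicality argument (or by citing noisy network coding). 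I then evaluate each term as a Gaussian log-determinant ratio, and this is the main computational step.

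The conditional covariance of $(\hat Y_1,\hat Y_2)$ given $X$ is $\begin{bmatrix}1+N_1&\lambda\\ \lambda&1+N_2\end{bmatrix}$, with determinant $D=(1+N_1)(1+N_2)-\lambda^2$. The unconditional determinant is $(\alpha^2P+1+N_1)(\beta^2P+1+N_2)-(\alpha\beta P+\lambda)^2$. Their ratio gives the first term.

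For $\mc S=\{1,2\}$, $I(Y_1,Y_2;\hat Y_1,\hat Y_2\mid X)=\tfrac12\log(D/N_1N_2)$, which gives the fourth term. For $\mc S=\{1\}$, I combine $I(Y_1;\hat Y_1\mid X,\hat Y_2)$ with $I(X;\hat Y_2)$ using the chain rule:
\begin{equation}
I(X;\hat Y_2)-I(Y_1;\hat Y_1\mid X,\hat Y_2)=h(\hat Y_2)-h(\hat Y_1,\hat Y_2\mid X)+h(\hat Y_1\mid Y_1,X,\hat Y_2).
\end{equation}
This equals $\tfrac12\log\frac{(\beta^2P+1+N_2)N_1}{D}$, with the $2\pi e$ factors cancelling, which is exactly the second term. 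The third term follows by symmetry. I expect this bookkeeping to be the only real obstacle: the sign of the cross term $\alpha\beta P+\lambda$ and the cancellation in the cross-cut expressions must be checked carefully.

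\emph{DF bound.} The DF scheme of Proposition~\ref{prop:bd_diamond} never uses the noise correlation. Each relay decodes using only its own marginal observation, and each marginal $Y_i$ is $\alpha X+Z_1$ or $\beta X+Z_2$ with unit-variance noise regardless of $\lambda$. So the degraded-BC superposition rates and the link constraints, and hence \eqref{eq:diamond_DF}, carry over verbatim. It therefore suffices to reference that proof in Appendix~\ref{app:bd_diamond} and observe that the error analysis depends only on the marginals.
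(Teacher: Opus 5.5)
Your proposal is correct and follows the paper's proof in Appendix~\ref{app:diam_nonperf_bd}: the four cuts are evaluated at Gaussian input, with the broadcast cut now finite; a simultaneous-decoding CF inner bound is evaluated with Gaussian test channels $\hat Y_i=Y_i+U_i$; and DF carries over unchanged because it depends only on the marginals. The only difference is cosmetic. You write the CF constraints in the noisy-network-coding form $\sum_{i\in\mc S}R_i - I(Y_{\mc S};\hat Y_{\mc S}\mid X,\hat Y_{\mc S^c}) + I(X;\hat Y_{\mc S^c})$, whereas the paper uses $\sum_{i\in\mc S}\bigl(R_i - I(Y_i;\hat Y_i)\bigr) + I(X;\hat Y_1,\hat Y_2)+I(\hat Y_1;\hat Y_2)$, and for nonempty $\mc S$ both reduce to $\sum_{i\in\mc S}R_i + h(\hat Y_{\mc S^c}) + \sum_{i\in\mc S}h(U_i) - h(\hat Y_1,\hat Y_2\mid X)$, so they yield identical terms.
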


We now construct a scalar MQ code for this channel. For simplicity, consider the anti-symmetric setting as an example (i.e., $\alpha = -\beta = 1$ and $R_1 = R_2 = \bar R$) and assume $\bar R = 1$. Similar to the primitive relay channel, the resulting one-shot MQ scheme induces a binary symmetric channel whose crossover probability diminishes as the noise correlation approaches $1$.

Fix power $P > 0$ and let $\Delta = 4\sqrt{P}$. To encode a message $M \in \{0,1\}$, the transmit signal is set to be $X = f_{\sf enc}(M) = (2M-1)\Delta/4$. It is easy to see that $X$ takes value from $\{\pm \sqrt{P}\}$, hence the power constraint is satisfied. The relays forward the $(\Delta,2)$-modulo quantization of a shifted version of their observation:
\begin{equation}
	V_1 = \mq{Y_1 + \Delta}{\Delta,2},\quad V_2 = \mq{Y_2 + \Delta/2}{\Delta,2}.
\end{equation}
The receiver declares $\hat M = (V_1 - V_2) \mod 2$. An application of Lemma \ref{lem:mq} and more algebraic manipulation yields 
\begin{equation}
	\hat M = \left(M + \mq{\tilde Z_1 - \frac{M}{2}\Delta}{\Delta,2} - \mq{\tilde Z_2 - \frac{M}{2}\Delta}{\Delta,2}\right) \mod 2,
\end{equation}
where $\tilde Z_i = Z_i + \frac{3}{4}\Delta$ for $i \in \{1,2\}$. Again, note that $\hat M \neq M$ if and only if $\mq{\tilde Z_1 - \frac{M}{2}\Delta}{\Delta,2} \neq \mq{\tilde Z_2 - \frac{M}{2}\Delta}{\Delta,2}$. Let $\tilde {\bm{Z}} = [\tilde Z_1, \tilde Z_2]^T$, we have
\begin{align}
	\Pr\{\hat M = 1| M = 0\} &=  \Pr\{\tilde{\bm Z}\in \mc R_{\Delta}\}, \\ 
	\Pr\{\hat M = 0| M = 1\} &= \Pr\left\{\tilde{\bm Z} -  {\textstyle\left[\frac{\Delta}{2}, \frac{\Delta}{2}\right]^T} \in \mc R_{\Delta}\right\},
\end{align}
where $\mc R_\Delta$ has been defined in \eqref{eq:mc_R_Delta}. By symmetry of $\tilde{\bm Z}$ and $\mc R_{\Delta}$, we conclude that 
\begin{equation}
	\Pr\{\tilde{\bm Z}\in \mc R_{\Delta}\} = \Pr\left\{\tilde{\bm Z} -  {\textstyle\left[\frac{\Delta}{2}, \frac{\Delta}{2}\right]^T} \in \mc R_{\Delta}\right\},
	\label{eq:pe_0=pe_1_diamond}
\end{equation}
and hence the overall error probability of this scheme is equal to 
\begin{align}
	P_{e,\diamond} &=  \frac{\Pr\{\hat M = 1 | M = 0\} + \Pr\{\hat M = 0| M =1\} }{2} \notag \\
	&=\Pr\{\tilde{\bm Z}\in \mc R_{\Delta} \}.	\label{eq:pe_diamond}
\end{align}

From \eqref{eq:pe_0=pe_1_diamond} and \eqref{eq:pe_diamond}, we see that the channel from $M$ to $\hat M$ is a binary symmetric channel with crossover probability $P_{e,\diamond}$. We can apply standard channel coding over this induced channel, which yields the following achievable rate:
\begin{equation}
	\label{eq:MQ_diam_nonperf}
	R_{\diamond, \textnormal{MQ}}(1;\lambda) = I(M;\hat M) =  1-h_b(P_{e,\diamond}),
\end{equation}
where $h_b(\cdot)$ is the binary entropy function.

The MQ achievable rate \eqref{eq:MQ_diam_nonperf} and the known capacity bounds are compared in Fig.~\ref{fig:nonperf_sym_diam_bounds}. We plot the bounds versus SNR for $\lambda = 0.99$ in Fig.~\ref{fig:nonperf_sym_diam_bounds}(a). It is observed that MQ scheme outperforms both CF and DF over most of the low-SNR regime ($\text{SNR} \le -5$dB). In Fig.~\ref{fig:nonperf_sym_diam_bounds}(b), the bounds are plotted versus $\lambda$ at $\text{SNR} = -15$dB. In this case, the achievable rate of MQ increases sharply with $\lambda$ when $\lambda \approx 1$, surpassing both CF and DF for $\lambda \ge 0.98$.

\section{Conclusion}

This paper proposes MQ coding as a simple, explicit, and structured coding technique for channels with primitive relay links and correlated noises. We demonstrate that for the Gaussian primitive relay channel with perfect noise correlation, MQ coding is able to transmit up to $\lfloor 2^{R_0} \rfloor$ messages without error in a single channel use, and when combined with a code for AWGN channel supported on a scaled integer lattice, it attains the full capacity $C_\triangleleft(R_0) = \psi(P) + R_0$. This provides an alternative achievability proof of the capacity without relying on high-complexity random binning or high-dimensional rate-distortion code. When the noise correlation is not perfect, MQ coding remains effective when the correlation is sufficiently high. In this regime, the achievable rate of MQ coding approaches that of CF, while maintaining much lower complexity and outperforming DF by a significant margin. 

For the Gaussian primitive diamond channel, we extend the modulo quantization idea to design a variant of the MQ coding scheme. When the noises at the relays are perfectly correlated, the MQ scheme enables zero-error transmission of up to $\lfloor 2^{R_{\textnormal{min}}} \rfloor$ messages in one channel use. Furthermore, a hybrid MQ–DF scheme that combines MQ with rate-splitting and time-sharing is developed. For the perfect noise correlation scenario, the proposed scheme strictly improves upon both CF and DF over a wide range of parameters and, in several regimes, achieves the cut-set upper bound, thereby establishing the exact capacity. For the case of non-perfectly correlated noises, when the correlation is sufficiently high, MQ coding outperforms both CF and DF in certain channel parameter ranges. This reinforces the conclusion that MQ is a practical and effective scheme even beyond the idealized perfect-correlation scenario, especially for diamond channels without a direct link from the transmitter to the receiver.

Overall, these results identify MQ as a coding primitive for exploiting noise correlation in relay networks. It provides an explicit, algebraically grounded mechanism that can be used in tandem with AWGN codes for the Gaussian primitive relay channel or DF for the Gaussian primitive diamond channel.

\appendices

\section{Proof of Proposition \ref{prop:bd_diamond}}
\label{app:bd_diamond}

The cut-set upper bound \eqref{eq:diamond_cutset} can be derived from the generic cut-set bound for the diamond network. Due to the average power constraint, we have
\begin{equation}
	\label{eq:diamond_cutset_1}
	C_{\diamond}(R_1,R_2) \le \max_{\mb E[X^2]\le P} \min\left\{ \begin{matrix}
		I(X;Y_1,Y_2)\\
		I(X;Y_1) + R_2\\
		I(X;Y_2) + R_1\\
		R_1 + R_2
	\end{matrix}\right\}.
\end{equation}
Observe that since $X$ is a deterministic function of $Y_1$ and $Y_2$, i.e., $X = \frac{1}{\alpha - \beta} (Y_1 - Y_2)$, the mutual information $I(X;Y_1,Y_2) = \infty$. Note that $I(X;Y_1)$ and $I(X;Y_2)$ are both maximized when $X \sim \mc N(0,P)$, which are equal to $\psi(\alpha^2 P)$ and $\psi(\beta^2 P)$ respectively, therefore we simplify \eqref{eq:diamond_cutset_1} and get
\begin{align}
	C_{\diamond}(R_1,R_2) &\le \min\left\{ \begin{matrix}
		\psi(\alpha^2 P) + R_2\\
		\psi(\beta^2 P) + R_1\\
		R_1 + R_2
	\end{matrix}\right\} \\
	&= \textnormal{Cutset}_\diamond(R_1, R_2),
\end{align}
thus obtaining \eqref{eq:diamond_cutset}.

The DF achievable rate \eqref{eq:diamond_DF} is obtained as follows. Since $|\alpha| \ge |\beta|$, the channel from $X$ to $Y_1$ and $Y_2$ is a degraded broadcast channel, with $Y_1$ being the stronger receiver and $Y_2$ being the weaker one. We let both relays decode a common message of rate $R_{\text{DF,comm}}$ and the stronger relay decode an additional private message of rate $R_{\text{DF,priv}}$. After decoding their respective messages, the two relays forward them to the destination. Specifically, the stronger receiver first tries to forward its own private message, and if the private message rate is below $R_1$, it uses the remaining rate to forward part of the common message. The weaker receiver now forwards the remaining part of the common message. Summarizing, the following rates are achievable for any $\gamma \in [0,1]$:
\begin{gather}
	R_{\text{DF,comm}} \le \psi(\frac{\bar{\gamma} \beta^2 P}{\gamma \beta^2 P + 1}),~~R_{\text{DF,priv}}\le \psi(\gamma \alpha^2 P),\label{eq:DF_rate_constr_1}\\
	R_{\text{DF,priv}}\le R_1,~~R_{\text{DF,comm}}+R_{\text{DF,priv}}\le R_1 + R_2.\label{eq:DF_rate_constr_2}
\end{gather}
The rate expressions in \eqref{eq:DF_rate_constr_1} come from the achievable rates of superposition coding for the Gaussian broadcast channel (see e.g., \cite[Section 5.5.1]{el2011network}), where $\gamma$ is the parameter for allocating the transmit power to the two superposition codebooks and $\bar{\gamma} = 1-\gamma$; and the rate expressions in \eqref{eq:DF_rate_constr_2} are to ensure that messages can be forwarded to the destination within the capacity constraints of the relay links. 

The destination receives both the private message and the common message, thus the overall rate of this scheme is $R = R_{\text{DF,comm}} + R_{\text{DF,priv}}$. Using Fourier-Motzkin elimination, any overall rate satisfying
\begin{equation}
	R \le \min \left\{\begin{matrix}
		R_1 + R_2\\ R_1 + \psi\left( \frac{\bar\gamma \beta^2 P }{\gamma \beta^2 P + 1}\right)\\\psi(\gamma \alpha^2 P )  + \psi\left( \frac{\bar\gamma \beta^2 P }{\gamma \beta^2 P + 1}\right)
	\end{matrix} \right\}
\end{equation}
is achievable. By further optimizing the parameter $\gamma \in [0,1]$, we obtain the achievability result \eqref{eq:diamond_DF}.

The CF achievable rate \eqref{eq:diamond_CF} is derived below. Accounting for the correlation between the relay observations, we adapt the simultaneous-decoding CF lower bound of \cite{sanderovich2008communication} as follows:
\begin{equation}
	\label{eq:diamond_CF_1}
	C_{\diamond}(R_1,R_2) \ge \min\left\{ 
	\begin{matrix}
		I(X;\hat Y_1, \hat Y_2)\\
		R_1 - I(Y_1;\hat Y_1)+ I_0\\
		R_2 - I(Y_2;\hat Y_2) + I_0\\
		R_1 + R_2  - I(Y_1;\hat Y_1)- I(Y_2;\hat Y_2) + I_0\\
	\end{matrix}\right\},
\end{equation}
for any joint pdf $p(x)p(y_1,y_2|x)p(\hat y_1|y_1)p(\hat y_2|y_2)$, and
\begin{equation}
	\label{eq:diamond_CF_2}
	I_0 = I(X;\hat Y_1,\hat Y_2) + I(\hat Y_1;\hat Y_2).
\end{equation}
In general, the optimal choice of $\hat Y_1$ and $\hat Y_2$ is not known. We use Gaussian codebook and Gaussian quantization, namely $X \sim \mc N(0,P)$, $\hat Y_1 = Y_1 + U_1$ and $\hat Y_2 = Y_2 + U_2$, where $U_1 \sim \mc N(0, N_1)$, $U_2 \sim \mc N(0, N_2)$ are such that $U_1 \indep U_2$ and $(U_1,U_2) \indep (X, Y_1, Y_2)$. Plugging the choice of $\hat Y_1$ and $\hat Y_2$ in \eqref{eq:diamond_CF_1} and optimize over $N_1$ and $N_2$, we obtain \eqref{eq:diamond_CF}. This concludes the proof.

\section{Proof of Theorem \ref{thm:capacity_diamond}}
\label{app:capacity_diamond}

The converse part of the theorem follows directly from the cut-set bound \eqref{eq:diamond_cutset}, therefore we only need to prove the achievability part. For fixed $P > 0$, define 
\begin{equation}
	\label{eq:R_alpha_beta}
	R_{\alpha}(\gamma) = \psi(\gamma \alpha^2 P), ~R_{\beta}(\gamma) = \psi\left( \frac{\bar\gamma \beta^2 P}{\gamma \beta^2 P + 1} \right),
\end{equation}
for $\gamma \in [0,1]$. We can now rewrite the achievability bound in \eqref{eq:R_MQ_DF} as
\begin{equation}
	\label{eq:R_MQ_DF_new}
	C_{\diamond}(R_1,R_2) \ge \max_{\substack{\gamma \in [0,1]\\0 \le \rho \le R_{\textnormal{min}}}} \min \left\{\begin{matrix}
		R_1 + R_2 - \rho\\ R_1 + R_{\beta}(\gamma)\\ R_{\alpha}(\gamma) + R_{\beta}(\gamma) + \rho
	\end{matrix} \right\}.
\end{equation}
In particular, since the right-hand side of \eqref{eq:R_MQ_DF_new} is a maximization over the parameters $\gamma$ and $\rho$, it is lower bounded by fixing these parameters to specific values, i.e., for all $\tilde \gamma \in [0,1]$ and $\tilde \rho \in [0, R_\textnormal{min}]$, we have
\begin{equation}
	\label{eq:R_MQ_DF_new_2}
	C_{\diamond}(R_1,R_2) \ge \min \left\{\begin{matrix}
		R_1 + R_2 - \tilde\rho\\ R_1 + R_{\beta}(\tilde\gamma)\\ R_{\alpha}(\tilde\gamma) + R_{\beta}(\tilde\gamma) + \tilde\rho
	\end{matrix} \right\}.
\end{equation}

For part (a), the lower bound \eqref{eq:R_MQ_DF_new_2} reduces to 
\begin{equation}
	\label{eq:R_tilde_gamma}
	C_{\diamond}(R_1,R_2) \ge \min \left\{\begin{matrix}
		R_1 + R_2 \\ R_1 + R_{\beta}(\tilde\gamma)\\ R_{\alpha}(\tilde\gamma) + R_{\beta}(\tilde\gamma)
	\end{matrix} \right\}
\end{equation}
by choosing $\tilde \rho = 0$. It suffices to show the existence of a $\tilde \gamma \in [0,1]$ such that the right hand side of \eqref{eq:R_tilde_gamma} equals $R_1 + R_2$. 

Consider first the case where 
\begin{equation}
	P \ge \frac{2^{2(R_1 + R_2)} - 1}{\beta^2} \ge \frac{2^{2R_2} - 1}{\beta^2} + \frac{2^{2R_2} (2^{2R_1} - 1)}{\alpha^2}, 
\end{equation}
or equivalently, $R_1 + R_2 \le \psi(\beta^2 P)$. In this case we let $\tilde\gamma = 0$ and then we have 
\begin{equation}
	\min \left\{\begin{matrix}
		R_1 + R_2 \\ R_1 + R_{\beta}(\tilde\gamma)\\ R_{\alpha}(\tilde\gamma) + R_{\beta}(\tilde\gamma) 
	\end{matrix}\right\} = \min \left\{\begin{matrix}
		R_1 + R_2 \\ R_1 + \psi(\beta^2 P )\\\psi(\beta^2 P )
	\end{matrix}\right\} = R_1 + R_2.
\end{equation}

Now consider the case where 
\begin{equation}
	\label{eq:P_high_snr_1}
	\frac{2^{2R_2} - 1}{\beta^2} + \frac{2^{2R_2} (2^{2R_1} - 1)}{\alpha^2}\le P < \frac{2^{2(R_1 + R_2)} - 1}{\beta^2}.
\end{equation}
Note that since $|\alpha| \ge |\beta|$, 
\begin{align}
	P &\ge \frac{2^{2R_2} - 1}{\beta^2} + \frac{2^{2R_2} (2^{2R_1} - 1)}{\alpha^2} \notag \\ &\ge \frac{2^{2R_2} (2^{2R_1} - 1) + 2^{2R_2} - 1}{\alpha^2} \notag \\
	&= \frac{2^{2(R_1+ R_2)}- 1}{\alpha^2}, 
\end{align}
which, together with \eqref{eq:P_high_snr_1}, implies  that 
\begin{equation}
	\label{eq:bd_P_R1+R2}
	1+\alpha^2 P  \ge 2^{2(R_1 + R_2)} > 1+\beta^2 P. 
\end{equation}
Since $P > 0$, this further implies $\alpha^2 > \beta^2$, excluding the possibility of $\alpha = \beta$. In this case, we choose $\tilde \gamma$ such that \begin{equation}
	\label{eq:tilde_gamma_equal}
	R_1 + R_2 = R_{\alpha}(\tilde\gamma) + R_{\beta}(\tilde\gamma).
\end{equation} 
We claim that with this choice of $\tilde \gamma$, we have 
\begin{equation}
	\label{eq:tilde_gamma_ge}
	R_1 + R_{\beta}(\tilde\gamma) \ge R_{\alpha}(\tilde\gamma) + R_{\beta}(\tilde\gamma),
\end{equation} 
which, combined with \eqref{eq:tilde_gamma_equal}, proves that the right hand side of \eqref{eq:R_tilde_gamma} is equal to $R_1 + R_2$. 
Solving \eqref{eq:tilde_gamma_equal}, we get
\begin{equation}
	\label{eq:tilde_gamma_explicit}
	\tilde \gamma = \frac{2^{2(R_1 + R_2)} - (1 + \beta^2 P)}{\alpha^2 P (1 + \beta^2 P) - \beta^2 P \cdot 2^{2(R_1 + R_2)}}.
\end{equation}
To verify this $\tilde \gamma \in [0,1]$, first note that both the numerator and the denominator on the right hand side of \eqref{eq:tilde_gamma_explicit} are positive. This is because by \eqref{eq:bd_P_R1+R2}, we have $2^{2(R_1 + R_2)} - 1 - \beta^2 P > 0$ and 
\begin{align}
	\alpha^2 P &(1 + \beta^2 P) - \beta^2 P \cdot 2^{2(R_1 + R_2)} \notag \\ 
	&\ge \alpha^2 P (1 + \beta^2 P) - \beta^2 P(1+\alpha^2 P) \notag \\ 
	&= (\alpha^2 - \beta^2 )P \notag 
	\\&> 0.
\end{align}
Thus $\tilde\gamma > 0$. Furthermore, to show $\tilde\gamma \le 1$ we can equivalently show 
\begin{equation}
	2^{2(R_1 + R_2)} - (1+\beta^2 P) \le \alpha^2 P (1+\beta^2 P) - \beta^2 P \cdot 2^{2(R_1 + R_2)}.
\end{equation}
After rearranging, we can simplify the above inequality as 
\begin{equation}
	\left(2^{2(R_1 + R_2)} - (1+\alpha^2 P)\right)(1+\beta^2 P) \le 0,
\end{equation}
which is true again because of \eqref{eq:bd_P_R1+R2}.

We now prove the claim in \eqref{eq:tilde_gamma_ge}. It suffices to show that $R_1 \ge R_\alpha(\tilde \gamma)$, or $2^{2R_1} -1  \ge 2^{2R_\alpha(\tilde \gamma)} -1 =\tilde\gamma \alpha^2 P$. Plugging in $\tilde\gamma$ from \eqref{eq:tilde_gamma_explicit}, this is equivalent to showing
\begin{align}
	&(2^{2R_1} -1) \left(\alpha^2 P (1+\beta^2 P) - \beta^2 P \cdot 2^{2(R_1 + R_2)}\right) \\
	&\quad\qquad \ge \alpha^2 P \left(2^{2(R_1 + R_2)} - (1 + \beta^2 P)\right) \notag\\
	&\Leftrightarrow ~ 2^{2R_1}  \left(\alpha^2 P (1+\beta^2 P) - \beta^2 P \cdot 2^{2(R_1 + R_2)}\right) \notag \\
	&\quad\qquad\ge (\alpha^2 - \beta^2 ) P \cdot 2^{2(R_1 + R_2)}\\
	&\Leftrightarrow ~ \alpha^2  (1+\beta^2 P) - \beta^2  \cdot 2^{2(R_1 + R_2)} \ge (\alpha^2 - \beta^2 )  2^{2R_2} \\
	&\Leftrightarrow ~ P\ge \frac{2^{2R_2} - 1}{\beta^2} + \frac{2^{2R_2} (2^{2R_1} - 1)}{\alpha^2}.
\end{align}
This completes the proof of the claim, since the last inequality holds due to  \eqref{eq:P_high_snr_1}.

For part (b), we first consider the case where $R_1 > R_2$. In this case $R_{\textnormal{min}} = R_2$. We choose $
\tilde\rho = R_{\textnormal{min}} = R_2$ and $\tilde\gamma = 1$ in \eqref{eq:R_MQ_DF_new_2}, which yields 
\begin{align}
	C_\diamond(R_1, R_2) &\ge \min \left\{\begin{matrix}
		R_1 + R_2 - R_2 \\ R_1 + R_{\beta}(1)\\R_{\alpha}(1) + R_{\beta}(1) +  R_2
	\end{matrix}\right\} \notag \\ &= \min \{ R_1, \psi(\alpha^2 P ) + R_2\}.\label{eq:R_low_SNR_R1>R2}
\end{align}
Rearranging $P \le \frac{2^{2(R_1 - R_2)} - 1}{\alpha^2}$, we get $R_2 + \psi(\alpha^2 P) \le R_1$, thus \eqref{eq:R_low_SNR_R1>R2} simplifies to $C_\diamond(R_1,R_2) \ge R_2 + \psi(\alpha^2 P)$.

We then consider the case when $R_2 > R_1$. In this case $ R_{\textnormal{min}} = R_1$. We choose $
\tilde\rho = R_{\textnormal{min}} = R_1$ and $\tilde\gamma = 0$ in \eqref{eq:R_MQ_DF_new_2}, which yields 
\begin{align}
	C_\diamond(R_1, R_2) &\ge \min \left\{\begin{matrix}
		R_1 + R_2 - R_1 \\ R_1 + R_{\beta}(0)\\R_{\alpha}(0) + R_{\beta}(0) +  R_2
	\end{matrix}\right\} \notag \\ &= \min \{ R_2, \psi(\beta^2 P ) + R_1\}.\label{eq:R_low_SNR_R2>R1}
\end{align}
Rearranging $P \le \frac{2^{2(R_1 - R_2)} - 1}{\alpha^2}$, we get $R_2 + \psi(\alpha^2 P) \le R_1$, thus \eqref{eq:R_low_SNR_R2>R1} simplifies to $C_\diamond(R_1,R_2) \ge R_2 + \psi(\alpha^2 P)$.

Finally, we deal with the case where $R_1 = R_2 = \bar R$. We choose $\tilde \rho = \bar R$. Then it is straightforward from \eqref{eq:R_MQ_DF_new_2} that
\begin{align}
	C_\diamond(\bar R, \bar R) \ge \min \left\{\begin{matrix}
		2\bar R - \bar R \\ \bar R + R_{\beta}(\gamma)\\R_{\alpha}(\gamma) + R_{\beta}(\gamma) +  \bar R
	\end{matrix}\right\}  \ge \bar R \label{eq:R_low_SNR_R2=R1}
\end{align}
for all values $\gamma \in [0,1]$ and $P > 0$. But then in the $P \to 0$ limit, $\psi(\alpha^2 P ) \to 0$ and $\psi(\beta^2 P)\to 0$, which implies 
\begin{equation}
	\lim_{P \to 0} C_{\diamond}(\bar R, \bar R) \le \lim_{P \to 0} \text{Cutset}_{\diamond}(\bar R, \bar R) = \bar R.
\end{equation}
This completes the proof.

\section{Proof of Proposition \ref{prop:sym_diam_simplify_bds}}
\label{app:sym_diam_simplify_bds}

Recall that for the anti-symmetric Gaussian primitive diamond channel, we have $R_1 = R_2 = \bar R$ and $\alpha = -\beta = 1$. As a result, $\alpha^2 = \beta^2 = 1$ and hence $\psi(\alpha^2 P) = \psi(\beta^2 P) = \psi(P)$. Therefore, the cut-set bound in \eqref{eq:diamond_cutset} reduces to the following form:
\begin{align}
	\textnormal{Cutset}_{\textnormal{sym},\diamond}(\bar R) 
	&=\min\left\{
	2\bar R, \bar R+ \psi(P)\right\}  \notag \\
	&=  \begin{cases}
		\bar R + \psi(P), &\psi(P) < \bar R,\\
		2\bar R, &\psi(P) \ge \bar R.\\
	\end{cases}\label{eq:app_sym_diam_cutset}
\end{align}

The DF lower bound in \eqref{eq:diamond_DF} can be simplified to
\begin{align}
	R_{\textnormal{sym},\diamond,\textnormal{DF}}(\bar R) &=\max_{\gamma \in [0,1]} \min \left\{
	\begin{matrix}
		2\bar R\\ \bar R + \psi\left(\frac{\bar\gamma  P }{\gamma  P + 1}  \right)\\\psi(\gamma P)+ \psi\left(\frac{\bar\gamma  P }{\gamma  P + 1}  \right)
	\end{matrix}
	\right\} \notag \\
	&= \max_{\gamma \in [0,1]} \min \left\{
	\begin{matrix}
		2\bar R\\ \bar R + \psi\left(\frac{\bar\gamma  P }{\gamma  P + 1}  \right)\\ \psi( P)
	\end{matrix}
	\right\}, \label{eq:app_sym_diam_DF_1}
\end{align}
where \eqref{eq:app_sym_diam_DF_1} is due to the fact that \begin{equation}
	\label{eq:app_psi_sum}
	\psi(\gamma P) + \psi\left(\frac{\bar{\gamma}  P}{\gamma P + 1}\right) = \psi(P)
\end{equation} for all $\gamma \in [0,1]$. Furthermore, we note that $\psi(\frac{\bar{\gamma} P}{\gamma  P + 1}) = \psi(P) - \psi(\gamma P)$ is a monotonically decreasing function of $\gamma$ for fixed $P$, hence it is maximized at $\gamma = 0$ with optimal value $\psi(P)$. We can thus bound the right-hand side of \eqref{eq:app_sym_diam_DF_1} from below by fixing $\gamma = 0$:
\begin{align}
	\max_{\gamma \in [0,1]} \min \left\{
	\begin{matrix}
		2\bar R\\ \bar R + \psi\left(\frac{\bar\gamma  P }{\gamma  P + 1}  \right)\\ \psi( P)
	\end{matrix}
	\right\}  
	&\ge \min \left\{
	\begin{matrix}
		2\bar R,\\ \bar R + \psi(P)\\ \psi(P) 
	\end{matrix}
	\right\} \notag \\
	&= \min \left\{
	2\bar R, \psi(P) 
	\right\}. \label{eq:app_max_min_ge}
\end{align}
On the other hand, since taking the minimum over a subset cannot yield a smaller value, we have
\begin{align}
	\max_{\gamma \in [0,1]} \min \left\{
	\begin{matrix}
		2\bar R\\ \bar R + \psi\left(\frac{\bar\gamma  P }{\gamma  P + 1}  \right)\\ \psi( P)
	\end{matrix}
	\right\}  
	&\le \max_{\gamma \in [0,1]} \min \left\{
	2\bar R, \psi( P)
	\right\} \notag \\
	&= \min \left\{
	2\bar R, \psi(P) 
	\right\}.\label{eq:app_max_min_le}
\end{align}
Therefore, we conclude that 
\begin{align}
	R_{\textnormal{sym},\diamond,\textnormal{DF}}(\bar R)  &= \min \{2\bar R, \psi(P)\} \notag \\
	&=  \begin{cases}
		\psi(P), &\psi(P) < 2\bar R,\\
		2\bar R, &\psi(P) \ge 2\bar R.\\
	\end{cases}
\end{align}

We can now simplify the MQ-DF lower bound \eqref{eq:R_MQ_DF} to the following:
\begin{align}
	\label{eq:sym_diam_1}
	R_{\textnormal{sym},\diamond,\textnormal{MQ-DF}}(\bar R) =\max_{\substack{\gamma\in [0,1]\\0\le \rho\le \bar R}} \min
	\left\{ 
	\begin{matrix}
		2\bar R - \rho\\
		\bar R+ \psi(\frac{\bar{\gamma} P}{\gamma  P + 1})\\
		\psi(  P) + \rho
	\end{matrix}\right\},
\end{align}
where the last step uses \eqref{eq:app_psi_sum}. Using a similar argument as in \eqref{eq:app_max_min_ge} and  \eqref{eq:app_max_min_le}, we have 
\begin{align}
	\max_{\substack{\gamma\in [0,1]\\0\le \rho\le \bar R}} \min
	\left\{ 
	\begin{matrix}
		2\bar R - \rho\\
		\bar R+ \psi(\frac{\bar{\gamma} P}{\gamma  P + 1})\\
		\psi(  P) + \rho
	\end{matrix}\right\} &\ge \max_{0\le \rho\le \bar R} \min
	\left\{ 
	\begin{matrix}
		2\bar R - \rho\\
		\bar R+ \psi(P)\\
		\psi(  P) + \rho
	\end{matrix}\right\}  \notag \\
	&= \max_{0\le \rho\le \bar R} \min
	\left\{ 
	\begin{matrix}
		2\bar R - \rho\\
		\psi(  P) + \rho
	\end{matrix}\right\} 
\end{align}
and 
\begin{align}
	\max_{\substack{\gamma\in [0,1]\\0\le \rho\le \bar R}} \min
	\left\{ 
	\begin{matrix}
		2\bar R - \rho\\
		\bar R+ \psi(\frac{\bar{\gamma} P}{\gamma  P + 1})\\
		\psi(  P) + \rho
	\end{matrix}\right\} &\le \max_{\substack{\gamma\in [0,1]\\0\le \rho\le \bar R}}\min
	\left\{ 
	\begin{matrix}
		2\bar R - \rho\\
		\psi(  P) + \rho
	\end{matrix}\right\}  \notag \\
	&= \max_{0\le \rho\le \bar R} \min
	\left\{ 
	\begin{matrix}
		2\bar R - \rho\\
		\psi(  P) + \rho
	\end{matrix}\right\}. 
\end{align}
Thus we have
\begin{equation}
	\label{eq:app_sym_diam_MQ_DF_1}
	R_{\textnormal{sym},\diamond,\textnormal{DF}}(\bar R) = \max_{0\le \rho\le \bar R} \min
	\left\{ 
	\begin{matrix}
		2\bar R - \rho\\
		\psi(  P) + \rho
	\end{matrix}\right\}. 
\end{equation}
It remains to solve the optimization on the right-hand side of \eqref{eq:app_sym_diam_MQ_DF_1}. We consider two separate cases. First, when $\psi(P) \ge 2\bar R$, $\psi(P) +\rho \ge 2\bar R - \rho$ for all $\rho \in [0, \bar R]$, therefore 
\begin{equation}
	\max_{0\le \rho\le \bar R } \min
	\left\{ 
	\begin{matrix}
		2\bar R - \rho\\
		\psi(P)+ \rho
	\end{matrix}\right\} =  \max_{0\le \rho\le \bar R }2\bar R - \rho = 2\bar R.
\end{equation}
The maximum is attained at $\rho^*=0$.

Otherwise when $\psi(P) < 2\bar R$, the objective of the outer maximization can be explicitly solved as
\begin{equation}  \min
	\left\{ 
	\begin{matrix}
		2\bar R - \rho\\
		\psi(P)+ \rho
	\end{matrix}\right\} = \begin{cases}
		\psi(P)+ \rho,&0 \le \rho \le  \frac{2\bar R - \psi(P)}{2},\\
		2\bar R - \rho,&  \frac{2\bar R - \psi(P)}{2} \le \rho \le \bar R.\\
	\end{cases}
\end{equation}
Note that this objective function is first increasing and then decreasing in its argument $\rho$, therefore it is maximized exactly at its turning point: $\rho^* =  \frac{2\bar R - \psi(P)}{2}$. Plugging this into the right-hand side of \eqref{eq:app_sym_diam_MQ_DF_1}, we get 
\begin{equation}
	\max_{0\le \rho\le \bar R } \min
	\left\{ 
	\begin{matrix}
		2\bar R - \rho\\
		\psi(P)+ \rho
	\end{matrix}\right\} = \psi(P) + \rho^*=  \bar R + \frac{\psi(P)}{2}
\end{equation}
when $\psi(P) < 2\bar R$. Concluding, we have
\begin{equation}
	\label{eq:sym_diam_4}
	R_{\textnormal{sym},\diamond,\textnormal{MQ-DF}}(\bar R) = \begin{cases}
		\bar R + \frac{\psi(P)}{2}, &\psi(P) < 2\bar R,\\
		2\bar R, &\psi(P) \ge 2\bar R.
	\end{cases}
\end{equation}

\section{Proof of Proposition \ref{prop:sym_cf}}
\label{app:sym_cf}

As discussed in Section \ref{sec:bounds_sym}, the CF achievability \eqref{eq:R_MQ_DF} simplifies to \eqref{eq:sym_cf_1} for the anti-symmetric case of the Gaussian primitive diamond channel. We aim to analyze this achievable rate in more detail. Define
\begin{equation}
	\label{eq:tildeRN}
	\tilde R(N) = \min \left\{\begin{matrix}
		\frac12 \log \left(1+\frac{2P}{N}\right)\\
		\bar R + \frac12 \log \left(1 + \frac{P-1}{N+2}\right)\\
		2\bar R + \frac12 \log \left(\frac{N}{N+2}\right)
	\end{matrix}\right\},
\end{equation}
i.e., $\tilde R(N)$ is the objective function of the maximization on the right-hand side of \eqref{eq:sym_cf_1}. We claim that the maximum of $\tilde R(\cdot)$ is attained at $N^*$, the unique positive solution of the equation below:
\begin{equation}
	\frac12 \log \left(1+\frac{2P}{N^*}\right) = 2\bar R + \frac12 \log \left(\frac{N^*}{N^*+2}\right).   
	\label{eq:rho_N_star}
\end{equation}
We further claim that 
\begin{equation}
	\tilde R(N^*) = \frac12 \log \left(1+\frac{2P}{N^*}\right) = 2\bar R + \frac12 \log \left(\frac{N^*}{N^*+2}\right),
	\label{eq:tilde_R_N_star}
\end{equation}
or equivalently, 
\begin{equation}
	\bar R + \frac12 \log \left(1 + \frac{P-1}{N^*+2}\right)\ge 2\bar R + \frac12 \log \left(\frac{N^*}{N^*+2}\right).
	\label{eq:rho_N_star_ge}
\end{equation}

We first prove \eqref{eq:rho_N_star_ge}, which implies \eqref{eq:tilde_R_N_star}. Solving \eqref{eq:rho_N_star}, we get
\begin{equation}\label{eq:N_star}
	N^* = \frac{P+1+ \sqrt{(P-1)^2 + 4P\cdot 2^{4\bar R}}}{2^{4\bar R} - 1}.
\end{equation}
Rearranging \eqref{eq:rho_N_star_ge} and plugging in the value of $N^*$ from \eqref{eq:N_star}, it suffices to show that 
\begin{align}
	0 &\le \frac12 \log \left( 1 + \frac{P+1}{N^*}\right) - \bar R \notag \\&= \frac{1}{2} \log \frac{(P+1)\cdot 2^{4\bar R} + \sqrt{(P-1)^2 + 4P\cdot 2^{4\bar R}}}{2^{2\bar R}\left(P+1+\sqrt{(P-1)^2 + 4P\cdot 2^{4\bar R}}\right)} ,
\end{align}
or equivalently,
\begin{align}
	(P+1)&\cdot 2^{4\bar R} + \sqrt{(P-1)^2 + 4P\cdot 2^{4\bar R}} \notag \\&\ge 2^{2\bar R}\left(P+1+\sqrt{(P-1)^2 + 4P\cdot 2^{4\bar R}}\right).
\end{align}
Subtracting the right-hand side from the left-hand side, we can see that the difference is nonnegative:
\begin{align}
	&(P+1)\cdot 2^{4\bar R} + \sqrt{(P-1)^2 + 4P\cdot 2^{4\bar R}} \notag \\ &- 2^{2\bar R}\left(P+1+\sqrt{(P-1)^2 + 4P\cdot 2^{4\bar R}}\right) \notag \\
	& = \left((P+1)\cdot 2^{2\bar R} - \sqrt{(P-1)^2 + 4P\cdot 2^{4\bar R}} \right) (2^{2\bar R} - 1) \notag \\
	& = \frac{(P+1)^2\cdot 2^{4\bar R} - \left((P-1)^2 + 4P\cdot 2^{4\bar R}\right)}{(P+1) 2^{2\bar R} + \sqrt{(P-1)^2 + 4P\cdot 2^{4\bar R}}} (2^{2\bar R} - 1) \notag \\
	& = \frac{(P-1)^2\cdot (2^{4\bar R} - 1)\cdot(2^{2\bar R} - 1)}{(P+1) 2^{2\bar R} + \sqrt{(P-1)^2 + 4P\cdot 2^{4\bar R}}} ~\ge~ 0,
\end{align}
where in the last step we note that $2^{4\bar R} - 1 > 2^{2\bar R} - 1 > 0$ for all $\bar R > 0$. This concludes the proof of \eqref{eq:rho_N_star_ge}. 

We now prove that $N^*$ given in \eqref{eq:N_star} maximizes $\tilde R(\cdot)$. 
For any $N < N^*$, note that
\begin{align}
	\tilde R(N) &\le 2\bar R + \frac{1}{2}\log\left(\frac{N}{N+2}\right) \\ &\le 2\bar R + \frac{1}{2}\log\left(\frac{N^*}{N^*+2}\right) \label{eq:tilde_R_N_monotone_11}\\&= \tilde R(N^*), \label{eq:tilde_R_N_monotone_12}
\end{align}
where \eqref{eq:tilde_R_N_monotone_11} is due to the fact that $\frac12\log(\frac{N}{N+2})$ is an increasing function in $N$, and \eqref{eq:tilde_R_N_monotone_12} is due to \eqref{eq:tilde_R_N_star}. 
For any $N > N^*$, note that
\begin{align}
	\tilde R(N) &\le \frac{1}{2}\log\left(1+\frac{2P}{N}\right)\\ &\le \frac{1}{2}\log\left(1+\frac{2P}{N^*}\right) \label{eq:tilde_R_N_monotone_21}\\&= \tilde R(N^*), \label{eq:tilde_R_N_monotone_22}
\end{align}
where \eqref{eq:tilde_R_N_monotone_21} is due to the fact that $\frac12\log(1+\frac{2P}{N})$ is a decreasing function in $N$, and \eqref{eq:tilde_R_N_monotone_22} is due to \eqref{eq:tilde_R_N_star}. Summarizing, we conclude the proof that the $N^*$ in \eqref{eq:N_star} maximizes $\tilde R(\cdot)$.

We now prove part (a) of Proposition \ref{prop:sym_cf}. Recall that $R_{\text{sym},\diamond,\text{MQ-DF}}(\bar R)$ is defined in \eqref{eq:sym_diam_MQ_DF_2}. It suffices to prove 
\begin{equation}
	\begin{cases}
		\tilde R(N^*) < \bar R + \frac{\psi(P)}{2},~ &P < 2^{4\bar R} - 1,\\
		\tilde R(N^*) < 2\bar R, ~&P \ge 2^{4\bar R} - 1.
	\end{cases}
\end{equation}
The case of $P \ge 2^{4\bar R} - 1$ is straightforward by noting that 
\begin{equation}
	\tilde R(N^*) = 2\bar R + \frac12 \log \left(\frac{N^*}{N^*+2}\right) < 2\bar R.
\end{equation}
For the case of $P < 2^{4\bar R} - 1$, note that
\begin{align}
	\tilde R(N^*) &= \frac12 \log\left(1+\frac{2P}{N^*}\right) \notag \\
	&= \frac12 \log\left(1+\frac{2P(2^{4\bar R} - 1)}{P+1+  \sqrt{(P-1)^2 + 4P\cdot 2^{4\bar R}}}\right) \notag\\
	&< \frac12 \log\left(1+\frac{2P(2^{4\bar R} - 1)}{P+1+ 2\sqrt{P}\cdot 2^{2\bar R}}\right). \label{eq:rho_N_star_lb}
\end{align}
Therefore
\begin{align}
	&R_{\textnormal{sym},\diamond,\textnormal{MQ-DF}}(\bar R) - R_{\textnormal{sym},\diamond,\textnormal{CF}}(\bar R) \notag\\
	&>\frac12 \log(2^{2\bar R} \sqrt{P+1}) - \frac12 \log\left(1+\frac{2P(2^{4\bar R} - 1)}{P+1+ 2\sqrt{P}\cdot 2^{2\bar R}}\right) \notag  \\
	&= \frac12 \log\left(\frac{(2^{2\bar R} \sqrt{P+1})(P+1+ 2\sqrt{P}\cdot 2^{2\bar R})}{2P\cdot2^{4\bar R} + 2\sqrt{P}\cdot 2^{2\bar R} - (P-1)}\right) \notag \\
	&= \frac12 \log\left(\frac{(2^{2\bar R} \sqrt{P+1})(P+1+ 2\sqrt{P}\cdot 2^{2\bar R})}{D}\right),
	\label{eq:app_A_diff_denom_D}
\end{align}
where we define
\begin{equation}
	D = 2P\cdot2^{4\bar R} + 2\sqrt{P}\cdot 2^{2\bar R} - (P-1).
	\label{eq:D_app_prop_5}
\end{equation}
Note that for all $P > 0$ and $\bar R > 0$, 
\begin{equation}
	D > 2P + 2\sqrt{P} - (P-1) = (\sqrt{P} + 1)^2 > 0.
\end{equation}
We now consider two different cases: $P \ge 1$ and $P < 1$. When $P \ge 1$, we have that 
\begin{align}
	&\frac12 \log\left(\frac{(2^{2\bar R} \sqrt{P+1})(P+1+ 2\sqrt{P}\cdot 2^{2\bar R})}{D}\right) \notag \\&> \frac12 \log\left(\frac{(2^{2\bar R} \sqrt{P})(P+1+ 2\sqrt{P}\cdot 2^{2\bar R})}{D}\right)\notag  \\
	&= \frac12 \log\left(\frac{2P\cdot2^{4\bar R} + (P+1)\sqrt{P}\cdot 2^{2\bar R} }{D}\right) \notag \\
	&= \psi\left( \frac{(P-1)(\sqrt{P}\cdot 2^{2\bar R} + 1) }{D}\right)\label{eq:app_A_P_ge_1}
\end{align}
Since $P \ge 1$ and $\bar R > 0$, we have $P - 1 \ge 0$ and $\sqrt{P}\cdot 2^{2\bar R} + 1 > 0$. Also, since $\psi$ maps positive reals to positive reals, the right-hand side of \eqref{eq:app_A_P_ge_1} is greater than zero.

When $P < 1$, we have
\begin{align}
	&\frac12 \log\left(\frac{(2^{2\bar R} \sqrt{P+1})(P+1+ 2\sqrt{P}\cdot 2^{2\bar R})}{D}\right) \notag \\
	&> \frac12 \log\left(\frac{2^{2\bar R} (P+1+ 2\sqrt{P}\cdot 2^{2\bar R})}{D}\right) \notag \\
	&= \psi\left(\frac{2\sqrt{P}(1-\sqrt{P})2^{4\bar R} + (P-2\sqrt{P}+1) 2^{2\bar R} + (P-1)}{D}\right) \notag \\
	&= \psi\left(\frac{(1-\sqrt{P})\!\cdot\!(2^{2\bar R} - 1)\!\cdot\!(\sqrt{P}(2\cdot 2^{2\bar R} + 1) +1 )}{D}\right).\label{eq:app_A_P_le_1}
\end{align}
Given $P< 1$ and $\bar R > 0$, we have $1 - \sqrt{P} > 0$, $2^{2\bar R} - 1 > 0$ and $\sqrt{P}(2\cdot 2^{2\bar R} + 1)+1 > 0$. Hence the argument of $\psi$ on the right-hand side of \eqref{eq:app_A_P_le_1} is positive, which therefore imply that the entire right-hand side of \eqref{eq:app_A_P_le_1} is positive. Summarizing, the right-hand side of \eqref{eq:app_A_diff_denom_D} is positive for both $P \ge 1$ and $P < 1$, thus $R_{\textnormal{sym},\diamond,\textnormal{MQ-DF}}(\bar R) > R_{\textnormal{sym},\diamond,\textnormal{CF}}(\bar R)$ always holds, proving part (a).

For part (b), recall that from \eqref{eq:tilde_R_N_star} and \eqref{eq:N_star}, we have 
\begin{gather}
	\label{eq:R_sym_diam_CF}
	R_{\textnormal{sym},\diamond,\textnormal{CF}}(\bar R) = \psi\left(\frac{2P}{N^*}\right),\\N^* = \frac{P+1+ \sqrt{(P-1)^2 + 4P\cdot 2^{4\bar R}}}{2^{4\bar R} - 1}.
\end{gather}
Also, $R_{\textnormal{sym},\diamond,\textnormal{MQ}}(\bar R) = \bar R = \psi(2^{2\bar R} - 1)$. Since $\psi$ is a strictly increasing function, it suffices to show that 
\begin{equation}
	\frac{2P}{N^*} \begin{cases}
		> 2^{2\bar R} - 1,  & P > 1,\\
		= 2^{2\bar R} - 1,  & P = 1,\\
		< 2^{2\bar R} - 1,  & P < 1.\\
	\end{cases}
\end{equation}
Equivalently, we need to show
\begin{equation}
	N^* \begin{cases}
		<\frac{2P}{ 2^{2\bar R} - 1},  & P > 1,\\
		=\frac{2P}{ 2^{2\bar R} - 1},  & P = 1,\\
		>\frac{2P}{ 2^{2\bar R} - 1},  & P < 1.\\
	\end{cases}
\end{equation}
We now consider the cases separately.

(i) $P > 1$. Observe that 
\begin{align}
	(2^{4\bar R} - 1) N^* &= P+1+ \sqrt{(P-1)^2 + 4P\cdot 2^{4\bar R}} \notag \\&\le (P+1) + (P-1) + 2\sqrt{P}\cdot 2^{2\bar R} \notag \\
	&< 2P ( 2^{2\bar R} + 1) \notag \\
	&= (2^{4\bar R} - 1) \cdot \frac{2P}{2^{2\bar R} - 1},
\end{align}
where the second line is due to the fact that $\sqrt{a+b} \le \sqrt{a} + \sqrt{b}$ for all $a,b \ge 0$, and the third line is because $\sqrt{P} < P$ for all $P > 1$. Since $R > 0$, $2^{4\bar R} - 1 > 0$, we thus conclude that $N < \frac{2P}{2^{2\bar R} - 1}$.

(ii) $P = 1$. By plugging in the value of $P$ in the expression of $N$, we have
\begin{align}
	N^* = \frac{2+  2\cdot 2^{2\bar R}}{2^{4\bar R} - 1} = \frac{2(2^{2\bar R} + 1)}{(2^{2\bar R} - 1)(2^{2\bar R} + 1)} = \frac{2P}{2^{2\bar R} - 1}.
\end{align}

(iii) $P < 1$. Similarly we have 
\begin{align}
	(2^{4\bar R} - 1) N^* &= P+1+ \sqrt{(P-1)^2 + 4P\cdot 2^{4\bar R}} \notag \\
	&> (P+1) +  2\sqrt{P}\cdot 2^{2\bar R} \notag \\
	&> 2P  +  2P \cdot 2^{2\bar R} \notag \\
	&= (2^{4\bar R} - 1) \cdot \frac{2P}{2^{2\bar R} - 1},
\end{align}
where the second line is due to the fact that $(P-1)^2 > 0$ when $P < 1$, and the third line is because $\sqrt{P} > P$ for all $P > 1$. Since $R > 0$, $2^{4\bar R} - 1 > 0$, we thus conclude that $N > \frac{2P}{2^{2\bar R} - 1}$.

\section{Proof of Proposition \ref{prop:sym_mq_cf}}
\label{app:sym_mq_cf}

Observe that we can rewrite \eqref{eq:sym_mq_cf} as the following:
\begin{align}
	\label{eq:app_sym_mq_cf}
	&R_{\textnormal{sym},\diamond,\textnormal{MQ-CF}}(\bar R)  \notag \\
	&= \max_{\substack{N > 0\\\rho \in [0,\bar R]}} \!\!\min \left\{\begin{matrix}
		\frac12 \log \left(1+\frac{2P}{N}\right) + \rho\\
		\bar R + \frac12 \log \left(1 + \frac{P-1}{N+2}\right)\\
		2\bar R + \frac12 \log \left(\frac{N}{N+2} \right)- \rho
	\end{matrix}\right\}\\
	&= \max_{\substack{N > 0\\\rho \in [0,\bar R]}} \min \left\{\begin{matrix}
		\frac12 \log \left(1+\frac{2P}{N}\right) \\
		(\bar R - \rho) + \frac12 \log \left(1 + \frac{P-1}{N+2}\right)\\
		2(\bar R -\rho) + \frac12 \log \left(\frac{N}{N+2} \right)
	\end{matrix}\right\} + \rho\\
	&= \max_{\bar \rho \in [0,\bar R]} \max_{N > 0} \,\min \left\{\begin{matrix}
		\frac12 \log \left(1+\frac{2P}{N}\right) \\
		\bar \rho + \frac12 \log \left(1 + \frac{P-1}{N+2}\right)\\
		2 \bar\rho + \frac12 \log \left(\frac{N}{N+2} \right)
	\end{matrix}\right\} + R - \bar\rho \label{eq:mq_cf_rewrite_1}\\
	&= \max_{\bar \rho \in [0,\bar R]}  R_{\textnormal{sym}, \diamond, \textnormal{CF}}(\bar\rho) + R - \bar\rho,\label{eq:mq_cf_rewrite_2}
\end{align}
where we introduce the new variable $\bar \rho = \bar R - \rho$ in \eqref{eq:mq_cf_rewrite_1}, and \eqref{eq:mq_cf_rewrite_2} is due to \eqref{eq:sym_cf_1}. We can interpret the parameter $\bar\rho$ as the portion of each relay-link rate that is allocated to CF. For simplicity of notation, we define
\begin{equation}
	g(\bar\rho) =  R_{\textnormal{sym}, \diamond, \textnormal{CF}}(\bar\rho) + \bar R - \bar\rho,
\end{equation}
and thus $R_{\textnormal{sym},\diamond,\textnormal{MQ-CF}}(\bar R) = \max_{\bar\rho \in [0, \bar R]} g(\bar \rho)$.

When $P \le 1$, we first show that $g(\bar \rho)$ is upper bounded by $\bar R$ for all $\bar\rho \in [0, \bar R]$, and then show that this upper bound is attained for a specific choice of $\bar \rho$. By Proposition \ref{prop:sym_cf}, we have $R_{\textnormal{sym}, \diamond, \textnormal{CF}}(\bar \rho) \le \bar \rho $ for all $\bar \rho > 0$. For $\bar \rho = 0$, it is also easy to verify that $R_{\textnormal{sym}, \diamond, \textnormal{CF}}(0) = \psi(0) = 0$. Therefore,
\begin{align}
	g(\bar \rho)\le \bar \rho + (\bar R - r) = \bar R
\end{align}
for all $\bar \rho \in [0, \bar R]$, and hence 
\begin{equation}
	R_{\textnormal{sym},\diamond,\textnormal{MQ-CF}}(\bar R) = \max_{\bar\rho \in [0, \bar R]} g(\bar \rho) \le R.
\end{equation}
On the other hand, 
\begin{equation}
	R_{\textnormal{sym},\diamond,\textnormal{MQ-CF}}(\bar R) = \max_{\bar\rho \in [0, \bar R]} g(\bar \rho) \ge g(0) = \bar R.
\end{equation}
This concludes $R_{\textnormal{sym},\diamond,\textnormal{MQ-CF}}(\bar R) = \bar R$ for $P \le 1$.

When $P > 1$, we claim that $g(\bar \rho)$ is maximized when $\bar \rho = \bar R$, and thus the maximum is $g(\bar R) = R_{\textnormal{sym}, \diamond, \textnormal{CF}}(\bar R)$. We first note that 
\begin{equation}
	R_{\textnormal{sym}, \diamond, \textnormal{CF}}(\bar \rho) =  \psi\left( \frac{2P(2^{4\bar \rho} - 1)}{P+1 + \sqrt{(P-1)^2 + 4P \cdot 2^{4\bar \rho}}} \right)
\end{equation}
due to \eqref{eq:tilde_R_N_star} and \eqref{eq:N_star}. Now we differentiate $g$. Define $u: \mb R \to \mb R$ and $v:\mb R_+ \to \mb R$ such that $u(x) = 2^{4x}$ and $v(x) = 1+\frac{2P(x-1)}{P+1+\sqrt{(P-1)^2 + 4P x}}$, then
\begin{equation}
	g(\bar\rho) = \frac12 \log v(u(\bar\rho)) + \bar R - \bar \rho.
\end{equation}
By chain rule, 
\begin{align}
	g'(\bar \rho) &= \frac{v'(u(\bar \rho)) \cdot u'(\bar \rho)}{2v(u(\bar \rho))\ln2 }  -1.
\end{align}
Note that since $u'(\bar \rho) =  4\ln 2 \cdot 2^{4\bar \rho} =  4\ln 2 \cdot u(\bar \rho)$, we have
\begin{equation}
	g'(\bar \rho) = \frac{2u(\bar \rho)}{v(u(\bar \rho))} v'(u(\bar \rho)) - 1.
\end{equation}
To show that $g$ attains maximum at $\bar R$, it suffices to show that $g'(\bar \rho) > 0$ for all $\bar \rho \in  [0, \bar R]$. Note that since $\bar \rho \ge 0$, $u(\bar \rho) \ge 1$ and $v(u(\bar \rho)) \ge 1$, it suffices to show
\begin{equation}
	v(u(\bar \rho))\cdot g'(\bar \rho) =  2u(\bar \rho)\cdot v'(u(\bar \rho)) - v(u(\bar \rho)) > 0.
\end{equation}
Define $w = \sqrt{(P-1)^2 + 4P u}$. Omitting all $\bar \rho$'s, we have
\begin{align}
	2u\cdot v'(u)& = 2u\cdot \frac{2Pw(P+1+w) - 4P^2(u-1)  }{w(P+1+w)^2}, \\
	v(u) &= \frac{w+1+2Pu - P}{P + 1 + w} \notag \\&= \frac{(w+1+2Pu - P)w(P+1+w)}{w(P+1+w)^2}.
\end{align}
and hence 
\begin{equation}
	\label{eq:2uv'-v}
	2u \cdot v'(u) - v(u) = \frac{S_1 + S_2 }{w(P+1+w)^2}
\end{equation}
where $S_1 = (2Pu + P-w-1)w(P+1+w)$, $S_2 =  8P^2u(1-u)$. Since $u > 0$ and $P > 1$, we have that $w > P - 1 > 0$ and hence $P+1+w > w > 0$, making the denominator of the right hand side of \eqref{eq:2uv'-v} positive. It only remains to check that $S_1 + S_2 > 0$. Now note that by the way we defined $w$, we have 
\begin{equation}
	u = \frac{w^2 - (P-1)^2}{4P},~1-u = \frac{(P+1)^2 - w^2}{4P},
\end{equation}
therefore
\begin{align}
	S_1 &= (2P\frac{w^2 - (P-1)^2}{4P}  + P-w-1)w(P+1+w) \notag \\
	&= \frac{w(w+P-3)(w-P+1)(P+1+w)}{2} 
\end{align}
and 
\begin{align}
	S_2 &= 8P^2 \cdot\frac{w^2 - (P-1)^2}{4P} \frac{(P+1)^2 - w^2}{4P} \notag \\
	&= \frac{(w+P-1)(w-P+1)(P+1+w)(P+1-w)}{2}.
\end{align}
Note that $S_1$ and $S_2$ have common factors $S_0=(w-P+1)(P+1+w)/2$, therefore
\begin{align}
	S_1 + S_2 &= S_0\left( w(w+P-3) + (w+P-1)(w-P+1)\right) \notag \\
	&= S_0(P-1)(1+w)
\end{align}
Recall that $w > P-1$ and $P + 1 + w > 0$, we thus have $S_0 > 0$. Also since $P > 1$ and $w > 0$, we can finally conclude that $S_1 + S_2 > 0$, which proves that the function $g$ is strictly increasing on $[0,\bar R]$. As a consequence, we have proved that $R_{\textnormal{sym}, \diamond, \textnormal{MQ-CF}}(\bar R) = R_{\textnormal{sym}, \diamond, \textnormal{CF}}(\bar R)$ when $P > 1$.

\section{Proof of Proposition \ref{prop:prim_nonperf_bd}}
\label{app:prim_nonperf_bd}

For the Gaussian primitive relay channel with $\lambda$-correlated noises in Section \ref{sec:prim_relay_nonperfect}, the cut-set bound states that 
\begin{equation}
	C_\triangleleft(R_0) \le \min\{I(X;Y,Y_0),I(X;Y) + R_0\}.
\end{equation}
Recall that $Y = X + Z$ and $Y_0 = \alpha X + Z_0$, and that the noise variables $Z$ and $Z_0$ are standard Gaussian variables with correlation efficient $\lambda$. Now, the mutual information expressions $I(X;Y,Y_0)$ and $I(X;Y)$ are maximized when $X \sim \mc N(0,P)$ (see, for example, \cite[Chapter 12.2]{cover1999elements}). In this case, the random vector $(X,Y,Y_0)$ is jointly Gaussian with zero mean and covariance matrix 
\begin{equation}
	\begin{bmatrix}
		P & P & \alpha P\\
		P & P+ 1 & \alpha P + \lambda\\
		\alpha P & \alpha P + \lambda & \alpha^2 P + 1
	\end{bmatrix}.
\end{equation}
By explicitly computing $I(X;Y,Y_0)$ and $I(X;Y)$, we obtain the following simplified cut-set bound:
\begin{align}
	C_\triangleleft(R_0) &\le \min\left\{ \psi\left( \frac{(\alpha - \lambda)^2 P}{1-\lambda^2} + P\right), \psi(P) + R_0 \right\} \notag \\
	&= \textnormal{Cutset}_\triangleleft(R_0)
\end{align}

To lower bound the capacity, we first consider the CF scheme. Recall that the best achievable CF rate is given by the following optimization problem:
\begin{equation}
	\begin{aligned}
		&\sup_{X - Y_0 - \hat Y_0} && I(X;Y,\hat Y_0),\\
		&\quad\ \st&& I(Y_0;\hat{Y}_0|Y) \le R_0.
	\end{aligned}
\end{equation}
In particular, we can choose $X\sim \mc N(0,P)$ and $\hat Y_0 = Y + W$ where $W \sim \mc N(0,N_W)$ is a Gaussian random variable independent of all other variables such that $I(Y_0;\hat Y_0|Y) = R_0$. We can explicitly compute the value of $N_W$ as
\begin{equation}
	N_W = \frac{1-\lambda^2 + (\alpha - \lambda)^2 \frac{P}{P+1}}{2^{2R_0} -1}.
\end{equation}
Plugging in this value, we obtain the following CF lower bound of the capacity:
\begin{align}
	C_\triangleleft(R_0) &\le I(X;Y,\hat Y_0) \notag \\
	&= \psi(P) + \psi\left( \frac{(2^{2R_0} - 1)(\alpha - \lambda)^2 P}{ 2^{2R_0} (1-\lambda^2)(P+1) + (\alpha - \lambda)^2 P}\right) \notag \\
	& = R_{\triangleleft,\textnormal{CF}}(R_0) 
\end{align}

We can also obtain a lower bound of the capacity via DF. In particular, superposition coding is applied, where the transmitter generates $2^{n\tilde R}$ sequences $\tilde x^n(\tilde m)$ according to $\mc N(0, \gamma P)$ and $2^{n \breve R}$ sequences $ {\breve x}^n(\breve m)$ according to $\mc N(0, \bar \gamma P)$ randomly and independently. To send the message pair $(\tilde m, \breve m)$, the transmitter sends $x^n(\tilde m, \breve m) =\tilde x^n(\tilde m) + \breve x^n(\breve m)$. The relay recovers $\tilde m$ from $y_0^n = \alpha  x^n(\tilde m, \breve m) +  z_0^n =  \alpha \tilde x^n(\tilde m) +( \alpha \breve x^n(\breve m) + z_0^n)$ by treating $\alpha \breve x^n(\breve m)$ as part of the noise. It then forwards $\tilde m$ to the receiver through the noiseless link. Provided that $\tilde R < \min\{\psi(\frac{\alpha^2 \gamma P}{\alpha^2 \bar \gamma P + 1}), R_0\}$, the relay is able to decode $\tilde m$ with vanishing error probability and relay it to the receiver. Upon receiving the index $\tilde m$, the receiver is able to subtract $\tilde x^n(\tilde m)$ from its observation $y^n = \tilde x^n(\tilde m) + \breve x^n(\breve m) + z^n$, and then recover $\breve m$ from $\breve x^n(\breve m) + z^n$. The decoding error vanishes when $\breve R <\psi(\bar\gamma P)$. We refer to the maximum overall rate for this scheme as the DF lower bound, which is given in the following:   
\begin{align}
	&C_\triangleleft(R_0) \le I(X;Y,\hat Y_0) \notag \\
	&= \max_{\gamma \in [0,1]} \psi(\bar\gamma P) + \min\left\{\psi\left(\frac{\alpha^2 \gamma P}{\alpha^2 \bar \gamma P + 1}\right), R_0\right\} \notag \\
	&= \begin{cases}
		\psi(P),  & \alpha^2 \le 1,  \notag \\
		\psi(\alpha^2 P),  & 1 < \alpha^2 \le \frac{2^{2R_0}-1}{P},  \notag \\
		\psi \left(P+\frac{(\alpha^2-1)(2^{2R_0} - 1)}{\alpha^2}\right),  & \alpha^2 > \max \left\{1, \frac{2^{2R_0}-1}{P}\right\},
	\end{cases} \notag \\
	&= R_{\triangleleft,\textnormal{DF}}(R_0). 
\end{align}

\section{Calculation Error in \cite{sanderovich2008communication}}
\label{app:bpsk_Gaussian_CF}

In \cite[Section VI-B]{sanderovich2008communication}, the following simple BPSK scheme is considered for the anti-symmetric Gaussian primitive diamond channel with $\lambda = 0$ and $\bar R = 1$. The transmitted signal is
\begin{equation}
	X = \begin{cases}
		-\sqrt{P},  & M = 0,\\
		\sqrt{P},  & M = 1.
	\end{cases}
\end{equation}
The relays forward $V_i = \mb I\{Y_i \ge 0\}$. This induces an effective channel from $X$ to $(V_1, V_2)$, and by coding over this channel, the following rate can be achieved:
\begin{equation}
	R_{\textnormal{sym},\diamond,\textnormal{bpsk}}(1) = I(X;V_1,V_2).
\end{equation}
We now explicitly compute this rate. Note that
\begin{equation}
	I(X;V_1, V_2) = H(V_1, V_2) - H(V_1, V_2|X). \label{eq:i_x_v1v2}
\end{equation}
Conditioning on $X = -\sqrt{P}$, 
\begin{align}
	\Pr\{V_i = 1| X = -\sqrt{P}\} = Q(\sqrt{P}), ~i \in \{1,2\}
\end{align}
where 
\begin{equation}
	Q(x) = \frac{1}{\sqrt{2\pi}} \int_{x}^\infty e^{-\frac12 z^2} dz.
\end{equation}
By symmetry, 
\begin{equation}
	\Pr\{V_i = 0| X = \sqrt{P}\} = Q(\sqrt{P}), ~i \in \{1,2\}.
\end{equation}
Since $V_1$ and $V_2$ are independent conditioning on $X$, we have 
\begin{equation}
	H(V_1, V_2|X) = H(V_1|X) + H(V_2|X).
\end{equation}
Define $t = Q(\sqrt{P})$. We can further expand
\begin{align}
	H(V_1, V_2|X) &= H(V_1|X) + H(V_2|X) \notag \\
	&= -2( t \log t+   (1-t) \log (1-t )). \label{eq:h_v1v2_x}
\end{align} 
Meanwhile, observe that 
\begin{align}
	\Pr\{(V_1,V_2)=(0,0)\} &= \Pr\{(V_1,V_2)=(1,1)\} \notag \\ &= \frac{\left(Q(\sqrt{P})\right)^2 + \left(1-Q(\sqrt{P})\right)^2}{2}, \\
	\Pr\{(V_1,V_2)=(0,1)\} &= \Pr\{(V_1,V_2)=(1,0)\} \notag \\ &= Q(\sqrt{P})  (1-Q(\sqrt{P})).
\end{align}
Therefore, we have
\begin{align}
	H(V_1, V_2) &= - ( t^2 + (1-t)^2 ) \log \frac{t^2 + (1-t)^2}{2} \notag \\ 
	&\qquad ~~- 2 t(1-t) \log (t(1-t)) \notag \\
	&= -2\bigg( (t^2 - t + \half ) \log (t^2 - t + \half) \notag \\
	&\qquad~~+ (t - t^2) \log t + (t - t^2) \log(1-t) \bigg) \label{eq:h_v1v2_2}.
\end{align}
Plugging \eqref{eq:h_v1v2_x} and \eqref{eq:h_v1v2_2} back into \eqref{eq:i_x_v1v2}, we have
\begin{equation}
	R_{\textnormal{sym},\diamond,\textnormal{bpsk}}(1) = I(X;V_1,V_2) = G(Q(\sqrt{P})),
\end{equation}
where
\begin{align}
	G(x)&= 2\bigg(x^2 \log x + (1-x)^2 \log(1-x) \notag \\ &\qquad -(x^2 - x + \half)\log(x^2 - x + \half)\bigg).
\end{align}
However in \cite{sanderovich2008communication}, it is mistakenly stated that $ R_{\textnormal{sym},\diamond,\textnormal{bpsk}}(1) = G(Q(\sqrt{2P}))$. 

On the other hand, the Gaussian CF rate can be obtained by setting $\lambda = 0$ in \eqref{eq:nonperf_diamond_CF}, or alternatively given by the following formula as stated in \cite[Section VI-A]{sanderovich2008communication}:
\begin{equation}
	R_{\textnormal{sym},\diamond,\textnormal{CF}}(1) = \psi\!\left(2P\!\left(\!1 - \frac{\sqrt{P^2 + 2^{4\bar R} (1+2P)} - P}{2^{4\bar R}}\!\right)\! \right).
\end{equation}

\begin{figure}[t!]
	\centering
	\includegraphics[width=0.9\linewidth]{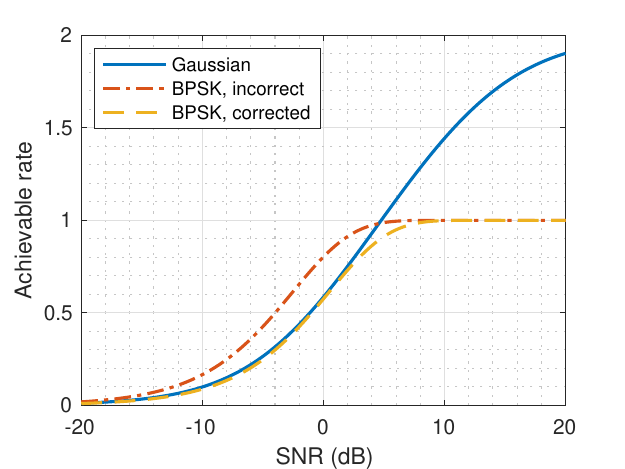}
	\caption{Comparison of the CF achievable rates with Gaussian versus BPSK signalling and quantization for the anti-symmetric Gaussian primitive diamond channel with $\lambda = 0$ and $\bar R = 1$.}
	\label{fig:app_bpsk}
\end{figure}

In Fig.~\ref{fig:app_bpsk}, we compare the achievable rates of CF with Gaussian versus BPSK signalling and quantization. We also plot the incorrect achievable rate in \cite{sanderovich2008communication} as a reference. The curve corresponding to the Gaussian CF rate is always above the one corresponding to the BPSK CF rate, thus invalidating the counterexample in Section VI-B of \cite{sanderovich2008communication}.

\section{Proof of Proposition \ref{prop:diam_nonperf_bd}}
\label{app:diam_nonperf_bd}

We first notice that the DF scheme does not take into consideration the correlation between the noises, hence the lower bound \eqref{eq:diamond_DF} does not change for this scenario. Meanwhile, the general simultaneous CF lower bound \eqref{eq:diamond_CF_1} also holds. Again, we restrict to Gaussian codebook and Gaussian quantization and set $X \sim \mc N(0,P)$, $\hat Y_1 = Y_1 + U_1$ and $\hat Y_2 = Y_2 + U_2$, where $U_1 \sim \mc N(0, N_1)$, $U_2 \sim \mc N(0, N_2)$ are such that $U_1 \indep U_2$ and $(U_1,U_2) \indep (X, Y_1, Y_2)$. Observe that the covariance of the random vector $(X,Y_1, Y_2, \hat Y_1, \hat Y_2)$ is given by
\begin{equation}
	\begin{bmatrix}
		P & \alpha P & \beta P & \alpha P & \beta P \\
		\alpha P & \alpha^2 P + 1 & \alpha \beta P + \lambda & \alpha^2 P + 1 & \alpha \beta P + \lambda\\
		\beta P & \alpha \beta P + \lambda & \beta^2 P + 1 & \alpha \beta P + \lambda & \beta^2 P + 1 \\
		\alpha P & \alpha^2 P + 1 & \alpha \beta P + \lambda & \alpha^2 P\!+\! 1\! +\! N_1 & \alpha \beta P + \lambda\\
		\beta P & \alpha \beta P + \lambda & \beta^2 P + 1 & \alpha \beta P + \lambda & \beta^2 P \!+\! 1 \!+\! N_2 \\
	\end{bmatrix}
\end{equation}
Upon noticing
\begin{align}
	I(X;\hat Y_1, \hat Y_2) = \frac12 \log \bigg( &\frac{(\alpha^2 P + 1 + N_1)(\beta^2 P + 1 + N_2) }{(1+N_1)(1+N_2) - \lambda^2} \notag \\&- \frac{(\alpha \beta P + \lambda)^2}{(1+N_1)(1+N_2) - \lambda^2}  \bigg),
\end{align}
and
\begin{align}
	I(Y_1;\hat Y_1) &= \frac12 \log \left(\frac{\alpha^2 P + 1+N_1}{N_1} \right),\\
	I(Y_2;\hat Y_2) &= \frac12 \log \left( \frac{\beta^2 P + 1 + N_2}{N_2} \right),
\end{align}
and
\begin{align}
	I_0 &=   I(X;\hat Y_1,\hat Y_2) + I(\hat Y_1;\hat Y_2) \notag \\ 
	&= \frac12 \log \left(\frac{(\alpha^2 P + 1 + N_1)(\beta^2 P + 1 + N_2)}{(1+N_1)(1+N_2) - \lambda^2} \right),
\end{align}
we obtain the following bound by simplifying \eqref{eq:diamond_CF_1} and optimizing over $N_1$ and $N_2$:
\begin{align}
	\label{eq:nonperf_diamond_CF}
	&R_{\diamond,\textnormal{CF}}(R_1,R_2)  \notag \\
	&= \max_{N_1,N_2 > 0} \min \left\{\begin{matrix}
		\frac12\log \frac{(\alpha^2 P + 1 + N_1)(\beta^2 P + 1 + N_2) - (\alpha \beta P + \lambda)^2}{(1+N_1)(1+N_2)-\lambda^2}\\ R_1 + \frac{1}{2}\log \frac{N_1(\beta^2 P + 1 + N_2)}{(1+N_1)(1+N_2)-\lambda^2} \\
		R_2 + \frac{1}{2}\log \frac{N_2(\alpha^2 P + 1 + N_1)}{(1+N_1)(1+N_2)-\lambda^2}\\ 
		R_1 + R_2 + \frac{1}{2}\log \frac{N_1 N_2}{(1+N_1)(1+N_2)-\lambda^2}
	\end{matrix}\right\}.
\end{align}
We comment that an alternative CF lower bound for the Gaussian primitive diamond channel has been derived in \cite{katz2024gaussian}. Numerical simulation suggests that the two lower bounds coincide. 

For the upper bound, the general cut-set bound for the diamond channel in \eqref{eq:diamond_cutset_1} still holds. However, since the noises at the relays are not perfectly correlated, the input $X$ is not a deterministic function of the relay observations, therefore the mutual information term $I(X;Y_1, Y_2)$ is finite:
\begin{equation}
	I(X;Y_1,Y_2) = \frac{1}{2} \log \left(1 + \frac{(\alpha^2 + \beta^2 - 2\lambda \alpha \beta)P}{1-\lambda^2} \right)
\end{equation}
The cut-set upper bound for this scenario is therefore
\begin{equation} \label{eq:nonperf_diamond_cutset}
	\textnormal{Cutset}(R_1,R_2) \le \min\left\{\begin{matrix}
		R_1 + R_2 \\ R_1 + \psi(\beta^2 P)\\ R_2 + \psi(\alpha^2 P) \\
		\psi \left(\frac{(\alpha^2 + \beta^2 - 2\lambda \alpha \beta)P}{1-\lambda^2} \right)
	\end{matrix}\right\}.
\end{equation}

\balance

\bibliographystyle{IEEEtran}
\bibliography{IEEEabrv,references}

\end{document}